\documentclass[11pt,a4paper]{article}

\usepackage{hyperref}

\usepackage{amsmath}
\usepackage[T1]{fontenc}
\usepackage[utf8]{inputenc}
\usepackage{amsfonts}
\usepackage{amsthm}
\usepackage{amstext}
\usepackage{amssymb}
\usepackage{color}
\usepackage{bbold}
\usepackage{url}
\usepackage[noend]{algpseudocode}
\usepackage[ruled,vlined,linesnumbered]{algorithm2e}
\let\oldnl\nl
\newcommand{\nonl}{\renewcommand{\nl}{\let\nl\oldnl}}

\newcommand{\Ima}{\mathrm{Im}}
\newcommand{\rk}{\mathrm{rank}}
\newcommand{\diag}{\mathrm{diag}}

\newcommand{\cc}{\mathbb{C}}
\newcommand{\rr}{\mathbb{R}}

\newtheorem{theorem}{Theorem}

\newtheorem{lemma}[theorem]{Lemma}

\newtheorem{proposition}[theorem]{Proposition}
\newtheorem{corollary}[theorem]{Corollary}
\newtheorem{remark}[theorem]{Remark}

\newtheorem{definition}[theorem]{Definition}

\begin{document}

\title{An Efficient Uniqueness Theorem\\ for Overcomplete Tensor Decomposition} 
\author{Pascal Koiran\thanks{ENS de Lyon, CNRS, Université Claude Bernard Lyon 1, Inria, LIP, UMR 5668, 69342, Lyon cedex 07, France.}
}

\date{}

\maketitle

\begin{abstract} \small\baselineskip=9pt 
We give a new, constructive uniqueness theorem for tensor decomposition. It applies to order 3 tensors of format $n \times n \times p$
and  can prove uniqueness of decomposition for generic tensors up to rank $r=4n/3$ as soon as $p \geq 4$. One major advantage over Kruskal's uniqueness theorem is that our theorem has an algorithmic proof, and the resulting algorithm 
 is efficient.  Like the uniqueness theorem, it applies  in the range $n \leq r \leq 4n/3$.
 As a result, we obtain the first efficient algorithm for overcomplete decomposition of generic tensors of order~3.
 For instance, prior
to this work it was not known how to efficiently decompose generic tensors of format $n \times n \times n$ and rank $r=1.01n$ (or rank $r \leq (1+\epsilon) n$, for some constant $\epsilon >0$).
Efficient overcomplete decomposition of generic tensors of format $n \times n \times 3$ remains an open problem.

Our results are based on the method of commuting extensions pioneered by Strassen for the proof of his $3n/2$ lower bound on tensor rank and border rank. In particular, we rely on an algorithm for the computation of commuting extensions recently proposed in the companion paper~\cite{koi24}, and on the classical diagonalization-based  ``Jennrich algorithm''
for undercomplete tensor decomposition.

This is an updated version of a paper presented at SODA 2025. As a new result, we answer a question from that paper by giving a NP-hardness result for the computation of commuting extensions. The proof relies on a recent construction by Shitov~\cite{shitov25}. After the paper appearing in the SODA proceedings was written,  another
algorithm for the overcomplete decomposition of generic tensors of order~3 was proposed by Kothari,  Moitra and Wein~\cite{kothari24}.
\end{abstract}

\section{Introduction}

A tensor can be viewed as a multidimensional array with entries in a field~$K$. 
We will only consider tensors of order 3, i.e., elements of $K^{m \times n \times p}$.
In this paper we give a new uniqueness theorem for tensor decomposition and an efficient decomposition algorithm
derived from this theorem.
The rank of a tensor $T$ is the smallest integer $r$ such that one can write 
\begin{equation} \label{eq:decomp} 
T=\sum_{i=1}^r u_i \otimes v_i \otimes w_i
\end{equation}
for some vectors $u_i \in K^m, v_i \in K^n, w_i \in K^p$.
Such a decomposition is said to be essentially unique if it is unique up to permutation of the rank one terms $u_i \otimes v_i \otimes w_i$.
One can prove uniqueness of decomposition in fairly general situations. By contrast, a matrix of rank $r$ does not have a unique decomposition as a sum of $r$ matrices of rank 1 as soon as $r \geq 2$.
 Uniqueness of decomposition makes tensor methods particularly attractive in application areas such as statistics, machine learning or signal processing (see for instance the book~\cite{moitra18} for applications in machine learning, or the somewhat older survey~\cite{kolda09}). One downside is that computing a smallest decomposition, or indeed just the tensor rank, is an NP-hard problem~\cite{hastad90}. The best known uniqueness theorem is due to Kruskal (see~\cite{kruskal77} and Section~\ref{sec:kruskal}). It gives explicit conditions on the vectors $u_i,v_i,w_i$ 
 in~(\ref{eq:decomp}) which imply (essential) uniqueness of decomposition. For tensors of format $n \times n \times n$, these conditions can be met up to $r=\lfloor 3n/2 \rfloor-1$.
 Quantitatively, our uniqueness theorem and Kruskal's are incomparable. We only obtain uniqueness of decomposition up to
 $r= \lfloor 4n/3 \rfloor$, which is worse than Kruskal for all  $n \geq 10$. On the other hand, our theorem is stronger for tensors of
 format $n \times n \times p$ where $p$ is small. For instance, we can prove uniqueness of decomposition up to $r=4n/3$
 already for $p=4$. For tensors of format $n \times n \times 4$, Kruskal's theorem can prove uniqueness only up to
 $r=n+1$ (see Section~\ref{sec:kruskal} for details).
 
 One major advantage of our uniqueness theorem is that it has an algorithmic proof, and we will see that the resulting algorithm 
 is efficient: it runs in time polynomial in the input size ($n^2p$) for tensors of format $n \times n \times p$ (what we mean precisely by this is explained in Section~\ref{sec:complexity}). By contrast, no efficient algorithm is known for computing the unique decomposition promised by
 Kruskal's theorem.
 
 \subsection{Uniqueness theorem}

Before giving the statement of this theorem, we recall that a tensor of order~3 and format $m \times n \times p$ can be cut into $p$    ``slices.'' 
Each slice is a $m \times n$ matrix. These are the $z$-slices. One can of course cut $T$ in the two other directions into its $x$-slices and $y$-slices. In this paper we will only work with the $z$-slices. Also, we will only consider tensors of format 
$n \times n \times p$ (so each slice is a square matrix of size $n$).
\begin{theorem}[simple form of the uniqueness theorem] \label{th:unique_simple}
Consider a tensor $T=\sum_{i=1}^r u_i \otimes v_i \otimes w_i$  of format $n \times n \times p$ with $p\geq 4$. 
We assume that $T$ satisfies the following properties:
\begin{enumerate}
\item The $w_i$ are pairwise linearly independent.
\item Let $(T_1,\ldots,T_p)$ be the slices of $T$. The first slice $T_1$ is invertible.
\item 
Let  $A_i=T_1^{-1}T_{i}$ for all $i=2,\ldots,p$. The two following properties hold for any 3 distinct indices $k,l,m \geq 2$:
\begin{itemize}
\item[(i)] The three linear spaces $\Ima [A_k,A_l]$, $\Ima [A_k,A_m]$, $\Ima[A_l,A_m]$ are of dimension $2(r-n)$.
\item[(ii)] The three linear spaces $\Ima [A_k,A_l]+\Ima [A_k,A_m]$, $\Ima [A_l,A_k]+\Ima [A_l,A_m]$ 
$\Ima [A_m,A_k]+\Ima [A_m,A_l]$ are of dimension $3(r-n)$.
\end{itemize}
\end{enumerate}
Then $\rk(T)=r$, and  the decomposition of $T$ as a sum of~$r$ rank one tensors is essentially unique. 
\end{theorem}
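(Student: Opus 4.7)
The plan is to reduce essential uniqueness of the decomposition to a simultaneous diagonalization problem on a commuting extension of $(A_1,\ldots,A_{p-1})$, and then conclude with Jennrich's algorithm. First I would write the slices as $T_k = U D_k V^T$ with $U, V \in K^{n\times r}$ the matrices of columns $u_i, v_i$ and $D_k = \diag((w_i)_k)$. The invertibility of $T_1$ forces both $U$ and $V$ to have row rank $n$; assuming, by a preliminary generic linear change of coordinates in the third direction, that $D_1$ is invertible as well, a direct calculation gives $A_k = L\,\Lambda_k V^T$, where $L := T_1^{-1} U D_1 \in K^{n\times r}$ satisfies $L V^T = I_n$ and $\Lambda_k := D_1^{-1} D_{k+1} \in K^{r\times r}$ is diagonal. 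Since $\ker L \cap \Ima V^T = \{0\}$, I can choose a basis of $K^r$ whose first $n$ vectors are the columns of $V^T$ and whose last $r-n$ vectors lie in $\ker L$; in this basis, the matrix $\tilde A_k$ representing $\Lambda_k$ has its top-left $n\times n$ block equal to $A_k$, and the $\tilde A_k$ still pairwise commute because they are simultaneously diagonal in the original basis. Thus any rank-$r$ decomposition of $T$ produces a commuting extension of $(A_1,\ldots,A_{p-1})$ of size~$r$.

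Next I would explain why conditions~(i)--(ii) are exactly what is needed to force such an extension to be unique. Writing $\tilde A_k = \begin{pmatrix} A_k & B_k \\ C_k & E_k \end{pmatrix}$ and reading the commutation relations on the top-left block yields $[A_k,A_l] = B_l C_k - B_k C_l$, so $\Ima[A_k,A_l] \subseteq \Ima B_k + \Ima B_l$ has dimension at most $2(r-n)$; a similar inclusion bounds $\Ima[A_k,A_l] + \Ima[A_k,A_m]$ inside $\Ima B_k + \Ima B_l + \Ima B_m$, of dimension at most $3(r-n)$. Conditions~(i)--(ii) assert that these upper bounds are \emph{attained}, and this is the precise regularity hypothesis under which the commuting-extension algorithm of the companion paper~\cite{koi24} applies. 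Invoking that result, the commuting extension $(\tilde A_1,\ldots,\tilde A_{p-1})$ of size~$r$ is unique (up to the natural symmetries acting on the extra $r-n$ coordinates), and is determined by the $A_k$ alone.

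To finish, I would apply Jennrich's algorithm to the (now unique) family $\tilde A_k$: since the $w_i$ are pairwise linearly independent, a generic linear combination $\sum_k \lambda_k \tilde A_k$ has $r$ distinct eigenvalues, whose common eigenvectors, projected onto the first $n$ coordinates, recover the $v_i$ up to scaling; the eigenvalues give the $w_i$ up to the same scaling, and the $u_i$ then follow from a full-rank linear system obtained from the slice identities. This settles essential uniqueness. For the equality $\rk(T)=r$, I would observe that any decomposition of $T$ with $r' < r$ terms would, by the same construction, produce a commuting extension of size~$r'$, forcing $\dim\Ima[A_k,A_l] \leq 2(r'-n) < 2(r-n)$ and contradicting~(i). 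The genuine difficulty lies entirely in the uniqueness-of-commuting-extension step, which is where \cite{koi24} does the real work; everything around it is linear-algebra bookkeeping together with a standard application of simultaneous diagonalization.
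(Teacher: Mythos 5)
Your high-level strategy is the same as the paper's: from a rank-$r$ decomposition of $T$, build a size-$r$ commuting extension of $(A_1,\ldots,A_{p-1})$ (your change-of-basis construction is exactly Lemma~\ref{lem:uniqueidentity} and Proposition~\ref{prop:constr}), invoke the uniqueness theorem for commuting extensions from \cite{koi24}, and finish with Jennrich. However, two steps are genuinely incomplete as written. First, your ``preliminary generic linear change of coordinates in the third direction'' replaces $T_1$ by $T_\lambda=\sum_k\lambda_kT_k$ and hence replaces $A_k=T_1^{-1}T_{k+1}$ by $A_k'=T_\lambda^{-1}T_{k+1}$; hypotheses (i)--(ii) are stated for the $A_k$, and it is not automatic that they transfer to the $A_k'$. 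You need both directions: the upper bounds $\dim\Ima[A_k',A_l']\le 2(r-n)$ and $\dim(\Ima[A_k',A_l']+\Ima[A_k',A_m'])\le 3(r-n)$ require Strassen's bound and its extension (Theorem~\ref{th:strassenbis}) applied to the perturbed tensor, while the matching lower bounds only hold generically by semicontinuity of matrix rank. This is exactly what Lemma~\ref{lem:unique} does, and it is accompanied by Lemma~\ref{lem:combi}, which checks that essential uniqueness of the perturbed tensor transfers back to $T$. Without both halves, the reduction to the $D_1$-invertible case is not justified, and neither is your argument for $\rk(T)=r$ (which also silently passes through a size-$r'$ commuting extension of the perturbed matrices rather than the original $A_k$; the paper instead derives $\rk(T)\ge r$ directly from Strassen's inequality plus hypothesis~(i), which avoids the issue).

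Second, ``apply Jennrich \ldots\ this settles essential uniqueness'' is too fast. You must compare an \emph{arbitrary} second rank-$r$ decomposition of $T$ (whose $w$-vectors are not assumed pairwise independent) with the given one. The paper does this by producing, from each decomposition, a commuting extension; by \cite{koi24} the two extensions differ by some $\rho_M$ with $M\in GL_{r-n}$; applying $\rho_M$ slice-wise yields a second decomposition of the \emph{same} extended tensor $T'$, and $T'$ satisfies Jennrich's hypotheses because the $w_i$ of the first decomposition are pairwise independent. Jennrich then forces the two decompositions of $T'$ to coincide up to permutation and scaling, and one finally has to check (as the paper does around equation~(\ref{eq:dm})) that multiplication by $D_M$ and $D_M^{-T}$ does not disturb the first $n$ coordinates of the $u'_i,v'_i$, so that the permutation/scaling descends to the original decompositions of $T$. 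Your sketch never makes this comparison and never establishes that the $GL_{r-n}$ ambiguity is invisible after restriction, so essential uniqueness of the decomposition of $T$ does not yet follow from what you wrote.
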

In this theorem we denote by $[A,B]=AB-BA$ the commutator of two matrices, and as usual we denote by $\Ima A$ the span
of the columns of~$A$. Hypothesis 2 can be satisfied only when $r \geq n$. This follows from~(\ref{eq:slices}), the expression for slices as a matrix product given in Section~\ref{sec:background}. The term ``overcomplete'' in the paper's title refers
to the fact that we focus on the case $r \geq n$. The so-called undercomplete case ($r \leq n$) is covered adequately by Kruskal's theorem or the earlier, more algorithmic work of Harshman and Jennrich (see \cite{harshman70} and Theorem~\ref{th:jennrich}). The latter work yields an efficient diagonalization-based algorithm for the decomposition of generic tensors in the undercomplete case, but this was not realized at the time (see Section~\ref{sec:unique} for details, and for a short review of the convoluted history of these results).

In Theorem~\ref{th:unique} we give a somewhat more general version of Theorem~\ref{th:unique_simple}. Roughly speaking, the role of $T_1$ can be played by any invertible matrix in the span of the slices; moreover, conditions (i) and (ii) do not have hold for {\em all} triples of indices.
It is well known that condition 1 is necessary for any tensor decomposition to be unique. This is due to the fact that the decomposition of a matrix of rank 2 as the sum of two rank 1 matrices is not unique (see for instance~\cite{landsberg09}).

{\bf Genericity.} It is clear that condition (ii) cannot be satisfied for $r>4n/3$, but it is perhaps not so clear that there are any interesting examples of tensors satisfying the conditions of Theorem~\ref{th:unique_simple}. 
In Section~\ref{sec:generic} we will give a genericity theorem showing that for any $n \leq r\leq 4n/3$,  these conditions are satisfied for ``almost all'' choices of the vectors $u_i,v_i,w_i$. This means that the $r(2n+p)$-tuple of coordinates of these vectors only needs to avoid the zero set of some non identically zero polynomial. In particular, for $K=\mathbb{R}$ or $K=\mathbb{C}$ the set of ``bad vectors'' is of measure 0; and for finite fields, most choices of the $u_i,v_i,w_i$ satisfy these conditions if $|K|$ is large enough compared to $n$ and $p$.
Throughout the paper, we use the term ``generic'' in this algebro-geometric sense. Namely, if $K$ is an infinite field,\footnote{If $K$ is finite, we can replace it by its algebraic closure.}
we say that a property generically holds true in $K^N$ if there is a nonidentically zero polynomial $P(x_1,\ldots,x_N)$ such that 
 the property holds true for all $x \in K^N$ such that $P(x_1,\ldots,x_N) \neq 0$.
 
\subsection{Decomposition algorithm.}

Our algorithm applies to tensors satisfying the same hypotheses as in the uniqueness theorem and as already pointed out, 
it is an algorithmic version of that theorem. More precisely, we give in Section~\ref{sec:gendecomp} an algorithm that runs 
under the same hypotheses as our general uniqueness theorem (Theorem~\ref{th:unique}). 
We also give in Section~\ref{sec:0decomp} a slightly simpler algorithm that runs under somewhat less general hypotheses.
It is nevertheless capable of decomposing generic tensors in the same range ($n \leq r \leq 4n/3$) as our main algorithm.
These are the first efficient algorithms for decomposition of generic tensors in the overcomplete case.
 For instance, prior
to this work it was not known how to efficiently decompose generic tensors of format $n \times n \times n$ and rank $r=1.01n$ (or rank $r \leq (1+\epsilon)n$, for some constant $\epsilon >0$).
 It {\em is} known, however, that generic tensors of format $n \times n \times p$ can be decomposed up to rank $r=(n-1)^2/4$ if $p$ is large enough (namely, if $p \geq r$~\cite{johnston23}).
 Also, symmetric tensors of format $n \times n \times n$ and rank $n+k$ can be decomposed with a running time that scales like $n^k$~\cite{chen22}.
 
We therefore obtain an almost complete solution to the open problem at the end of Section~3.3 of Moitra's book~\cite{moitra18},
with the following two caveats:
\begin{itemize}
\item[(i)] We only consider tensors of format $n \times n \times p$. The fully rectangular case ($m \times n \times p$) should be treated as well.
\item[(ii)] Like our uniqueness theorem, the decomposition algorithm only applies to tensors with 4 slices or more.
The case $p=3$ is completely open. For instance, no efficient algorithm is known for the decomposition of generic tensors of 
format $n \times n \times 3$ and rank $r=1.01n$.
\end{itemize}
The case $p=2$ was investigated by Grigoriev~\cite{grigoriev78} and JaJa~\cite{jaja78}. 
It follows from their work that the rank of a generic tensor of format
$n \times n \times 2$ is equal to $n$.\footnote{This also follows from the upper bound $\lceil p/2 \rceil n$ on the generic rank of a tensor of format $n \times n \times p$~\cite{atkinson80},  from our characterization of tensor rank in terms of commuting matrices (Theorem~\ref{th:constr_intro}) and even from the more elementary Theorem~\ref{th:indordi}.} As a result, this case can be handled by an algorithm for undercomplete tensor decomposition such as the 
simultaneous diagonalization algorithm presented in Section~\ref{sec:unique}.

It should be mentioned that algorithms based on the method of sum of squares (or inspired by this method) have been proposed for overcomplete decomposition of {\em symmetric} tensors of order 3, e.g.,~\cite{hopkins16,ma16}, 
and also for ordinary tensors of order 3~\cite{kivva20}. These algorithms
are designed for the average-case setting, and therefore do not meet the goal of generic tensor decomposition.
One upside, however, is that they apply for much higher rank: up to $r \simeq n^{4/3}$ for ~\cite{hopkins16}, and up to $r \simeq n^{3/2}$ for~\cite{kivva20,ma16}. There are also algorithms for overcomplete decomposition of generic tensors of order four~\cite{cardoso07,hopkins19} (note that~\cite{cardoso07} contains a mistake, which was corrected in~\cite{johnston23}).
A smoothed analysis of the FOOBI algorithm from~\cite{cardoso07} can be found in~\cite{bhaskara22}.

After the version of this paper appearing in the proceedings  of SODA 2025 was written,  a second
algorithm for overcomplete decomposition of generic tensors of order~3 was proposed by Kothari,  Moitra and Wein~\cite{kothari24}. Their algorithm has the advantage of applying up to a higher rank than ours: 
for any $\epsilon>0$, it applies up to $r=(2-\epsilon)n$   for tensors of format $n \times n \times p$ instead of $r=4n/3$ 
for our algorithm. Their algorithm also applies to tensors of rectangular format ($m \times n \times p$).
One downside of their algorithm is that it requires more slices. Namely, 
our algorithm applies for any $p\geq 4$, but according to~\cite[Theorem 2.8]{kothari24} 
 their algorithm can be run only for $p \geq 18$ on tensors of format $n \times n \times p$ (for $p=18$ it applies up to $r=14n/9-18^3/4$; $r=2n$ is approached asymptotically as $p$ increases).

Like~\cite{hopkins16} our algorithm is a spectral method, but it is not inspired by semidefinite programming. 
As we now explain, we take as our starting point the classical simultaneous diagonalization algorithm presented in Section~\ref{sec:unique}, and ``boost it'' so it can handle overcomplete tensor decomposition.


\subsection{Our approach: the method of commuting extensions.} \label{sec:approach}

Our decomposition algorithm follows the philosophy of ``learning from lower bounds''. This approach has been developed in particular for the reconstruction of arithmetic circuits, see for instance~\cite{garg20} and the references there (in the literature on arithmetic circuits, tensor decomposition is known as ``reconstruction of depth-3 set-multilinear circuits'').
One starts from a circuit lower bound (or from the {\em ideas} behind the lower bound proof), and then tries to turn it into a reconstruction algorithm for the same class of circuits. Usually, the lower bound is based on the method of partial derivatives~\cite{NW96} or a variant of this method. In this paper we use instead the method of commuting extensions, which was invented by Strassen to prove the following lower bound on tensor rank. This is the first time that this method is used to design a reconstruction algorithm.
\begin{theorem}[Strassen~\cite{strassen83}] \label{th:strassen}
Let $T$ be a  tensor of format $n \times n \times 3$  
with slices $T_1,T_2, T_3$.
If $T_1$ is invertible,
\begin{equation} \label{eq:strassen}
\rk(T) \geq n + \frac{1}{2} \rk(T_2T_1^{-1}T_3 - T_3T_1^{-1}T_2).
\end{equation}
\end{theorem}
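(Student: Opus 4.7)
The plan is to reduce the theorem to a rank inequality for a single auxiliary $r \times r$ matrix, and then prove that inequality via a block factorization that exploits the commuting (diagonal) structure of the slice coefficients. Starting from a decomposition $T = \sum_{i=1}^r u_i \otimes v_i \otimes w_i$ with $r = \rk(T)$, I form the $n \times r$ matrices $U = [u_1 \cdots u_r]$ and $V = [v_1 \cdots v_r]$ and the $r \times r$ diagonal matrices $D_k = \diag((w_i)_k)$, so that $T_k = U D_k V^T$. Since $T_1$ is invertible, $U$ and $V$ both have rank $n$. Introduce the $r \times r$ matrix $M = V^T T_1^{-1} U$: a direct computation gives $M D_1 M = M$ (so $\rk M = n$) and
\[
T_2 T_1^{-1} T_3 - T_3 T_1^{-1} T_2 \;=\; U\,(D_2 M D_3 - D_3 M D_2)\, V^T.
\]
Setting $Q = D_2 M D_3 - D_3 M D_2$, it suffices to show $\rk Q \leq 2(r - n)$.

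\textbf{Normalization.} First I would reduce to $D_1 = I$. By a perturbation argument one may assume every $(w_i)_1 \neq 0$: perturb the $w_i$'s in the decomposition, apply the clean case to the resulting nearby tensor, and pass to the limit using lower semicontinuity of rank. Then rescaling each triple $(u_i, v_i, w_i)$ to absorb the first coordinate of $w_i$ makes $D_1 = I$, and under this normalization $M$ becomes idempotent of rank $n$.

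\textbf{Block factorization.} Change basis in $K^r$ so that $M$ is the block matrix with $I_n$ in the top-left and zeros elsewhere. This change of basis destroys the diagonality of $D_2, D_3$ but preserves the crucial identity $[D_2, D_3] = 0$. Writing $D_k$ in the new basis with $n \times n$, $n \times (r-n)$, $(r-n) \times n$, $(r-n) \times (r-n)$ blocks $a_k, b_k, c_k, e_k$, a direct computation gives
\[
Q \;=\; \begin{pmatrix} [a_2, a_3] & a_2 b_3 - a_3 b_2 \\ c_2 a_3 - c_3 a_2 & c_2 b_3 - c_3 b_2 \end{pmatrix}.
\]
The four blockwise identities implied by $D_2 D_3 = D_3 D_2$ (such as $[a_2, a_3] = b_3 c_2 - b_2 c_3$ and $c_2 b_3 - c_3 b_2 = -[e_2, e_3]$) allow each block of $Q$ to be rewritten so as to share the common left factors $(b_3, -b_2)$ and $(e_3, -e_2)$, yielding
\[
Q \;=\; \begin{pmatrix} b_3 & -b_2 \\ e_3 & -e_2 \end{pmatrix} \begin{pmatrix} c_2 & e_2 \\ c_3 & e_3 \end{pmatrix},
\]
a product of an $r \times 2(r - n)$ matrix and a $2(r - n) \times r$ matrix. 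Hence $\rk Q \leq 2(r - n)$.

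\textbf{Main difficulty.} The crux is spotting the last factorization. Without invoking $[D_2, D_3] = 0$, the obvious factorizations of $Q$ only give the weaker bound $\rk Q \leq 2n$; it is the commutativity identities that allow one to trade the dimension $n$ for $r - n$, and this is where the "$3n/2$" constant in Strassen's tensor rank lower bound ultimately comes from.
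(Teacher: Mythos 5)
Your proof is correct. Note first that the paper does not actually prove Theorem~\ref{th:strassen}: it is attributed to Strassen~\cite{strassen83}, with a self-contained exposition deferred to~\cite{koiran20commuting}, and the way the paper later invokes it (e.g., in Lemma~\ref{lem:strassenbis}) is through the commuting-extension bound~\cite[Proposition 4]{koi24}. So there is no in-paper proof to line up against. That said, your argument is in fact a direct, self-contained instance of the commuting-extension philosophy the paper is built on. After normalizing $D_1=I$ and conjugating $M$ to the standard rank-$n$ projection, the conjugated matrices $\widetilde{D}_2,\widetilde{D}_3$ are precisely a size-$r$ commuting extension of the $n\times n$ blocks $a_2,a_3$ (and with the natural choice of conjugating matrix one even gets $a_k = T_1^{-1}T_k$), and the factorization $Q = \begin{pmatrix} b_3 & -b_2 \\ e_3 & -e_2\end{pmatrix}\begin{pmatrix} c_2 & e_2 \\ c_3 & e_3\end{pmatrix}$ is exactly the computation behind the bound $\rk[a_2,a_3]\le 2(r-n)$ for matrices admitting a commuting extension of size~$r$. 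Two small points worth tightening: the assertion "$MD_1M=M$, so $\rk M=n$" is a touch fast --- the cleanest route is $\rk M = \rk(V^T T_1^{-1}U)=\rk U=n$ since $V^T$ has full column rank; and the perturbation step should be phrased over an infinite (e.g.\ algebraically closed) field, where the exceptional $\epsilon$'s avoiding $\det(T_1+\epsilon UV^T)=0$ and $(w_i)_1+\epsilon=0$ form a finite set and the conclusion at $\epsilon=0$ then follows by lower semicontinuity of matrix rank, exactly as the paper uses that principle elsewhere.
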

Strassen also showed  that the right hand side is a lower bound on the {\em border rank} of $T$, but we will not work with 
border rank in this paper. The connection of our results to lower bounds should be clear just from the statement of Theorem~\ref{th:unique_simple}: the conclusion that $\rk(T) = r$ really boils down to the lower bound $\rk(T) \geq r$ since $\rk(T) \leq r$ is an assumption of the theorem. Note also that this lower bound follows from Theorem~\ref{th:strassen} and hypothesis 3.(i) of Theorem~\ref{th:unique_simple}. There has been one previous attempt to turn Theorem~\ref{th:strassen} into a reconstruction algorithm, but the resulting algorithm remains conjectural~\cite[Chapter 3]{persu18}.

The rank of the matrix on the right hand side of~(\ref{eq:strassen}) can be viewed as a quantitative measure of the lack of
commutativity of the matrices $T_1^{-1}T_2$ and $T_1^{-1}T_3$.  As shown by Strassen, the whole right hand side then gives a lower bound
on the size of a {\em commuting extension} of this pair of matrices.
\begin{definition}
A tuple $(Z_1,\ldots,Z_p)$  of matrices in $M_r(K)$ is said to be a {\em commuting extension} of a tuple $(A_1,\ldots,A_p)$ of matrices in $M_n(K)$ if the $Z_i$ pairwise commute 
and each $A_i$ sits in the upper left corner of a block decomposition of  $Z_i$, i.e.,
\begin{equation} \label{eq:block}
Z_i=
\begin{pmatrix}
A_i & B_i \\
C_i & D_i
\end{pmatrix}
\end{equation}
for some matrices $B_i \in M_{n,r-n}(K)$, $C_i \in M_{r-n,n}(K)$ and $D_i \in M_{r-n}(K)$.
\end{definition}
Here we denote by $M_{r,s}(K)$ the set of matrices with $r$ rows, $s$ columns and entries from $K$.
Also, $M_r(K)=M_{r,r}(K)$ as usual.
The term ``commuting extension'' was apparently coined by Degani, Schiff and Tannor~\cite{degani05}, 
who were unaware of Strassen's work.
Any tuple of matrices of size $n$ has a commuting extension of size $2n$. Indeed, one can take
\begin{equation} \label{eq:2n}
Z_i = \begin{pmatrix}
A_i & -A_i\\
A_i & -A_i
\end{pmatrix}
\end{equation}
since $Z_i Z_j=0$ for all $i,j$. It turns out that the commuting extensions that are relevant for the study of tensor rank are
those that are {\em diagonalizable} in the sense that each $Z_i$ is a diagonalizable matrix (this implies that the $Z_i$ are simultaneously diagonalizable). The commuting extension in~(\ref{eq:2n}) is certainly not diagonalizable (except if $A_i=0$ for all $i$) since the $Z_i$ are nilpotent.

 In this paper we build on Strassen's work to obtain the following characterization of tensor rank in terms of commuting extensions.
 \begin{theorem} \label{th:constr_intro}
Let $T$ be a tensor 
of format $n \times n \times p$ over an infinite field~$K$.
Assume moreover that the span $\langle T_1,\ldots,T_p \rangle$ of the slices  of $T$ 
 contains an invertible matrix.
 
 For any integer $r \geq n$,  $\rk(T) \leq r$ if and only if there exists an invertible matrix $A \in \langle T_1,\ldots,T_p \rangle$ such that the $p$-tuple of matrices $A^{-1} T_i$ admits 
a commuting extension $(Z_1,\ldots,Z_p)$ where the $Z_i$ are diagonalizable matrices of size $r$.
\end{theorem}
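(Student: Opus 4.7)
The statement is a biconditional; I would prove the two directions separately, relying in both on the basic fact that commuting diagonalizable matrices are simultaneously diagonalizable.

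For the implication ``commuting extension $\Rightarrow$ $\rk(T)\le r$'', suppose $(Z_1,\dots,Z_p)$ is a diagonalizable commuting extension of $(A^{-1}T_1,\dots,A^{-1}T_p)$. Simultaneous diagonalization gives $Z_i = S\Lambda_i S^{-1}$ with $S$ invertible of size $r$ and each $\Lambda_i$ diagonal. Expanding $Z_i=\sum_{j=1}^{r}(\Lambda_i)_{jj}\,s_j t_j^T$ in terms of the columns $s_j$ of $S$ and the rows $t_j^T$ of $S^{-1}$, and splitting $s_j=\begin{pmatrix}s_j'\\ \ast\end{pmatrix}$, $t_j=\begin{pmatrix}t_j'\\ \ast\end{pmatrix}$ along the $(n,r-n)$-partition, the upper-left $n\times n$ block of $Z_i$ reads $A^{-1}T_i=\sum_{j=1}^{r}(\Lambda_i)_{jj}\,s_j'(t_j')^T$. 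Multiplying on the left by $A$ and setting $u_j:=As_j'$, $v_j:=t_j'$, and $w_j:=((\Lambda_1)_{jj},\dots,(\Lambda_p)_{jj})$ yields a decomposition $T=\sum_{j=1}^{r}u_j\otimes v_j\otimes w_j$, so $\rk(T)\le r$.

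For the converse, start from any decomposition $T=\sum_{j=1}^{r}u_j\otimes v_j\otimes w_j$ and express the slices as $T_i=UD_iV^T$ with $U,V\in M_{n,r}(K)$ and $D_i=\diag(w_{1,i},\dots,w_{r,i})$. A short padding step---iteratively replacing one term $u\otimes v\otimes w$ by $u\otimes v\otimes(w+w')-u\otimes v\otimes w'$ for a generic $w'$---lets us assume $w_j\neq 0$ for every $j$ while keeping the number of terms at $r$. Any element of $\langle T_1,\dots,T_p\rangle$ has the form $A=UD_\lambda V^T$ with $D_\lambda=\diag(w_j\!\cdot\!\lambda)_j$; since $\det A\not\equiv 0$ in $\lambda$ by hypothesis, and $\prod_j(w_j\!\cdot\!\lambda)\not\equiv 0$ because each $w_j\neq 0$, the infinitude of $K$ allows a choice of $\lambda$ making both $A$ and $D:=D_\lambda$ invertible (this also forces $\rk(U)=\rk(V)=n$ via $\Ima A\subseteq\Ima U$ and its transpose). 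Set $N:=A^{-1}UD\in M_{n,r}(K)$. From $A=UDV^T$ one obtains $NV^T=I_n$, so the row space of $N$ is $n$-dimensional and meets $\ker V$ only at $0$. Choose $V^*\in M_{r-n,r}(K)$ whose rows span the orthogonal complement of the row space of $N$, and form $\tilde V:=\begin{pmatrix}V\\ V^*\end{pmatrix}$. Then $\ker V^*$ equals the row space of $N$, so $\tilde V$ is invertible; and $V^*N^T=0$ gives $\tilde V\,N^T=\begin{pmatrix}I_n\\ 0\end{pmatrix}$, showing that the first $n$ rows of $\tilde V^{-T}$ are precisely $N$. The matrices $Z_i:=\tilde V^{-T}(D^{-1}D_i)\tilde V^T$ are then simultaneously diagonalized by $\tilde V^T$, hence pairwise commute and are individually diagonalizable over $K$; and a direct block computation of the top-left $n\times n$ corner gives $N(D^{-1}D_i)V^T=(A^{-1}U)D_iV^T=A^{-1}T_i$, completing the construction.

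The technical heart of the proof is engineering $\tilde V$ so that it is simultaneously invertible, has $N$ as the first $n$ rows of $\tilde V^{-T}$, and yields $A^{-1}T_i$ (not merely some extension of it) in the upper-left block of $Z_i$. The identity $NV^T=I_n$, automatic from $A=UDV^T$, is what makes all three conditions hold at once: it pins the row space of $N$ as a complement of $\ker V$ in $K^r$, so that choosing $V^*$ in the orthogonal complement of this row space is compatible with invertibility of $\tilde V$; and it collapses $ND^{-1}$ to $A^{-1}U$ in the final block computation.
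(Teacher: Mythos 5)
Your proof is correct, and while it shares the paper's skeleton (simultaneous diagonalization; extending the factor matrices to invertible $r\times r$ matrices), the constructions differ in instructive ways. For the forward direction you read a rank-$r$ decomposition directly off $Z_i=S\Lambda_iS^{-1}$; the paper routes the same computation through Theorem~\ref{th:indordi} (applied to the tensor with slices $(Z_1,\ldots,Z_p,I_r)$) followed by Corollary~\ref{cor:mult}. For the converse, the paper's Proposition~\ref{prop:constr} normalizes in stages to $A=T_1=I_n$ and $D_1=I_r$ so that $U^TV=I_n$, and then extends $U$ and $V$ to mutually inverse $r\times r$ matrices via Lemma~\ref{lem:uniqueidentity}. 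You bypass all the normalizations: from $A=UD_\lambda V^T$ you set $N=A^{-1}UD_\lambda$ so that $NV^T=I_n$, enlarge $V$ by rows spanning the orthogonal complement of the row space of $N$, verify that $N$ is recovered as the top $n$ rows of $\tilde V^{-T}$, and conjugate the diagonal matrices $D^{-1}D_i$ by $\tilde V^T$; the identity $NV^T=I_n$ then makes the top-left $n\times n$ block exactly $A^{-1}T_i$, not merely an extension of it. This is Lemma~\ref{lem:uniqueidentity} in disguise, with $\tilde V^{-T}$ and $\tilde V^T$ playing the roles of the paper's $U'$ and $V'$, but your version works in one shot rather than after a reduction chain, which is a pleasant simplification. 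The one place where the two routes genuinely diverge --- and where your write-up is slightly off --- is the case $\rk(T)<r$: the paper first builds an extension of size $\rk(T)$ and then pads with zero blocks (checking the result stays commuting and diagonalizable), whereas you pad the rank-one decomposition of $T$ itself so that all $w_j$ can be assumed nonzero. Your splitting trick $u\otimes v\otimes w = u\otimes v\otimes(w+w') - u\otimes v\otimes w'$ replaces one term by two, so it \emph{increases} the term count; the intended procedure must be to first delete the terms with $w_j=0$ and then split a surviving nonzero term the right number of times, but as written the clause ``keeping the number of terms at $r$'' contradicts the operation described. This is easily repaired, and the paper's zero-block padding is simpler still.
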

Other characterizations of tensor rank in the same style and related bibliographic references can be found in~\cite{koiran20commuting}. 
This preprint also provides a self-contained exposition of Strassen's lower bound.
Theorem~\ref{th:constr_intro} forms the basis of our uniqueness theorem and decomposition algorithm.
In particular, the decomposition algorithm works roughly as follows:
\newpage
\begin{enumerate}
\item Assume for simplicity\footnote{This turns out to be generically true, see Section~\ref{sec:0decomp}.} that one may take $A=T_1$ in Theorem~\ref{th:constr_intro}, and compute $A_i=T_1^{-1}T_{i}$
for all $i=2,\ldots,p$.
\item Compute a commuting extension $(Z_2,\ldots,Z_p)$ of size $r$ of $(A_2,\ldots,A_p)$  with the algorithm from the companion paper~\cite{koi24}.
\item Let $T'$ be the tensor of format $r \times r \times p$ with slices $(I_r,Z_2,\ldots,Z_p)$.
With 
the simultaneous diagonalization algorithm, decompose $T'$ as a sum of $r$ tensors of rank one. 
\item Extract a decomposition of $T$ from this decomposition of $T'$.
\end{enumerate}
Uniqueness of decomposition then follows from the uniqueness of the commuting extension at step 2 (established in~\cite{koi24}) and the uniqueness of the decomposition at step 3 guaranteed by the uniqueness theorems of Jennrich and Kruskal. The actual proof of the uniqueness theorem is more involved than one might infer from the above one-line
proof sketch, in particular because the uniqueness at step 2 and step 3 is only an {\em essential} uniqueness, and because we prove  a more general version of the uniqueness theorem than in Theorem~\ref{th:unique_simple}.

\subsection{Complexity of decomposition algorithms.} \label{sec:complexity}

Tensor decomposition algorithms can be analyzed at various level of abstraction. Let us take the example of 
the simultaneous diagonalization algorithm as described in Section~\ref{sec:unique}. The algorithm performs a polynomial number of arithmetic operations and comparisons, as well as two matrix diagonalizations. At a high level of abstraction one may take arithmetic operations, comparisons and matrix diagonalization as atomic operations and conclude that 
this algorithm runs in ``polynomial time.'' 
This approach is commonplace in the tensor decomposition literature, and in particular this is how 
the simultaneous diagonalization algorithm is treated in Moitra's book~\cite{moitra18} and in~\cite{bhaskara14}.
In arithmetic circuit reconstruction, it is likewise rather common to treat the computation of polynomial roots\footnote{Note that this is essentially equivalent to computing matrix eigenvalues.}  as an atomic operation (see~\cite{koiran19derand,KoiranSaha23} for an alternative treatment). In this paper we do for the most part a high level analysis.
As is already apparent from Section~\ref{sec:approach}, our algorithm consists essentially of a call to the algorithm from~\cite{koi24} to compute a commuting extension, followed by a call to 
the simultaneous diagonalization algorithm. The algorithm from~\cite{koi24} performs polynomially many arithmetic operations and comparisons (and no computation of eigenvalues). We can therefore conclude that our tensor decomposition algorithm runs in polynomial time in the above abstract sense.

Treating matrix diagonalization as an atomic operation is certainly questionnable, especially for fields such as $\mathbb{R}$ 
or $\mathbb{C}$ where the goal is usually to compute eigenvalues and eigenvectors up to some given precision $\epsilon >0$.
The complexity of a matrix diagonalization algorithm will then depend not only on the matrix size, 
but also on~$\epsilon$ and possibly on other parameters such as eigenvalue gaps.\footnote{See~\cite{banks20} for the state of the art in the finite arithmetic model and for previous results on the complexity of matrix diagonalization.}
For an example of analysis at a finer level of detail see~\cite{KS23b,KS24}, which deals with a symmetric version of 
the simultaneous diagonalization algorithm. Its complexity is analyzed as a function of the tensor size $n$, its condition number and the desired precision
 $\epsilon$.\footnote{Some papers have also studied the related but different question of stability to input noise, in the setting of smoothed analysis~\cite[Theorem 1.5]{bhaskara14} or even for worst-case perturbations to the input tensor~\cite[Theorem 2.3]{bhaskara14}. See also~\cite{bhaskara22,ma16}.}
 Such a detailed analysis for the problem considered in this paper is left to future work. Needless to say, a detailed analysis can be done only when a high level description such as the one provided here has become available.
 Note also that the high level approach allows for a uniform treatment of all fields.
 
The situation is simpler for $K=\mathbb{Q}$, i.e., when the vectors $u_i,v_i,w_i$ in the decomposition of $T$ have rational entries.\footnote{Sometimes it is necessary to look for a decompostion of a tensor $T \in \mathbb{Q}^{n \times n \times p}$ in a field extension, but this is not the situation that we are considering now.} 
In this case, we show that our decomposition algorithms run in polynomial time in the bit model of computation (see Theorems~\ref{th:bit0decomp} 	and~\ref{th:bitdecomp} in Section~\ref{sec:decomp}). 
 In particular this implies that the bit size of the output (the list of vectors $u_i,v_i,w_i$) is polynomially bounded in the input size (namely, the bit size of $T$). This was a priori  not obvious from~(\ref{eq:decomp}). This conclusion applies to any tensor satisfying the hypotheses of our uniqueness theorems (Theorem~\ref{th:unique_simple}, or more generally Theorem~\ref{th:unique}).

\subsection{Worst case complexity.}

Computing the tensor rank is NP-hard in in the worst case~\cite{hastad90,schaefer16,shitov16}. In order to obtain efficient decomposition algorithms, we therefore need to make some assumptions on the input tensor (e.g., the genericity hypothesis from the present paper).
Given the close connection between tensor rank and commuting extensions, it is natural to ask whether computing the
size of the smallest commuting extension is NP-hard. Recall  from Section~\ref{sec:approach} that this problem comes in two flavors: one can consider arbitrary commuting extensions, or only those that are diagonalizable. In Section~\ref{sec:worst} 
we use a construction by Shitov~\cite{shitov25}  to show that the second version of this problem is indeed NP-hard (this appeared as an open problem in previous versions of this paper).
We also point out that the complexity of tensor rank for a constant number of slices seems to be unknown. 
For instance, is it NP-hard to compute the rank of an order 3  tensor with 3 or 4 slices?

\subsection{Organization of the paper.}

We present some background on tensor rank in Section~\ref{sec:background}.
In Section~\ref{sec:char} we give a proof of Theorem~\ref{th:constr_intro}.
As a first application of this characterization of tensor rank, we give in Theorem~\ref{th:strassenbis} 
a variation on Strassen's lower bound (Theorem~\ref{th:strassen}).

Section~\ref{sec:uniqueth} is devoted to uniqueness theorems. We first review the uniqueness theorem for commuting extension from the companion paper~\cite{koi24}, and then state the general version of our uniqueness theorem (Theorem~\ref{th:unique}).
Then we present Jennrich's uniqueness theorem and the corresponding decomposition algorithm.
Next, we review Kruskal's uniqueness theorem for the sake of completeness. This result is not used in any of our proofs.
Finally, our uniqueness theorem is established in Section~\ref{sec:uniqueproof}.

We present our two decomposition algorithms in Section~\ref{sec:decomp}, and in Section~\ref{sec:generic} we show that the hypotheses of our uniqueness theorems and decomposition algorithms are generically satisfied for input tensors in the
range $n \leq r \leq 4n/3$. We conclude the paper with  Section~\ref{sec:worst}
and some observations on the worst-case complexity of tensor rank
and commuting extensions.

\section{Preliminaries} \label{sec:background}

Recall from the introduction that an order 3 tensor is an element of $K^{m \times n \times p}$ where $K$ is some arbitrary field.
In many applications, the field of interest is $K=\rr$ or $K=\cc$ (the case of finite fields is also interesting for the study of matrix multiplication and other bilinear maps).
 In most of the paper we assume that $m=n$, but this section deals with the general rectangular case.
We denote by $M_{n,p}(K)$ the set of matrices with $n$ rows, $p$ columns and entries in~$K$. 
We denote by $M_n(K)$ the set of square matrices of size $n$, by~$GL_n(K)$ the group of invertible matrices of size $n$,
and by $I_n$ the identity matrix of size $n$.

Given 3 vectors $u \in K^m, v \in K^n, w \in K^p$ we recall that their tensor product $u \otimes v \otimes w$ is the tensor of  format $m \times n \times p$
with entries: $T_{ijk}=u_i  v_j w_k$.
By definition, a tensor of this form with $u,v,w \neq 0$ is said to be of {\em rank one}.
The rank of an arbitrary tensor $T$ is defined as the smallest integer $r$ such that $T$ can be written as a sum of $r$ tensors of rank one (and the rank of $T=0$ is 0). 
An elementary counting of the number of independent parameters in such a decomposition shows that most of the square tensors of size $n$ must be of rank $\Omega(n^2)$.  For $K = \cc$, the value of the generic rank 
is known exactly: it is equal to 5 for $n=3$~\cite{strassen83} and to $\lceil n^3/(3n-2) \rceil$ for $n \neq 3$~\cite{lickteig85}. The same paper also gives a simple formula
for the dimension of the set of tensors of rank at most~$r$.
For tensors of order 4 or more, the exact value of the generic rank is not known in general. The existing  results as of 2012 are summarized in~\cite[section~5.5]{landsbergTensors}.

Matrix rank is much better behaved than tensor rank. In particular, for any $r$ the set of matrices $M \in M_{m,n}(K)$ 
such that $\rk(M) \leq r$ is a closed subset of $M_{m,n}(K)$. This well known property follows the characterization of matrix rank by vanishing minors. It is often referred to as ``lower semicontinuity of matrix rank'', and will be used throughout the
paper. By contrast, there is no such closure property for tensor rank.\footnote{This is the motivation for the introduction
of border rank. This notion will not be used in the present paper.}

As already explained in the introduction, we can cut a tensor $T \in K^{m \times n \times p}$ into  $p$ slices.  Each slice is a  $m \times n$ matrix. This will allow us to study the properties of $T$ with tools from linear algebra. It is therefore  important
to know what the slices of a tensor look like given a decomposition 
\begin{equation} \label{eq:decomp2} 
T=\sum_{i=1}^r u_i \otimes v_i \otimes w_i
\end{equation}
as a sum of rank-1 tensors.
First, we note that the $k$-th slice of $ u_i \otimes v_i \otimes w_i$ is the rank-one matrix $w_{ik} (u_i v_i^T)$.
As a result, the $k$-th slice of $T$ is
$$Z_k=\sum_{i=1}^r w_{ik} (u_i v_i^T).$$
This can be written in more compact notation:
\begin{equation} \label{eq:slices}
Z_k  = U^T D_k V
\end{equation}
where $U$ is the matrix with the $u_i$ as row vectors, $V$ is the matrix with the $v_i$ as row vectors and $D_k$ is the diagonal matrix 
$\diag(w_{1k},\ldots,w_{rk})$. We record this observation in the following proposition.
\begin{proposition} \label{prop:slices}
For a tensor $T$ of format $m \times n \times p$ we have $\rk(T) \leq r$ iff there are diagonal matrices $D_1,\ldots,D_p$ of size $r$ and two matrices $U \in M_{r,m}(K)$, $V \in M_{r,n}(K)$ such that the slices of $T$ satisfy $Z_k  = U^T D_k V$ for $k=1,\ldots,p$.
 In this case, we have a decomposition of $T$ as in~(\ref{eq:decomp2}) where the $u_i$ are the rows of $U$, the $v_i$ are the rows of $V$ and  
$D_k = \diag(w_{1k},\ldots,w_{rk})$, i.e., the $k$-th coordinate of~$w_i$ is the $i$-th diagonal entry of $D_k$.
\end{proposition}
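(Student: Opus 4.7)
The proposition is essentially a restatement in matrix-algebra language of the slice calculation carried out immediately before it, so the plan is simply to verify both implications by unpacking the definition of a rank-one tensor at the level of a single slice.

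For the forward direction, I would start from a decomposition $T=\sum_{i=1}^r u_i\otimes v_i\otimes w_i$ (padding with zero rank-one terms if $\rk(T)<r$). The $k$-th slice of the summand $u_i\otimes v_i\otimes w_i$ is the matrix with $(j,l)$ entry $u_{ij}v_{il}w_{ik}$, i.e.\ the rank-one matrix $w_{ik}\,(u_i v_i^T)$. Summing over $i$ and assembling the $u_i$ as rows of a matrix $U\in M_{r,m}(K)$ and the $v_i$ as rows of $V\in M_{r,n}(K)$, the identity
\[
Z_k \;=\; \sum_{i=1}^r w_{ik}\,u_i v_i^T \;=\; U^T D_k V, \qquad D_k=\diag(w_{1k},\ldots,w_{rk}),
\]
drops out by the column-times-row expansion of a matrix product.

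For the converse, I would read the same identity backwards. Given $U$, $V$ and diagonal matrices $D_k$ with $Z_k=U^T D_k V$, define $u_i$ (resp.\ $v_i$) to be the $i$-th row of $U$ (resp.\ $V$), and define $w_i\in K^p$ by setting its $k$-th coordinate equal to the $i$-th diagonal entry of $D_k$. Expanding $U^T D_k V$ entrywise then gives $(Z_k)_{jl}=\sum_i u_{ij}v_{il}w_{ik}$, which is exactly the $(j,l,k)$ entry of $\sum_{i=1}^r u_i\otimes v_i\otimes w_i$. Hence $T$ equals this sum of $r$ rank-one tensors, and dropping any terms with a zero factor only decreases the number of summands, so $\rk(T)\leq r$.

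There is no real obstacle: both directions reduce to the column-times-row expansion of $U^T D_k V$. The only bookkeeping point worth flagging is that the vector $w_i$ is read off \emph{across} the $p$ diagonal matrices $D_1,\ldots,D_p$ (its $k$-th coordinate comes from $D_k$), not along a single one; this reflects the symmetric role of the three factors in a rank-one tensor and is what makes the correspondence between slice factorizations and tensor decompositions bijective in the sense stated.
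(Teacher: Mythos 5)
Your proposal is correct and takes essentially the same approach as the paper: the paper derives the identity $Z_k = U^T D_k V$ from the slice computation immediately before the proposition and then records it as an observation, while you simply make both directions of the "iff" explicit, which is the obvious (and correct) way to fill in the argument.
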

We further record three consequences of Proposition~\ref{prop:slices}. 
\begin{corollary} \label{cor:mult}
Let $T$ be a tensor of format $m \times n \times p$ with slices $Z_1,\ldots,Z_p$. For any matrix $A \in M_{m',m}(K)$,
we have $\rk(T') \leq \rk(T)$ where $T'$ is the  tensor of format $m' \times n \times p$ and slices $AZ_1,\ldots,AZ_p$.
In particular, if $m=m'$ and $A$ is invertible then $\rk(T') =  \rk(T)$.
\end{corollary}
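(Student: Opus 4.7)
The plan is to deduce both statements directly from Proposition~\ref{prop:slices}, by multiplying the slice decomposition $Z_k = U^T D_k V$ on the left by $A$ and repackaging the result into the same form.

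More concretely, I start by setting $r = \rk(T)$ and applying Proposition~\ref{prop:slices} to obtain $U \in M_{r,m}(K)$, $V \in M_{r,n}(K)$, and diagonal matrices $D_1,\ldots,D_p$ of size $r$ such that $Z_k = U^T D_k V$ for all $k$. Left-multiplying by $A \in M_{m',m}(K)$ gives
$$A Z_k = A U^T D_k V = (U A^T)^T D_k V.$$
The matrix $U' := U A^T$ lies in $M_{r,m'}(K)$, so the slices of $T'$ admit a decomposition of the form $(U')^T D_k V$ with the \emph{same} $V$ and the \emph{same} diagonal matrices $D_k$. Invoking the ``if'' direction of Proposition~\ref{prop:slices} in the other direction yields $\rk(T') \leq r = \rk(T)$, proving the first assertion.

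For the second assertion, I assume $m = m'$ and $A \in GL_m(K)$. Applying the first assertion to $T'$ and the matrix $A^{-1}$ gives $\rk(T'') \leq \rk(T')$, where $T''$ is the tensor with slices $A^{-1}(A Z_k) = Z_k$, i.e., $T'' = T$. Hence $\rk(T) \leq \rk(T')$, and combined with $\rk(T') \leq \rk(T)$ we obtain equality.

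There is no real obstacle here: the whole argument is a one-line manipulation of the identity $Z_k = U^T D_k V$, made possible by the fact that left multiplication by $A$ only touches the ``$U$ side'' of the decomposition and leaves both the diagonal factors $D_k$ (which encode the third-direction coordinates $w_i$) and the matrix $V$ (which encodes the $v_i$) untouched. The only minor care needed is to check the dimensions: $U A^T$ indeed has $r$ rows and $m'$ columns, so Proposition~\ref{prop:slices} applies verbatim with $m$ replaced by $m'$.
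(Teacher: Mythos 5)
Your argument is exactly the paper's: rewrite $AZ_k = (UA^T)^T D_k V$ and invoke Proposition~\ref{prop:slices}, then obtain the equality in the invertible case by applying the inequality to $A^{-1}$. You merely spell out the invertibility step that the paper leaves implicit; the substance is identical.
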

\begin{proof}
From~(\ref{eq:slices}) we have $Z'_k=(UA^T)^TD_kV$ for all $k$. \end{proof}
Note that this transformation of $T$ into $T'$ changes only the vectors $u_1,\ldots,u_r$. One could also multiply the
slices of $T$ from the right without increasing $\rk(T)$, and this would only change $v_1,\ldots,v_r$.

\begin{corollary} \label{cor:add}
From a tensor $T$ of format $m \times n \times p$, construct a tensor $T'$ of format $m \times n \times (p+1)$ by adding 
any slice lying in the space spanned by the slices of $T$. Then we have $\rk(T)=\rk(T')$.
\end{corollary}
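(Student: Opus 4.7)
The plan is to prove the two inequalities $\rk(T') \leq \rk(T)$ and $\rk(T) \leq \rk(T')$ separately. The key tool is Proposition~\ref{prop:slices}, which characterizes rank via a simultaneous factorization of the slices with diagonal middle factors.

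For the inequality $\rk(T) \leq \rk(T')$, I would simply restrict: given any rank-one decomposition $T' = \sum_{i=1}^{r} u_i \otimes v_i \otimes w'_i$ where $w'_i \in K^{p+1}$, projecting each $w'_i$ onto its first $p$ coordinates yields vectors $w_i \in K^p$ such that $T = \sum_{i=1}^{r} u_i \otimes v_i \otimes w_i$ (since the first $p$ slices of $T'$ coincide with those of $T$). Hence any decomposition of $T'$ yields one of $T$ of the same size.

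For the reverse inequality $\rk(T') \leq \rk(T)$, set $r = \rk(T)$ and write the new slice as $Z_{p+1} = \sum_{k=1}^{p} \lambda_k Z_k$ for suitable scalars $\lambda_k$. By Proposition~\ref{prop:slices}, there exist $U \in M_{r,m}(K)$, $V \in M_{r,n}(K)$ and diagonal matrices $D_1,\ldots,D_p$ of size $r$ with $Z_k = U^T D_k V$ for $k=1,\ldots,p$. Setting $D_{p+1} := \sum_{k=1}^{p} \lambda_k D_k$, which is again diagonal of size $r$, we obtain $Z_{p+1} = U^T D_{p+1} V$. Thus the tuple $(Z_1,\ldots,Z_{p+1})$ admits a simultaneous factorization of the form required by Proposition~\ref{prop:slices}, and so $\rk(T') \leq r = \rk(T)$.

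This proof involves no real obstacle; the only small point to notice is that the linear combination of diagonal matrices is still diagonal, which is what makes Proposition~\ref{prop:slices} applicable in the extended tuple. Combining both inequalities gives the equality $\rk(T) = \rk(T')$.
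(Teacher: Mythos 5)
Your proof is correct and follows essentially the same approach as the paper: the inequality $\rk(T') \leq \rk(T)$ is obtained by applying Proposition~\ref{prop:slices} and noting that the new slice corresponds to a linear combination of the diagonal matrices, while the reverse inequality is the observation that $T$ is a subtensor of $T'$ (which you spell out concretely via projection of the $w'_i$ vectors).
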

\begin{proof}
Suppose that the additional slice is the linear combination $Z_{p+1}=\sum_{k=1}^p \alpha_k Z_k$, and let $r=\rk(T)$.
Using~(\ref{eq:slices}) again, we see that  $Z_{p+1}=U^TD_{k+1}V$ where $D_{k+1}=\sum_{k=1}^p \alpha_k D_k$.
Hence $\rk(T') \leq \rk(T)$ by Proposition~\ref{prop:slices}, and the converse inequality holds as well 
since $T$ is a subtensor of~$T'$.
\end{proof}

\begin{corollary} \label{cor:dimspan}
Suppose that $T$ has a decomposition as a sum of $r$ rank-1 tensors as in~(\ref{eq:decomp}). 
Then the span $\langle Z_1,\ldots,Z_p\rangle$ of the slices of $T$ is of dimension at most $r$.
Moreover, if this space is of dimension $r$ the vectors $w_i$ in~(\ref{eq:decomp}) must be pairwise linearly independent.
\end{corollary}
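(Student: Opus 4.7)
The plan is to start from the explicit formula for the slices provided by Proposition~\ref{prop:slices}, namely $Z_k=\sum_{i=1}^r w_{ik}(u_iv_i^T)$, and then chain two dimension inequalities. Each slice is a linear combination of the $r$ rank-one matrices $M_i:=u_iv_i^T$, so $\langle Z_1,\ldots,Z_p\rangle\subseteq\langle M_1,\ldots,M_r\rangle$ and therefore
\[
\dim\langle Z_1,\ldots,Z_p\rangle\leq\dim\langle M_1,\ldots,M_r\rangle\leq r.
\]
This settles the first assertion.

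For the second assertion, I would assume $\dim\langle Z_1,\ldots,Z_p\rangle=r$ and deduce that both inequalities above are equalities. The outer one forces $\dim\langle M_1,\ldots,M_r\rangle=r$, i.e., the $M_i$ are linearly independent, so the linear map $\phi\colon K^r\to M_{m,n}(K)$ defined by $\phi(\alpha_1,\ldots,\alpha_r)=\sum_i\alpha_i M_i$ is injective. Let $W\in M_{r,p}(K)$ be the matrix whose $i$-th row is $w_i$ and denote by $w^{(k)}$ its $k$-th column; then $Z_k=\phi(w^{(k)})$, and the injectivity of $\phi$ yields
\[
\dim\langle Z_1,\ldots,Z_p\rangle=\dim\langle w^{(1)},\ldots,w^{(p)}\rangle=\rk(W).
\]
Hence $\rk(W)=r$, which means the $r$ rows of $W$ are linearly independent. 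A fortiori no $w_i$ vanishes and no two of them are proportional, which is the desired pairwise linear independence.

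The argument is essentially bookkeeping once one observes that $T$ depends bilinearly on the two independent pieces of data $(M_1,\ldots,M_r)$ and~$W$, so I do not anticipate any serious obstacle. The only point worth double-checking is the elementary fact that an injective linear map preserves the dimension of a subspace, which is what underlies the equality $\dim\langle Z_1,\ldots,Z_p\rangle=\dim\langle w^{(1)},\ldots,w^{(p)}\rangle$ used above.
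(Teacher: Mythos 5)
Your proof is correct and, while it rests on the same underlying observation as the paper (each slice is a linear combination of the $r$ rank-one matrices $u_iv_i^T$, equivalently of the image under $D\mapsto U^TDV$ of diagonal matrices), the second half takes a genuinely different route. The paper argues by contrapositive: if, say, $w_1=\lambda w_2$, then every diagonal matrix $D_k$ satisfies a fixed linear relation between its first two entries, so the $D_k$ span a subspace of dimension at most $r-1$, forcing $\dim\langle Z_1,\ldots,Z_p\rangle\leq r-1$. You instead argue directly: from $\dim\langle Z_1,\ldots,Z_p\rangle=r$ you squeeze the chain of inequalities to conclude $M_1,\ldots,M_r$ are linearly independent, so the map $\phi$ is injective and $\rk(W)=r$. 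This actually yields a \emph{stronger} conclusion than stated, namely that the $r$ vectors $w_1,\ldots,w_r$ are fully linearly independent (which is only possible when $p\geq r$), of which pairwise independence is an immediate consequence. Both arguments are clean and short; yours has the small bonus of exposing the precise bilinear structure (via $\phi$ and $W$) and extracting the sharper conclusion, while the paper's contrapositive is slightly more economical for the weaker claim it targets.
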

\begin{proof}
By~(\ref{eq:slices}), all matrices in the span are of the form $U^TDV$ where $D$ is a diagonal matrix of size $r$.
This implies the first part of the Corollary.

For the second part  we prove the contrapositive. Suppose that two of the $w_i$ are colinear, for instance $w_1=\lambda w_2$. We have a corresponding relation of proportionality for the
diagonal matrices in~(\ref{eq:slices}): the first diagonal entry of  each $D_j$ is equal to $\lambda$ times its second diagonal entry. Thus $D_1,\ldots,D_p$ would span a subspace of dimension at most $r-1$, and the span of the slices would also be of dimension $r-1$ at most.
\end{proof}

A  tensor can be naturally interpreted as the array of coefficients of the trilinear form in $m+n+p$ variables: 
\begin{equation} \label{eq:multilinear}
t(x_1,\ldots,x_m,y_1,\ldots,y_n,z_1,\ldots,z_p)=\sum_{i,j,k} T_{ijk}x_iy_jz_k.
\end{equation}
For a decomposition of $T$ as in~(\ref{eq:decomp}) we have for the corresponding  trilinear form the decomposition:
\begin{equation} \label{eq:setmulti}
t(x,y,z)=\sum_{i=1}^r (u_i^Tx)(v_i^Ty) (w_i^Tz).
\end{equation}
Such an expression is sometimes called a ``set-multilinear depth-3 homogeneous arithmetic circuit''~\cite{NW96,raz13}.

\section{A characterization of tensor rank}
\label{sec:char}

In this section we give a proof of Theorem~\ref{th:constr_intro}, 
and as a first application we give in Theorem~\ref{th:strassenbis} 
a variation on Strassen's lower bound (Theorem~\ref{th:strassen}).


\subsection{The case $r=n$.}

We first consider the case of a tensor of format $r \times r \times p$ and rank $r$. 
\begin{lemma}[simultaneous diagonalization by equivalence] \label{lem:simdiag}
  Let $A_1,\ldots,A_k$ be 
  matrices of size $n$
  and assume that their span contains an invertible matrix~$A$.
  There are   diagonal matrices $D_i$ and two nonsingular matrices
  $P,Q \in M_n(K)$ such that 
$A_i = P D_i Q$ for all $i=1,\ldots,k$ if and only if the $k$ matrices $A^{-1}A_i$ ($i=1,\ldots,k$) form a commuting family of diagonalizable matrices.
\end{lemma}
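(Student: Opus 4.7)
The plan is to prove both directions by a direct calculation, invoking the classical fact that a commuting family of diagonalizable matrices is simultaneously diagonalizable.

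For the forward direction, I would assume $A_i = P D_i Q$ for all $i$ with $P, Q$ invertible and $D_i$ diagonal. Since $A$ lies in the span of the $A_i$, write $A = \sum_i c_i A_i = P\bigl(\sum_i c_i D_i\bigr) Q = P D Q$ where $D := \sum_i c_i D_i$ is diagonal. Because $P, Q$ are invertible and $A$ is invertible, $D$ is invertible too. Then
\[
A^{-1} A_i \;=\; Q^{-1} D^{-1} P^{-1} \cdot P D_i Q \;=\; Q^{-1} (D^{-1} D_i) Q.
\]
Each $D^{-1} D_i$ is diagonal, so the matrices $A^{-1} A_i$ are simultaneously diagonalized by the change-of-basis $Q$. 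In particular they are individually diagonalizable and pairwise commute.

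For the reverse direction, set $B_i := A^{-1} A_i$ and assume the family $\{B_i\}$ consists of pairwise commuting diagonalizable matrices. By the standard simultaneous-diagonalization theorem, there exists an invertible $S \in M_n(K)$ such that $S^{-1} B_i S = D_i$ is diagonal for every $i$. Then $A_i = A B_i = (A S) D_i S^{-1}$, so taking $P := A S$ and $Q := S^{-1}$ (both invertible) gives $A_i = P D_i Q$, as required.

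The only non-trivial input is the simultaneous-diagonalization theorem used in the reverse direction; the rest is bookkeeping. Since the hypothesis "diagonalizable" is taken in $M_n(K)$, the diagonalization takes place over $K$ and no field extension is needed. I do not anticipate any real obstacle; the main thing to watch is writing $A$ as a combination of the $A_i$ in the forward direction so that one obtains a single diagonal $D$ and can read off $A^{-1} A_i$ as a conjugate of $D^{-1} D_i$.
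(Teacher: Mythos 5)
Your proof is correct and follows essentially the same route as the paper: the forward direction is the same algebraic manipulation to exhibit $A^{-1}A_i$ as a conjugate of a diagonal matrix, and the reverse direction invokes simultaneous diagonalization of a commuting family of diagonalizable matrices. The only cosmetic difference is that the paper first reduces to the case $A=A_1$ before computing, whereas you handle a general $A=\sum_i c_i A_i$ directly; this changes nothing substantive.
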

\begin{proof}
We will show this for the special case $A=A_1$. The general case follows easily since the tuple $(A_1,\ldots,A_k)$ is simultaneously diagonalizable by equivalence if and only if the same is true of the tuple $(A,A_1,\ldots,A_k)$.
  
  Suppose first that $A_i = P D_i Q$ where $P,Q$ are nonsingular and the $D_i$ diagonal. Since $A_1$ is invertible, the same is true of
  $D_1$ and we have $A_1^{-1}A_i = Q^{-1}D_1^{-1}D_iQ$.
  These matrices are therefore diagonalizable, and they pairwise commute.

  Assume conversely that the $A_1^{-1}A_i$ form a commuting family of diagonalizable matrices. This is well known to be a necessary and sufficient condition
  for simultaneous diagonalization by similarity~\cite[Theorem~1.3.21]{horn13}: there must exist a nonsingular
  matrix $Q$ and diagonal matrices $D_2,\ldots,D_k$ such that
  $A_1^{-1}A_i = Q^{-1}D_iQ$ for $i=2,\ldots,k$. Let $P=A_1Q^{-1}$.
  We have $A_1=PD_1Q$ where $D_1$ is the identity matrix, and for $i \geq 2$
  we have $PD_iQ = A_1Q^{-1}D_iQ = A_1(A_1^{-1}A_i)= A_i.$
\end{proof}

\begin{theorem} \label{th:indordi}
Let $S$ be a tensor 
of format $r \times r \times p$ over $K$, with an invertible matrix~$Z$ in the span of its slices.
The following properties are equivalent:
\begin{itemize}
\item[(i)] There is a family  $(w_1,\ldots,w_r)$ of vectors of $K^p$ and  two linearly independent families $(u_1,\ldots,u_r)$, 
$(v_1,\ldots,v_r)$   of vectors of $K^r$ such that
$$S=\sum_{i=1}^r u_i \otimes v_i  \otimes w_i.$$
\item[(ii)]  The $p$ matrices  ${Z}^{-1}Z_i$ commute and are diagonalizable over $K$.
\end{itemize}
\end{theorem}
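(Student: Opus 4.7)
The plan is to reduce both implications to Proposition~\ref{prop:slices} and Lemma~\ref{lem:simdiag}, and essentially to observe that these two results, taken together, already say what needs to be said. Hypothesis (i) of the theorem is really a statement about the slices of $S$ being simultaneously diagonalizable by equivalence, and hypothesis (ii) is the condition in Lemma~\ref{lem:simdiag} which is equivalent to that, so all that remains is to translate carefully between the tensor decomposition and the simultaneous-diagonalization-by-equivalence formulation.

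For the implication (i)$\Rightarrow$(ii), I would start from a decomposition $S=\sum_{i=1}^r u_i\otimes v_i\otimes w_i$ with $(u_i)$ and $(v_i)$ linearly independent. Let $U$ and $V$ be the $r\times r$ matrices whose rows are the $u_i$ and the $v_i$, respectively; by linear independence, both are invertible. By the expression~(\ref{eq:slices}) for slices, each slice of $S$ has the form $Z_k = U^T D_k V$ with $D_k=\diag(w_{1k},\ldots,w_{rk})$. Thus the slices $Z_1,\ldots,Z_p$ are simultaneously diagonalizable by equivalence in the sense of Lemma~\ref{lem:simdiag}, with $P=U^T$ and $Q=V$. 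By the forward direction of that lemma, applied to the invertible matrix $Z$ in the span of the slices, the matrices $Z^{-1}Z_k$ form a commuting family of diagonalizable matrices, which is exactly (ii).

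For the reverse implication (ii)$\Rightarrow$(i), I would assume that the $Z^{-1}Z_k$ commute and are diagonalizable over $K$. Since $Z$ lies in the span $\langle Z_1,\ldots,Z_p\rangle$, Lemma~\ref{lem:simdiag} applies in the other direction and yields nonsingular matrices $P,Q\in M_r(K)$ and diagonal matrices $D_1,\ldots,D_p$ such that $Z_k=PD_kQ$ for all $k$. Setting $U^T=P$ and $V=Q$, the $r$ rows $u_i$ of $U$ and the $r$ rows $v_i$ of $V$ are then two linearly independent families of vectors of $K^r$. Defining $w_i\in K^p$ by $(w_i)_k=(D_k)_{ii}$ and invoking Proposition~\ref{prop:slices}, we obtain the desired decomposition $S=\sum_{i=1}^r u_i\otimes v_i\otimes w_i$, which is (i).

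There is no real obstacle here: the substantial work has been front-loaded into Lemma~\ref{lem:simdiag}, and the two remaining steps are routine translations between the "decomposition into rank-one tensors" viewpoint and the "slices simultaneously diagonalizable by equivalence" viewpoint, via Proposition~\ref{prop:slices}. The only minor point to keep track of is that the invertibility of $U$ and $V$ (equivalently of $P$ and $Q$) is precisely what encodes the linear independence of the $u_i$'s and $v_i$'s in (i), and conversely; once this correspondence is noted, both directions are immediate applications of the two results recalled above.
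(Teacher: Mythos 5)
Your proof is correct and follows essentially the same approach as the paper: the paper's proof states the equivalence in one sentence by citing Proposition~\ref{prop:slices} and Lemma~\ref{lem:simdiag}, and you simply unpack the two directions of that equivalence explicitly.
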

\begin{proof}
By Proposition~\ref{prop:slices}, (i) is equivalent to the existence of two invertible matrices $U$ and $V$ and of diagonal matrices $D_1,\ldots,D_p$ of size $r$
such that the $z$-zlices of $S$ satisfy $Z_i=U^TD_iV$ for $i=1,\ldots,p$. This is in turn equivalent to (ii) by Lemma~\ref{lem:simdiag}.
\end{proof}
Theorem~\ref{th:indordi} shows that there is a close connection between tensor rank and simultaneous diagonalization.
This is certainly not a new observation. In particular, 
the classical algorithm described in Section~\ref{sec:unique} is
essentially an algorithmic version of Theorem~\ref{th:indordi}.
Note that there is no (pairwise) linear independence assumption on the $w_i$ in Theorem~\ref{th:indordi}.
This is in contrast with the uniqueness theorems and the reconstruction algorithms.

\subsection{General case.} \label{sec:constr}

We will now derive the characterization of tensor rank in Theorem~\ref{th:constr_intro}. 
Let $T$  be a tensor of rank $r$ tensor with a decomposition
\begin{equation} \label{eq:constrdecomp}
T=\sum_{i=1}^r u_i \otimes v_i \otimes w_i.
\end{equation}
 Recall again from  Proposition~\ref{prop:slices} that the slices of $T$ are given by the formula $T_k = U^T D_k V$, where $U,V$ are the matrices with the $u_i$ (respectively, $v_i$) as row vectors, and $D_k$ is the diagonal matrix $\diag(w_{1k},\ldots,w_{rk})$.
As a result, the matrices in  the span $\langle T_1,\ldots,T_p \rangle$ of the slices of $T$ are exactly the matrices of the form:
\begin{equation} \label{eq:span}
A = U^TDV,\ D=\sum_{k=1}^p \lambda_k D_k
\end{equation}
where $\lambda_1,\ldots,\lambda_p \in K$. 
We will be interested in the maximal rank of $D$ over all possible choices 
 of $\lambda_1,\ldots,\lambda_p$ in~(\ref{eq:span}). Each entry of $D$ is a linear form in the $\lambda_k$. Therefore, the maximal rank $\rho$ is equal to the number of linear forms that are not identically 0. The $i$-th linear form is identically 0 if
 $(D_k)_{ii}=w_{ik}=0$ for all $k$, that is, if $w_i=0$. The corresponding term $u_i \otimes v_i \otimes w_i$ can then
  be removed
 from~(\ref{eq:constrdecomp}) and we see that 
 $\rk(T) \leq \rho$. Since $\rho \leq r$ by definition of $\rho$, we conclude that $\rho=\rk(T)=r$.
We need one additional ingredient:
\begin{proposition} \label{prop:constr1}
Let $T$ be a tensor of rank $r$ and format $n \times n \times p$ over $K$ where $K$ is an infinite field.
Assume moreover 
 that the span $\langle T_1,\ldots,T_p \rangle$ of the 
 slices  of $T$ 
 contains an invertible matrix. 
  Fix a rank-$r$ decomposition: 
 \begin{equation} \label{eq:Tdecomp}
 T=\sum_{i=1}^r u_i \otimes v_i \otimes w_i
 \end{equation} and
 pick an invertible matrix $A \in \langle T_1,\ldots,T_p \rangle$ such that the corresponding matrix $D$ in~(\ref{eq:span}) has rank $r$. The $p$-tuple of matrices $A^{-1} T_i$  must have
a commuting extension $(Z_1,\ldots,Z_p)$ where the $Z_i$ are diagonalizable matrices of size $r$. 
\end{proposition}
A proof of this proposition can be found at the end of Section~\ref{sec:constr}. In fact, we give in Proposition~\ref{prop:constr} a stronger version of this result which will be useful for our uniqueness theorem and the reconstruction algorithms.

 For the reader's convenience, let us give again the statement of  Theorem~\ref{th:constr_intro} before giving the proof:
\begin{theorem} \label{th:constr}
Let $T$ be a tensor 
of format $n \times n \times p$ over an infinite field~$K$.
Assume moreover that the span $\langle T_1,\ldots,T_p \rangle$ of the 
 slices  of $T$  contains an invertible matrix.
 
 For any integer $r \geq n$,  $\rk(T) \leq r$ if and only if there exists an invertible matrix $A \in \langle T_1,\ldots,T_p \rangle$ such that the $p$-tuple of matrices $A^{-1} T_i$ admits 
a commuting extension $(Z_1,\ldots,Z_p)$ where the $Z_i$ are diagonalizable matrices of size $r$.
\end{theorem}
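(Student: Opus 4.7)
The plan is to prove both directions by comparing the $(r \times r \times p)$ tensor whose slices are the $Z_k$ with the given $(n \times n \times p)$ tensor $T$. Since any family of commuting diagonalizable matrices is simultaneously diagonalizable, a diagonalizable commuting extension can always be written $Z_k = P\Lambda_k P^{-1}$ with the $\Lambda_k$ diagonal, and the block decomposition of $P$ and $P^{-1}$ along the partition $r = n + (r-n)$ is what links the extension to a rank-$r$ decomposition of $T$.

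For the easier direction ($\Leftarrow$), I would start from $Z_k = P\Lambda_k P^{-1}$ and expand $Z_k = \sum_{i=1}^r (\Lambda_k)_{ii}\, p_i q_i^T$, where $p_i$ is the $i$-th column of $P$ and $q_i^T$ the $i$-th row of $P^{-1}$. Reading off the upper-left $n\times n$ block, if $p_i^{(1)}$ and $q_i^{(1)}$ denote the first-$n$-coordinate truncations of $p_i$ and $q_i$, then $A^{-1}T_k = \sum_i (\Lambda_k)_{ii}\, p_i^{(1)} (q_i^{(1)})^T$, and multiplying through by $A$ yields the rank-$r$ decomposition $T = \sum_i (Ap_i^{(1)}) \otimes q_i^{(1)} \otimes w_i'$ with $w_i' = ((\Lambda_1)_{ii},\ldots,(\Lambda_p)_{ii})$.

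For the converse ($\Rightarrow$), I start from a rank-$r$ decomposition and write the slices as $T_k = U^T D_k V$ via Proposition~\ref{prop:slices}, with $D_k = \diag(w_{1,k},\ldots,w_{r,k})$. After discarding vanishing summands I may assume every $w_i$ is nonzero, so the conditions ``$A_\alpha := \sum_k \alpha_k T_k$ invertible'' and ``$D_\alpha := \sum_k \alpha_k D_k$ invertible'' each cut out nonempty Zariski-open subsets of $K^p$, whose intersection is nonempty because $K$ is infinite. Fix such an $\alpha$ and set $A := A_\alpha = U^T D_\alpha V$.

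The main step is then to build an invertible $r \times r$ matrix $P$ whose first $n$ rows are $P_1 := A^{-1}U^T$ and whose inverse has first $n$ columns $Q_1 := D_\alpha V$: the upper-left block identity $P_1 Q_1 = A^{-1}U^T D_\alpha V = A^{-1}A = I_n$ is automatic, and the remaining blocks $P_2, Q_2$ must be chosen with the rows of $P_2$ in the left-kernel of $Q_1$, the columns of $Q_2$ in $\ker U^T$, and $P_2 Q_2 = I_{r-n}$. The crux of the argument is to verify that the induced bilinear pairing between these two $(r-n)$-dimensional subspaces is non-degenerate, which after a change of basis allows one to enforce $P_2 Q_2 = I_{r-n}$. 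This non-degeneracy reduces to $\ker(U^T) \cap \Ima(D_\alpha V) = \{0\}$, which holds because $U^T$ restricted to the $n$-dimensional space $\Ima(D_\alpha V)$ has image $\Ima(A) = K^n$ and is therefore bijective. Setting $Z_k := P(D_k D_\alpha^{-1}) P^{-1}$ produces a commuting family of diagonalizable matrices whose upper-left blocks are $P_1 D_k D_\alpha^{-1} Q_1 = A^{-1} U^T D_k V = A^{-1} T_k$, as required.
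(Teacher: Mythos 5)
Your argument is correct and follows essentially the same route as the paper: both directions hinge on Proposition~\ref{prop:slices} (the $T_k = U^T D_k V$ parametrization), simultaneous diagonalization of commuting diagonalizable matrices, and, for the ``only if'' direction, a genericity argument to get an invertible $A_\alpha$ with invertible $D_\alpha$ followed by completion of the rectangular matrices to invertible $r\times r$ matrices. Where the paper performs a chain of normalizations (add $A$ as an extra slice, reduce to $T_1 = I_n$, then to $D_1 = I_r$) before invoking Lemma~\ref{lem:uniqueidentity}, you instead build $P$ and $P^{-1}$ directly from $A^{-1}U^T$ and $D_\alpha V$, verifying the non-degeneracy of the pairing between $\ker(U^T)$ and the left kernel of $D_\alpha V$ by hand; this is the same completion mechanism as Lemma~\ref{lem:uniqueidentity}, just applied without the preliminary reductions, and it works for exactly the same reason.

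One small point to tidy up: you discard vanishing summands to ensure every $w_i \neq 0$, but this shrinks $r$ to some $r' \leq r$, so your construction produces a commuting extension of size $r'$ rather than size $r$. The theorem asks for size exactly $r$, so you should add the padding step the paper uses explicitly: extend the size-$r'$ extension to size $r$ by appending zero blocks (the enlarged matrices remain commuting and diagonalizable). This is a one-line fix, but as written the proof of the ``only if'' direction is incomplete when $\rk(T) < r$ or when the given decomposition has redundant terms.
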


\begin{proof}
Assume first that the above commuting extension exists. Consider the tensor $S$ of format $r \times r \times (p+1)$ with slices $(Z_1,\ldots,Z_p,I_r)$.
 We can apply Theorem~\ref{th:indordi} with $Z=I_r$ and conclude that $S$ is of rank at most~$r$.
The same is true of the tensor with slices $(A^{-1} T_1,\ldots,A^{-1} T_p)$ since it is a subtensor of $S$.
 Hence $\rk(T) \leq r$ since multiplying each slice by an invertible matrix does not change the tensor rank (Corollary~\ref{cor:mult}).
  In the case $r=\rk(T)$, the converse follows from Proposition~\ref{prop:constr1}.\footnote{In fact, we do not just prove the existence of one suitable 
$A \in \langle T_1,\ldots,T_p \rangle$: it will be shown in the proof of Proposition~\ref{prop:constr} that most of the matrices in the span are suitable.}
The general case then follows easily: if $r> \rk(T)$ we can first construct an extension by matrices of size $\rk(T)$ using Proposition~\ref{prop:constr} and then extend it further to size $r$, e.g., by adding blocks of 0's. The resulting matrices remain commuting and diagonalizable.
\end{proof}
\begin{corollary} \label{cor:constr}
Let $T$ be a tensor 
of format $n \times n \times p$ over $K$. We assume that  $A \in \langle T_1,\ldots,T_p \rangle$ is an invertible matrix such 
that the $p$-tuple of matrices $A^{-1} T_i$ admits 
a commuting extension $(Z_1,\ldots,Z_p)$ where the $Z_i$ are diagonalizable matrices of size $r$.
Fix any decomposition 
\begin{equation} \label{eq:extdecomp0}
T'=\sum_{i=1}^r u'_i \otimes v'_i \otimes w_i
\end{equation}
for the tensor $T'$ with slices  $(Z_1,\ldots,Z_p)$.
Let $V \in M_{r,n}(K)$ be the matrix having as $i$-th row the first $n$ coordinates of $v'_i$.
Let $U' \in M_{r,n}(K)$ be the matrix having as $i$-th row the first $n$ coordinates of $u'_i$, and let $U=U'A^T$. 
We have for $T$ the decomposition $T=\sum_{i=1}^r u_i \otimes v_i \otimes w_i$ where $u_1,\ldots,u_r$ are the rows
of $U$, and $v_1,\ldots,v_r$ the rows of $V$.
\end{corollary}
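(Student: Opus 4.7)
The strategy is to read off a decomposition of $T$ from the given decomposition of $T'$ by combining two facts: the upper-left $n\times n$ block of each slice $Z_k$ is $A^{-1}T_k$ (by the definition of a commuting extension), and the slice formula $Z_k=\widetilde U^T D_k \widetilde V$ coming from Proposition~\ref{prop:slices}. Extracting the upper-left block of this product only uses the first $n$ columns of $\widetilde U$ and $\widetilde V$, which are exactly the matrices $U'$ and $V$ defined in the statement. Multiplying through by $A$ then removes the inverse and puts $T$ itself into the form demanded by Proposition~\ref{prop:slices}.

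In more detail, first I would introduce the full-size matrices $\widetilde U,\widetilde V\in M_r(K)$ whose $i$-th rows are $u'_i$ and $v'_i$, and the diagonal matrices $D_k=\diag(w_{1k},\ldots,w_{rk})$. The assumed decomposition $T'=\sum_{i=1}^r u'_i\otimes v'_i\otimes w_i$ together with Proposition~\ref{prop:slices} gives
\begin{equation*}
Z_k=\widetilde U^T D_k \widetilde V \qquad (k=1,\ldots,p).
\end{equation*}
Writing $\widetilde U=[\,U'\ U''\,]$ and $\widetilde V=[\,V\ V''\,]$ according to the split of columns into the first $n$ and the last $r-n$, a direct block computation shows that the upper-left $n\times n$ submatrix of $\widetilde U^T D_k \widetilde V$ is $(U')^T D_k V$.

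Next I would invoke the commuting-extension hypothesis: the upper-left $n\times n$ block of $Z_k$ equals $A^{-1}T_k$. Comparing with the previous display yields $A^{-1}T_k=(U')^T D_k V$, hence
\begin{equation*}
T_k = A\,(U')^T D_k V = (U'A^T)^T D_k V = U^T D_k V,
\end{equation*}
for every $k=1,\ldots,p$. Finally, I would apply the converse direction of Proposition~\ref{prop:slices} to the matrices $U,V$ and the diagonal matrices $D_k$: this exhibits the decomposition $T=\sum_{i=1}^r u_i\otimes v_i\otimes w_i$ in which $u_i$ is the $i$-th row of $U$, $v_i$ is the $i$-th row of $V$, and the third factors are precisely the original $w_i$ (since the $k$-th coordinate of the $i$-th third factor is read off as the $i$-th diagonal entry of $D_k$, which is $w_{ik}$ by construction).

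There is no real obstacle here; the only point that requires a moment of care is bookkeeping the row/column conventions, so that the upper-left block of $\widetilde U^T D_k \widetilde V$ is correctly identified as $(U')^T D_k V$ with the $U'$, $V$ of the statement and not their transposes. Once that is pinned down, the proof is a one-line verification using Proposition~\ref{prop:slices} in both directions.
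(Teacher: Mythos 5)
Your proposal is correct and follows essentially the same route as the paper: both isolate the upper-left $n\times n$ block of the slice formula $Z_k=\widetilde U^T D_k\widetilde V$ to obtain $A^{-1}T_k=(U')^T D_k V$, and then left-multiply by $A$ (the paper via Corollary~\ref{cor:mult}, you via the direct algebra $A(U')^T=(U'A^T)^T=U^T$) before reading off the decomposition with Proposition~\ref{prop:slices}. Your version just spells out the block computation that the paper leaves implicit.
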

\begin{proof}
As we have have already seen in the above proof of Theorem~\ref{th:constr},
the existence of a   decomposition of $T'$ as in~(\ref{eq:extdecomp0}) follows from
Theorem~\ref{th:indordi}.
 The subtensor of $T'$ with slices $A^{-1}T_1,\ldots,A^{-1}T_p$ can be decomposed as  
 $\sum_{i=1}^r u''_i \otimes v_i \otimes w_i$ where the vectors $u''_1,\ldots,u''_r$ are the rows of $U'$.
 To obtain $T$ from $T'$ we multiply each slice by $A$ from the left. 
 As pointed out in the proof of Corollary~\ref{cor:mult}, the effect of this on the decomposition is to multiply $U'$ by $A^T$ from the right.
\end{proof}
It remains to prove Proposition~\ref{prop:constr1}. As mentioned earlier, we  will give in Proposition~\ref{prop:constr} a stronger version of this result. The first part of the following lemma was already used in~\cite{strassen83}.
\begin{lemma} \label{lem:uniqueidentity}
Let $U$ and $V$ be two $r \times n$ matrices 
such that $U^TV=I_n$. Then $r \geq n$, and one can add  $r-n$ columns to $U$ and $V$ in order to obtain 
two $r \times r$ matrices which satisfy $U'^TV'=I_r$. More precisely, if $U'=(U | A)$ and $V'=(V | B)$, 
 $B$ can be any matrix such that $\Ima(B) = \ker (U^T)$; and for any such $B$, $A$ is the unique matrix 
 such that $A^TB=I_{r-n}$ and $A^TV=0$.
\end{lemma}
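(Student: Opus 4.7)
The plan is to unpack the three assertions of the lemma in order: (a) the inequality $r\geq n$, (b) the freedom in choosing $B$, and (c) the uniqueness and existence of $A$. The main structural observation, which makes steps (b) and (c) work, is that the augmented matrix $V'=(V\mid B)$ is already invertible as soon as $\Ima(B)=\ker(U^T)$; from there everything is forced by linear algebra.

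First I would note that $U^TV=I_n$ forces $\rk(V)=\rk(U)=n$, since both factors of $U^TV$ must have rank at least $n$. Because $V$ has $n$ columns and rank $n$, its $r$ rows must accommodate $n$ linearly independent columns, so $r\geq n$. The subspace $\ker(U^T)\subseteq K^r$ then has dimension exactly $r-n$ by the rank-nullity theorem.

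Next, fix any $B\in M_{r,r-n}(K)$ with $\Ima(B)=\ker(U^T)$; such a $B$ has rank $r-n$ and satisfies $U^TB=0$. The key claim is that $V'=(V\mid B)$ is invertible. Indeed, suppose $Vx_1+Bx_2=0$ for some $x_1\in K^n,\, x_2\in K^{r-n}$. Multiplying on the left by $U^T$ and using $U^TV=I_n$ and $U^TB=0$ gives $x_1=0$, whence $Bx_2=0$ and hence $x_2=0$ because the columns of $B$ are linearly independent. So $V'\in GL_r(K)$.

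Finally, to find $A$ we rewrite the conditions $A^TV=0$ and $A^TB=I_{r-n}$ as the single block equation $A^T(V\mid B)=(0\mid I_{r-n})$. Since $V'$ is invertible, this determines $A$ uniquely via
\[
A^T=(0\mid I_{r-n})\,V'^{-1},
\]
and one may conversely check from this formula that $A$ satisfies both conditions. With $U'=(U\mid A)$ and $V'=(V\mid B)$, a $2\times 2$ block computation of $U'^TV'$ with blocks $U^TV=I_n$, $U^TB=0$, $A^TV=0$, $A^TB=I_{r-n}$ immediately yields $U'^TV'=I_r$, completing the proof. There is no real obstacle here beyond observing the invertibility of $V'$, which is the mechanism that simultaneously produces existence and uniqueness of $A$.
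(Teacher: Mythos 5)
Your proof is correct and follows essentially the same route as the paper: establish $r\ge n$ from the rank of $U^TV$, observe that $V'=(V\mid B)$ is invertible (the paper phrases this as $\Ima(V)\cap\ker(U^T)=\{0\}$, which is the same point you verify explicitly), and then read off $A$ uniquely from the inverse of $V'$. Your write-up is a bit more detailed in checking invertibility and in solving directly for $A^T=(0\mid I_{r-n})V'^{-1}$, but there is no substantive difference in approach.
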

\begin{proof}
We have $r \geq n$ since $\rk(U^TV) \leq \max(\rk(U),\rk(V))$.
Let us add to $V$ the columns of a $r \times (r-n)$ matrix $B$ such that $\Ima(B) = \ker (U^T)$.
The resulting matrix $V'$ is invertible since $\Ima(V) \cap \ker (U^T) = \{0\}$.
Its (unique) inverse is obtained by adding to $U^T$ the rows of a matrix $A^T$ such that $\ker(A^T) = \Ima(V)$ and $A^TB=I_{r-n}$.
\end{proof}

\begin{proposition} \label{prop:constr}
Let $T$ be a tensor of rank $r$ and format $n \times n \times p$ over $K$ where $K$ is an infinite field.
Assume moreover 
 that the span $\langle T_1,\ldots,T_p \rangle$ of the 
 slices  of $T$ 
 contains an invertible matrix. 
  Fix a rank-$r$ decomposition: 
 \begin{equation} \label{eq:Tdecomp2}
 T=\sum_{i=1}^r u_i \otimes v_i \otimes w_i
 \end{equation} and
 pick an invertible matrix $A \in \langle T_1,\ldots,T_p \rangle$ such that the corresponding matrix $D$ in~(\ref{eq:span}) has rank $r$. We then have the four following properties:
 \begin{itemize}
 \item[(i)] The $p$-tuple of matrices $A^{-1} T_i$ admits 
a commuting extension $(Z_1,\ldots,Z_p)$ where the $Z_i$ are diagonalizable matrices of size $r$. 
\item[(ii)] 
The matrices $Z_1,\ldots,Z_p$ in this commuting extension are the slices of a tensor  $$T'=\sum_{i=1}^r u'_i \otimes v'_i \otimes w_i$$ 
where  $v'_i \in K^r$ has its first $n$ coordinates equal to those of $v_i$, and $u'_i \in K^r$ has its first $n$ coordinates equal to 
the the $i$-th row of $UA^{-T}$ (as usual, we denote by $U$ the matrix having $u_i$ as its $i$-th row).
 \item[(iii)] The vectors $u'_1,\ldots,u'_r$ (respectively, $v'_1,\ldots,v'_r$) are linearly independent and the span of $Z_1,\ldots,Z_p$ contains an invertible matrix.
 
 \item[(iv)] If $A=T_1$, the first slice $Z_1$ of $T'$ is equal to $I_r$.
\end{itemize}
\end{proposition}
We will see in the proof that a generic matrix  $A \in \langle T_1,\ldots,T_p \rangle$ indeed satisfies the requisite property 
$\rk(D)=r$.
Properties (ii), (iii) and (iv)  are not needed for the proof of Theorem~\ref{th:constr}, but they will be useful for  our uniqueness theorem and the reconstruction algorithms.
\begin{proof}
By~(\ref{eq:span}) the rank of any linear combination of slices is upper bounded by the rank of $T$, 
so we must have $r \geq n$.

We first show that 
since  $K$ is infinite, it is actually possible to pick an invertible  matrix $A \in \langle T_1,\ldots,T_p \rangle$ such that 
$\rk(D)=r$. In order to ensure the invertibility of $A=\sum_k \lambda_k A_k$, the coefficients $\lambda_1,\ldots,\lambda_p$
must avoid the zero set of the polynomial $\det (\sum_k \lambda_k A_k)$. This polynomial is not identically 0 by  
the nonsingularity assumption on $\langle T_1,\ldots,T_p \rangle$. 
Moreover, the entries of the diagonal matrix $D$ are linear forms in $\lambda_1,\ldots,\lambda_p$ and  
each form is non-identically 0. Otherwise, as pointed out before the Proposition we would have $\rk(T) < r$ but our assumption
is $\rk(T)=r$.
The coefficients~$\lambda_k$ must therefore avoid
the union of an hypersurface of degree~$n$ and of $r$ hyperplanes.

Next we show that (i) holds true. For this we will assume that $A=T_1$.
This is without loss of generality since we can replace $T$ by the tensor of format $n \times n \times (p+1)$ with slices $A,A_1,\ldots,A_p$ (it has same tensor rank as $T$ by Corollary~\ref{cor:add}). In this way we will obtain for the $p+1$ matrices $(I_n,A^{-1}T_1,\ldots,A^{-1}T_p)$ a commuting extension  $(Z_0,Z_1,\ldots,Z_p)$.  The first matrix $Z_0$ can then simply be dropped from this list (we will see later in the proof that $Z_0=I_r$).
With an eye towards the proof of (ii) we note that this addition of  a slice to $T$ does not affect the $u_i$ or the $v_i$, and only requires the addition of one coordinate to each $w_i$ (this follows from the proof of Corollary~\ref{cor:add}).

We are therefore now in the situation where the two matrices $T_1=U^TD_1V$  and $D_1$ are invertible.
 We will even assume that $T_1=U^TD_1V=I_n$. This is again without loss of generality since we can multiply each slice of $T$ by $T_1^{-1}$ without changing $\rk(T)$. In fact, only the vectors $u_i$ need to be modified; in particular, the $w_i$  and therefore $D_1$ are unchanged (see Corollary~\ref{cor:mult} and the remark following it). With an eye towards the proof of (ii),
 we note that this transformation of the $u_i$ replaces $U$ by $UA^{-T}$. 
 
 We therefore need to construct a commuting extension $(Z_1,Z_2,\ldots,Z_p)$ for the tuple 
 $(T_1=I_n,T_2,\ldots,T_p)$. As a further simplification, we reduce to the case where $D_1=\diag(w_{11},\ldots,w_{r1})=I_r$.
 This is possible because all the diagonal entries of $D_1$ are nonzero: we can therefore multiply each vector $w_i$ 
 by $w_{i1}^{-1}$, and each 
 $v_i$ by $w_{i1}$.
 This brings us to the situation where $U^TV=I_n$ as in Lemma~\ref{lem:uniqueidentity}
 since $I_n=T_1=U^TD_1V=U^TV$. 
 
 We can now construct the matrices $Z_1,\ldots,Z_p$ in the commuting extension of $(T_1=I_n,T_2,\ldots,T_p)$:
  we take them as slices of 
 a tensor $T'$ of format $r \times r \times p$ containing $T$ as a subtensor. More precisely, we will take $T'$ of the form
 \begin{equation} \label{eq:extdecomp}
 T'=\sum_{i=1}^r u'_i \otimes v'_i \otimes w_i
 \end{equation}
 where $u'_i \in K^r$ has its first $n$ coordinates equal to those of $u_i$, and  $v'_i \in K^r$ has its first $n$ coordinates equal to those of $v_i$. This indeed ensures that the slices $Z_i$ of $T'$ are extensions of the slices of $T$.
 We still have to make sure that the $Z_i$ commute and are diagonalizable.
 By Proposition~\ref{prop:slices} we can write $Z_i=U'^TD_iV'$
 where $U',V'$ are obtained from
 $U,V$ by the addition of $r-n$ columns (containing respectively the last $r-n$ coordinates of the $u'_i$ and $v'_i$).
 By Lemma~\ref{lem:uniqueidentity}, we can choose these additional columns so that $Z_1=U'^TV'=I_r$.
 This identity implies that the rows of $U'$ are linearly independent, and likewise for $V'$. We can therefore apply Theorem~\ref{th:indordi} with $Z=Z_1$ and conclude that the  $Z_i$ indeed form a commuting family of diagonalizable matrices. This completes the proof of (i) and (iv).
 
 It remains to check that (ii) and (iii) are satisfied as well. 
 Remember that before constructing the tensor $T'$ in~(\ref{eq:extdecomp}) we had to replace $U$ by $UA^{-T}$,
 to divide  each $w_i$ by some scalar $\alpha_{i}$ and to multiply each $v_i$ by the same scalar. 
 We can now undo the last two transformations, i.e., we divide each $v'_i$ in~(\ref{eq:extdecomp}) by $\alpha_{i}$  
 and we multiply each $w_i$ by $\alpha_i$. This leaves $T'$ unchanged, and ensures that:
 \begin{enumerate}
 \item The $w_i$ in the final decomposition of $T'$ are the same vectors $w_i$ appearing in~(\ref{eq:Tdecomp2}).
 \item  Each $v'_i$ has its first $n$ coordinates equal to those of $v_i$.
 \end{enumerate}
 This shows that (ii) holds true. As to the third property, remember that the matrices in the commuting extension are
 of the form $Z_i=U'^TD_iV'$. As a result,  $\langle Z_1,\ldots,Z_p\rangle = U'^T \langle D_1,\ldots,D_p\rangle V'$.
 We have seen that  $\langle D_1,\ldots,D_p\rangle$ contains an invertible matrix. The conclusion follows since $U'$ and $V'$ 
 are invertible matrices of size $r$, with the $u'_i$ and $v'_i$ as their respective rows.
 This completes the proof of Proposition~\ref{prop:constr}.
 \end{proof}

\subsection{A variation on Strassen's lower bound.}

As a first application of the results in Section~\ref{sec:constr} we derive a variation on Theorem~\ref{th:strassen}.
\begin{theorem} \label{th:strassenbis}
Let $T$ be a 
tensor of format $n \times n \times 4$  and rank $r$,
with slices denoted  $A_1,A_2, A_3,A_4$.
Assume that  $A_1$ is invertible, and let $A'_k=A_1^{-1}A_k$ for $k=2,3,4$.
The dimension of the linear space $\Ima [A'_2,A'_3]+\Ima[A'_2,A'_4]$ is upper bounded by $3(r-n)$.
\end{theorem}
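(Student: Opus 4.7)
The plan is to apply the commuting extension characterization of tensor rank (Theorem~\ref{th:constr_intro}, or more precisely Proposition~\ref{prop:constr}) with $A = A_1$, and then to express the commutators $[A'_2,A'_3]$ and $[A'_2,A'_4]$ in terms of the off-diagonal blocks of the extension.

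Concretely, since $A_1$ is invertible and lies in $\langle A_1,\ldots,A_4\rangle$, Proposition~\ref{prop:constr}.(i) and (iv) furnish a commuting extension $(Z_1,Z_2,Z_3,Z_4)$ of $(I_n,A'_2,A'_3,A'_4)$ consisting of pairwise commuting diagonalizable matrices of size $r$, with $Z_1 = I_r$. Write each $Z_k$ in block form as in~(\ref{eq:block}):
\[
Z_k=\begin{pmatrix} A'_k & B_k \\ C_k & D_k \end{pmatrix}, \qquad k=2,3,4,
\]
where $B_k \in M_{n,r-n}(K)$. (If $K$ is finite one first passes to its algebraic closure, where the dimension statement is unchanged since commutators and images are defined by linear algebra.)

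Next, I would exploit the commutation $[Z_2,Z_3]=0$. Computing the top-left block of $Z_2Z_3-Z_3Z_2$ gives the identity
\[
[A'_2,A'_3] = B_3 C_2 - B_2 C_3,
\]
and similarly $[A'_2,A'_4] = B_4 C_2 - B_2 C_4$. Consequently
\[
\Ima[A'_2,A'_3] \subseteq \Ima B_2 + \Ima B_3, \qquad \Ima[A'_2,A'_4] \subseteq \Ima B_2 + \Ima B_4,
\]
so their sum is contained in $\Ima B_2 + \Ima B_3 + \Ima B_4$.

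The conclusion is then immediate: each $B_k$ has only $r-n$ columns, hence $\dim \Ima B_k \leq r-n$, and the sum of the three subspaces has dimension at most $3(r-n)$. There is no real obstacle here once the commuting extension is in hand; the only subtlety worth flagging is ensuring the block identity for the commutator is computed correctly, and noting that the same argument generalizes immediately from the pair $\{[A'_2,A'_3],[A'_2,A'_4]\}$ to any two commutators sharing a common index, which is exactly the shape of the condition appearing in hypothesis 3.(ii) of Theorem~\ref{th:unique_simple}.
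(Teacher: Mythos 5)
Your block-commutator calculation is sound and is, in spirit, what the paper outsources to~\cite[Proposition 4]{koi24}: if $Z_2,Z_3,Z_4$ pairwise commute, reading off the top-left block of $[Z_k,Z_l]=0$ gives $[A'_k,A'_l]=B_lC_k-B_kC_l$, whence $\Ima[A'_2,A'_3]+\Ima[A'_2,A'_4]\subseteq\Ima B_2+\Ima B_3+\Ima B_4$, a space of dimension at most $3(r-n)$. That part is correct and is the right idea.

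The gap is in the first step, where you invoke Proposition~\ref{prop:constr}.(i) and (iv) with $A=A_1$. That proposition does not apply to an arbitrary invertible $A$: its hypothesis is that, for some fixed rank-$r$ decomposition $T=\sum_i u_i\otimes v_i\otimes w_i$, the diagonal matrix $D$ attached to $A$ in~(\ref{eq:span}) has rank $r$. When $A=T_1$, this is the condition $w_{i1}\neq 0$ for all $i$; invertibility of $T_1$ only forces $\rk(D_1)\geq n$, not $\rk(D_1)=r$. If some $w_{i1}$ vanishes, the proof of Proposition~\ref{prop:constr} (which normalizes $D_1$ to $I_r$) breaks down, and there is no guarantee that $(I_n,A'_2,A'_3,A'_4)$ has a diagonalizable commuting extension of size $r$ at all --- Theorem~\ref{th:constr} only promises one for \emph{some} invertible $A$ in the span, not for every invertible slice.

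The paper sidesteps exactly this: it first proves the bound in Lemma~\ref{lem:strassenbis} with $A_1$ replaced by a generic linear combination $T_\lambda=\sum_k\lambda_k A_k$, for which Proposition~\ref{prop:constr} does give a commuting extension (after possibly passing to an infinite subfield), and then writes the quantity to be bounded as $\rk M(\lambda)$ for a suitable $n\times 2n$ matrix $M(\lambda)$ depending polynomially on $\lambda$. Since $\rk M(\lambda)\leq 3(r-n)$ on a dense open set, lower semicontinuity of matrix rank yields the same bound at $\lambda=(1,0,0,0)$, i.e.\ for $A=A_1$. To repair your argument you would need to insert this genericity-plus-semicontinuity step (or independently justify why a commuting extension with $Z_1=I_r$ exists for $A=A_1$ in general, which is not available).
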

Note that each of the two spaces  $\Ima [A'_2,A'_3]$, $\Ima[A'_2,A'_4]$ has its dimension upper bounded by $2(r-n)$.
This follows from Theorem~\ref{th:strassen} applied to the two tensors with respective slices $(A_1,A_2,A_3)$ 
and $(A_1,A_2,A_4)$.  Theorem~\ref{th:strassen}  therefore implies that
$\dim(\Ima [A'_2,A'_3]+\Ima[A'_2,A'_4]) \leq 4(r-n)$. We thus improve this upper bound to $3(r-n)$.
The proof relies in particular on the following lemma.
\begin{lemma}[Proposition 4 in~\cite{koi24}] \label{prop4}
If a triple $(M_1,M_2,M_3)$ of matrices of size $n$ has a commuting extension of size $r$ then we have:
$$\dim (\Ima [M_1,M_2]+ \Ima [M_1,M_3]) \leq 3(r-n).$$
\end{lemma}

If Theorem~\ref{th:strassenbis} holds for some field $K$ then it must also hold true for any subfield. For the rest of this section we will therefore assume without loss of generality that $K$ is an infinite field.
As a first step toward the proof, we show that the conclusion of Theorem~\ref{th:strassenbis} holds true if we replace 
the first slice $A_1$ by a random linear combination of slices.
\begin{lemma} \label{lem:strassenbis}
Let $T$ be as in Theorem~\ref{th:strassenbis}, and for $\lambda \in K^4$ let us denote by~$T_{\lambda}$ the linear combination of slices $\lambda_1 A_1+ \cdots + \lambda_4 A_4$. 
For a generic choice of  $\lambda \in K^4$, $T_{\lambda}$ is invertible and
\begin{equation} \label{eq:strassen2}
\dim(\Ima [T_{\lambda}^{-1}A_2,T_{\lambda}^{-1}A_3]+\Ima[T_{\lambda}^{-1}A_2,T_{\lambda}^{-1}A_4]) \leq 3(r-n).
\end{equation}
\end{lemma}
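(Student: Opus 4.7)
The plan is to derive the dimension bound from the existence of a commuting extension, which is guaranteed (for generic $\lambda$) by Proposition~\ref{prop:constr}. The key observation is that the commutators $[T_\lambda^{-1}A_i, T_\lambda^{-1}A_j]$ can be expressed, via block matrix multiplication, in terms of the off-diagonal blocks of the extension, and these blocks have only $r-n$ columns each.

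First, I would verify that for a generic $\lambda \in K^4$, the matrix $T_\lambda$ is invertible and the rank of the diagonal matrix $D(\lambda) = \sum_k \lambda_k D_k$ attached (via~(\ref{eq:span})) to a fixed rank-$r$ decomposition of $T$ equals $r$. Both are open conditions cut out by the non-vanishing of the polynomial $\det(\sum_k \lambda_k A_k)$ (non-identically zero since $A_1$ is invertible) and of the $r$ non-identically-zero linear forms giving the diagonal entries of $D(\lambda)$; here the fact that none of these forms is identically zero uses $\rk(T) = r$ exactly as in the paragraph preceding Proposition~\ref{prop:constr}. So the generic hypotheses of Proposition~\ref{prop:constr} are met, and we obtain a commuting extension $(Z_1,Z_2,Z_3,Z_4)$ of size $r$ of the tuple $(T_\lambda^{-1}A_1, T_\lambda^{-1}A_2, T_\lambda^{-1}A_3, T_\lambda^{-1}A_4)$.

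Next, I would write each $Z_k$ in block form
\[
Z_k = \begin{pmatrix} T_\lambda^{-1}A_k & B_k \\ C_k & E_k \end{pmatrix}
\]
with $B_k \in M_{n,r-n}(K)$, and compute the upper-left $n \times n$ block of the identity $Z_i Z_j - Z_j Z_i = 0$. This gives
\[
[T_\lambda^{-1}A_i,\, T_\lambda^{-1}A_j] \;=\; B_j C_i - B_i C_j,
\]
so that $\Ima[T_\lambda^{-1}A_i, T_\lambda^{-1}A_j] \subseteq \Ima B_i + \Ima B_j$.

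Finally, applying this for $(i,j)=(2,3)$ and $(i,j)=(2,4)$ and summing yields
\[
\Ima[T_\lambda^{-1}A_2, T_\lambda^{-1}A_3] + \Ima[T_\lambda^{-1}A_2, T_\lambda^{-1}A_4] \;\subseteq\; \Ima B_2 + \Ima B_3 + \Ima B_4,
\]
and since each $B_k$ has only $r-n$ columns, the right-hand side has dimension at most $3(r-n)$, which is the desired bound. There is no real obstacle here: once Proposition~\ref{prop:constr} is available the argument is a short block-matrix computation. The only thing that requires some care is the genericity bookkeeping in the first step, to ensure that the invertibility of $T_\lambda$ and the rank-$r$ condition on $D(\lambda)$ are jointly satisfied on a Zariski-dense subset of $K^4$.
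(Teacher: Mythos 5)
Your proposal is correct and follows the same overall route as the paper: invoke Proposition~\ref{prop:constr} to obtain a size-$r$ commuting extension of the tuple $(T_\lambda^{-1}A_k)$ for a generic $\lambda$, then derive the dimension bound from the existence of that extension. The one difference is that where the paper delegates the second step to Proposition~4 of~\cite{koi24}, you inline it via the block computation $[T_\lambda^{-1}A_i,T_\lambda^{-1}A_j]=B_jC_i-B_iC_j$ and the resulting containment in $\Ima B_2+\Ima B_3+\Ima B_4$, which is a valid and self-contained shortcut.
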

\begin{proof}
It follows from Proposition~\ref{prop:constr} and its proof that for a generic  $\lambda \in K^4$, the 4-tuple 
$(T_{\lambda}^{-1}A_1,\ldots,T_{\lambda}^{-1}A_4)$ has a commuting extension $(Z_1,\ldots,Z_4)$ of size $r$. In particular, we have pointed out
in that proof that $T_{\lambda}$ is invertible for any $\lambda$ avoiding a hypersurface of degree $n$; 
and that the commuting extension can be constructed if $\lambda$ also avoids
the union of $r$ hyperplanes. Let ${\cal V}$ be the union of the hypersurface and of the $r$ hyperplanes.
For  any $\lambda \in K^4 \setminus {\cal V}$, the existence of the commuting extension implies~(\ref{eq:strassen2}) by 
a direct application of Lemma~\ref{prop4}.
\end{proof}
We can now give the proof of Theorem~\ref{th:strassenbis}.
\begin{proof}
The left-hand side of~(\ref{eq:strassen2}) is equal to $\rk M(\lambda)$, where $M(\lambda) \in M_{n,2n}$ has its first $n$ columns equal to $[T_{\lambda}^{-1}A_2,T_{\lambda}^{-1}A_3]$ and its last $n$ columns equal to 
$[T_{\lambda}^{-1}A_2,T_{\lambda}^{-1}A_4]$. 
In Lemma~\ref{lem:strassenbis} we have shown that $\rk M(\lambda) \leq 3(r-n)$ for $\lambda \in K^4  \setminus {\cal V}$.
We would like to show that this implies $\rk M(\lambda^1) \leq 3(r-n)$ where $\lambda_1=(1,0,0,0)$. 
Since  $K^4  \setminus {\cal V}$ is dense in $K^4$, this follows from the lower semicontinuity of matrix rank.
\end{proof}

\section{Uniqueness theorems} \label{sec:uniqueth}

Our results on the uniqueness of tensor decomposition rely on a uniqueness result 
from~\cite{koi24}. We first present this result.

\subsection{Uniqueness of commuting extensions.} \label{sec:uniquext}

We have defined the notion of  {\em commuting extension} in Section~\ref{sec:approach}.
Let us now recall a corresponding uniqueness result from~\cite{koi24}.
Strictly speaking, commuting extensions are never unique~\cite{degani05}. 
For $M \in GL_{r-n}(K)$, consider indeed the map $\rho_M:M_r(K) \rightarrow M_r(K)$ which sends
$Z =
\begin{pmatrix}
A & B\\
C & D
\end{pmatrix}$ to
 \begin{equation} \label{eq:action}
 \rho_M(Z) =
 \begin{pmatrix}
 I_n & 0\\
 0 & M
 \end{pmatrix}^{-1} Z \begin{pmatrix}
 I_n & 0\\
 0 & M
 \end{pmatrix} =
\begin{pmatrix}
A & BM\\
M^{-1}C & M^{-1}DM
\end{pmatrix}\end{equation}
where $I_n$ denotes the identity matrix of size $n$.
If $(Z_1,\ldots,Z_p)$ is a commuting extension of $(A_1,\ldots,A_p)$ then so is $(\rho_M(Z_1),\ldots,\rho_M(Z_p))$.
This follows immediately from the identity  
$\rho_M(ZZ')=\rho_M(Z)\rho_M(Z')$.
We say that a commuting extension of size~$r$ is {\em essentially unique} if it is unique up to this $GL_{r-n}$  action.

Let us fix 3 distinct indices $k,l,m \leq p$.
Say that a tuple $(A_1,\ldots,A_p)$ of matrices of size $n$ satisfies hypothesis $(H_{klm})$
 if the triple $(A_k,A_l,A_m)$ satisfies the following two conditions: 
\begin{itemize}
\item[(i)] The three linear spaces $\Ima [A_k,A_l]$, $\Ima [A_k,A_m]$, $\Ima[A_l,A_m]$ are of dimension $2(r-n)$.
\item[(ii)] The three linear spaces $\Ima [A_k,A_l]+\Ima [A_k,A_m]$, $\Ima [A_l,A_k]+\Ima [A_l,A_m]$ and 
$\Ima [A_m,A_k]+\Ima [A_m,A_l]$ are of dimension $3(r-n)$.
\end{itemize}

\begin{theorem} \label{th:3unique}
Let $(A_1,A_2,A_3)$ be a tuple of matrices of size~$n$ satisfying $(H_{123})$.
This tuple does not have any commuting extension of size less than $r$.
If $(A_1,A_2,A_3)$ has a commuting extension of size $r$, it is essentially unique. Moreover, if a commuting extension of size $r$ exists
in the algebraic closure $\overline{K}$,  there is already one in the ground field $K$.  
\end{theorem}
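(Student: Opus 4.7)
The plan is to work with the block decomposition $Z_i = \begin{pmatrix} A_i & B_i \\ C_i & D_i \end{pmatrix}$ of any commuting extension and extract rigidity from the four blocks of the identity $Z_i Z_j = Z_j Z_i$. The upper-left block gives $[A_i,A_j] = B_j C_i - B_i C_j$, so $\Ima[A_i,A_j] \subseteq \Ima(B_i) + \Ima(B_j)$, a subspace of dimension at most $2(r-n)$. Hypothesis $(H_{123})$(i) forces equality, which then forces each $B_i$ to have full column rank $r-n$ and each $C_i$ to have full row rank $r-n$ (the latter by the symmetric argument on row spaces of the same identity), and makes each sum $\Ima(B_i)+\Ima(B_j)$ direct. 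Hypothesis $(H_{123})$(ii) next makes the three spaces $\Ima(B_1), \Ima(B_2), \Ima(B_3)$ in direct sum of total dimension $3(r-n)$. For the nonexistence of smaller commuting extensions, an extension of size $r' < r$ would have each $B_i$ of width $r'-n$, yielding $\dim \Ima[A_i,A_j] \leq 2(r'-n) < 2(r-n)$, contradicting (i).

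For essential uniqueness, the key observation is that each $\Ima(B_i)$ is intrinsic to $(A_1,A_2,A_3)$: a dimension count inside $\Ima[A_1,A_2]+\Ima[A_1,A_3]$ (of dimension $3(r-n)$ by (ii)) yields
\begin{equation*}
\Ima(B_1) = \Ima[A_1,A_2] \cap \Ima[A_1,A_3],
\end{equation*}
and cyclic analogues for $\Ima(B_2)$ and $\Ima(B_3)$. Given two extensions $(B_i,C_i,D_i)$ and $(\tilde B_i,\tilde C_i,\tilde D_i)$, both families have the same images and full column rank, so there exist $M_i \in GL_{r-n}(K)$ with $\tilde B_i = B_i M_i$. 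The direct sum $\Ima(B_i) \oplus \Ima(B_j)$ uniquely splits $[A_i,A_j]$, revealing that $B_i C_j$ is an intrinsic invariant; thus $\tilde B_i \tilde C_j = B_i C_j$, giving $\tilde C_j = M_i^{-1} C_j$ for every $i \neq j$. Since $C_j$ has full row rank, varying $i$ forces $M_1 = M_2 = M_3 =: M$. For the $D$-part, I would turn to the upper-right block equation $B_i D_j - B_j D_i = A_j B_i - A_i B_j$: the same direct sum decomposition determines $B_i D_j$ as the $\Ima(B_i)$-component of the right-hand side. Under the tilded extension, the right-hand side becomes $(A_j B_i - A_i B_j)M$ (since $\tilde B_i = B_i M$), so $\tilde B_i \tilde D_j = (B_i D_j) M$; combining with $\tilde B_i = B_i M$ and left-cancelling $B_i$ gives $\tilde D_j = M^{-1} D_j M$, completing the description of the $GL_{r-n}$-orbit.

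For the descent from $\overline{K}$ to $K$, the subspaces $\Ima(B_i)$, being intersections of commutator images defined over $K$, are themselves defined over $K$; I would fix $K$-rational bases and thereby choose the $B_i$ over $K$. The equations extracted above then determine the $C_j$ and $D_j$ as solutions of $K$-linear systems with $K$-coefficients, and since these solutions are unique over $\overline{K}$ they must lie in $K$; the remaining lower block equations then hold over $K$ automatically, as they already hold over $\overline{K}$ and involve only $K$-rational entries. The main obstacle I anticipate is step three, extracting $\tilde D_j = M^{-1} D_j M$: the transformations of the $B$'s and $C$'s fall out directly from the invariance of $\Ima(B_i)$ and of $B_i C_j$, but the two-sided conjugation on $D$ only emerges after carefully tracking how the right-hand side $A_j B_i - A_i B_j$ transforms under the change of $B$'s and exploiting the full column rank of $B_i$ for left cancellation; everything else is a careful but routine unpacking of the block commutation identities.
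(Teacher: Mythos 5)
This theorem is quoted from the companion paper [koi24] and the present paper gives no proof of it, so there is no in-paper argument to compare against; I evaluate your proposal on its own.

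Your main line is correct. From the upper-left block of $Z_iZ_j=Z_jZ_i$ you get $[A_i,A_j]=B_jC_i-B_iC_j$, so $\Ima[A_i,A_j]\subseteq\Ima B_i+\Ima B_j$, of dimension at most $2(s-n)$ for an extension of size $s$; this gives nonexistence for $s<r$, and for $s=r$ hypothesis (i) forces each $B_i$ to have full column rank and $\Ima[A_i,A_j]=\Ima B_i\oplus\Ima B_j$; the transpose gives full row rank for the $C_i$; hypothesis (ii) forces $\Ima B_1\oplus\Ima B_2\oplus\Ima B_3$, and the intrinsic identity $\Ima B_1=\Ima[A_1,A_2]\cap\Ima[A_1,A_3]$ follows. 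Writing $\tilde B_i=B_iM_i$ and using the uniqueness of the $\Ima B_i\oplus\Ima B_j$ splitting of $[A_i,A_j]$ to get $\tilde B_i\tilde C_j=B_iC_j$, then varying $i$ against the full row rank of the $C_j$ to force $M_1=M_2=M_3=M$, is exactly right, and your treatment of the upper-right block does yield $\tilde D_j=M^{-1}D_jM$ as you outline; the step you flagged as delicate in fact goes through cleanly.

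The one genuine gap is in the descent argument. You propose choosing the three $B_i$ over $K$ independently (one matrix per $K$-rational space $\Ima B_i$) and then solving for the $C_j, D_j$. But your own uniqueness argument shows that once $B_1$ is fixed, $B_2$ and $B_3$ are forced: an independent choice will in general not be consistent. Concretely, if $B_i^0=B_iN_i$ for $N_i\in GL_{r-n}(\overline{K})$, then the equation $[A_1,A_2]=B_2^0C_1-B_1^0C_2$ forces $C_1=N_2^{-1}C_1^{\overline{K}}$ while $[A_1,A_3]$ forces $C_1=N_3^{-1}C_1^{\overline{K}}$, and these are incompatible unless $N_2=N_3$. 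So "the solutions are unique over $\overline{K}$" hides the issue that with free choices of the $B_i$ there may be no solution at all. The fix: pick only $B_1$ over $K$. Then the $\Ima B_1$-components of $[A_1,A_2]$ and $[A_1,A_3]$ determine $C_2,C_3$ by $K$-rational formulas (left-cancelling $B_1$); the $\Ima B_2$-component of $[A_2,A_3]$ gives $B_2C_3$, and since $C_3$ has full row rank and is $K$-rational this determines $B_2$ over $K$, likewise $B_3$; then $C_1$ from $B_2C_1$, and the $D_j$ from the upper-right block. Because the $\overline{K}$-extension (after applying a suitable $\rho_M$ so its $B_1$-block equals the chosen one) satisfies these same uniquely solvable $K$-linear systems, all its blocks must be $K$-rational, and the lower block equations hold automatically. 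So the strategy is correct, but the descent needs this one ordering: fix $B_1$ only, then derive everything else.
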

Uniqueness results for more than 3 matrices can be obtained under various combinations of the $(H_{klm})$.
For instance: 
\begin{theorem} \label{th:uniquext}
Consider a tuple  $(A_1,\ldots,A_p)$ of matrices of size~$n$ with $p \geq 3$ such that there is for all $2 \leq l \leq p$ some $ m \not \in \{1,l\}$ satisfying hypothesis $(H_{1lm})$. 
If $(A_1,\ldots,A_p)$ has a commuting extension of size $r$, it is essentially unique. 
Moreover, if a commuting extension of size $r$ exists
in the algebraic closure $\overline{K}$,  there is already one in the ground field $K$.
\end{theorem}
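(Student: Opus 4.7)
The plan is to reduce the essential uniqueness for $p$ matrices to Theorem~\ref{th:3unique} (the case $p = 3$) by repeatedly applying it to the triples $(A_1, A_l, A_{m_l})$ provided by the hypothesis. Let $(Z_1, \ldots, Z_p)$ and $(Z'_1, \ldots, Z'_p)$ be two commuting extensions of size $r$. First I would pick any $l_0 \in \{2, \ldots, p\}$ together with an $m_0 \notin \{1, l_0\}$ such that $(H_{1, l_0, m_0})$ holds. Applying Theorem~\ref{th:3unique} to the triple $(A_1, A_{l_0}, A_{m_0})$, the two triples $(Z_1, Z_{l_0}, Z_{m_0})$ and $(Z'_1, Z'_{l_0}, Z'_{m_0})$ must be related by some $\rho_{M_0}$ with $M_0 \in GL_{r-n}(K)$. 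Replacing $(Z'_i)_{i=1}^{p}$ by its image under $\rho_{M_0^{-1}}$, I may assume $Z'_i = Z_i$ for $i \in \{1, l_0, m_0\}$.

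For each remaining index $l$, the hypothesis provides $m_l \notin \{1, l\}$ with $(H_{1, l, m_l})$, and Theorem~\ref{th:3unique} applied to $(A_1, A_l, A_{m_l})$ yields some $N_l$ with $\rho_{N_l}(Z_i) = Z'_i$ for $i \in \{1, l, m_l\}$; in particular $N_l$ fixes $Z_1$ under the $\rho$-action. The goal is to deduce that in fact $Z'_l = Z_l$, so that the single element $M_0$ from the first step witnesses essential uniqueness for the entire tuple.

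The main obstacle is precisely this last step: ruling out that $\rho_{N_l}$ moves $Z_l$ nontrivially. To handle it, I would exploit the already-normalized slices: both $(Z_1, Z_{l_0}, Z_{m_0}, Z_l)$ and $(Z_1, Z_{l_0}, Z_{m_0}, Z'_l)$ are commuting extensions of $(A_1, A_{l_0}, A_{m_0}, A_l)$. Using the explicit formula~(\ref{eq:action}) for the $\rho$-action together with both $(H_{1, l_0, m_0})$ and $(H_{1, l, m_l})$, I expect to show that the stabilizer of $(Z_1, Z_{l_0}, Z_{m_0})$ already fixes any compatible fourth extension slice; making this precise will likely require revisiting the proof of Theorem~\ref{th:3unique} to extract a pointwise rigidity statement sharper than its bare essential-uniqueness formulation.

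Finally, for the ground-field descent, I would invoke the corresponding claim in Theorem~\ref{th:3unique} applied to the triple $(A_1, A_{l_0}, A_{m_0})$: if a size-$r$ commuting extension exists over $\overline{K}$, then one exists already over $K$ for this triple. Using the essential uniqueness just proved, after renormalizing the $\overline{K}$-extension of the full tuple by a suitable $\rho_M$ to agree with the $K$-extension on $\{1, l_0, m_0\}$, every Galois conjugate of the renormalized $\overline{K}$-extension is again a commuting extension agreeing with the $K$-extension on these three indices, hence coincides with it by essential uniqueness; this forces each remaining slice to be $\mathrm{Gal}(\overline{K}/K)$-invariant and therefore defined over $K$.
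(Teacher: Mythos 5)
The paper does not actually prove Theorem~\ref{th:uniquext} here; it is imported from the companion paper~\cite{koi24}, so there is no proof in this document to compare against. Judged on its own terms, your reduction strategy is the natural one and the first half is correct: pick a triple $(A_1,A_{l_0},A_{m_0})$ with $(H_{1,l_0,m_0})$, apply Theorem~\ref{th:3unique}, and normalize so that the two candidate extensions agree on the indices $\{1,l_0,m_0\}$; then for any other $l$, apply Theorem~\ref{th:3unique} to the triple $(A_1,A_l,A_{m_l})$ to get some $N_l\in GL_{r-n}$ carrying one restricted triple to the other.

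The gap you flag yourself --- showing that this $N_l$ must act trivially --- is genuine as written: you only \emph{conjecture} a ``pointwise rigidity'' statement and say it will ``likely require revisiting the proof of Theorem~\ref{th:3unique}.'' In fact, that step does not require re-opening the black box, nor does it require tracking the stabilizer of the whole triple $(Z_1,Z_{l_0},Z_{m_0})$; the stabilizer of $Z_1$ alone is already trivial. Write $Z_i=\begin{pmatrix}A_i & B_i\\ C_i & D_i\end{pmatrix}$. Commutation of $Z_1$ and $Z_{l_0}$ gives $[A_1,A_{l_0}]=B_{l_0}C_1-B_1C_{l_0}$, whose column space lies in $\Ima B_1+\Ima B_{l_0}$, a space of dimension at most $2(r-n)$. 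The hypothesis $\rk[A_1,A_{l_0}]=2(r-n)$ from $(H_{1,l_0,m_0})$ therefore forces $B_1$ (and $B_{l_0}$) to have full column rank $r-n$. Now $\rho_{N_l}(Z_1)=Z_1$ means $B_1N_l=B_1$ by~(\ref{eq:action}), i.e.\ $B_1(N_l-I_{r-n})=0$, and full column rank of $B_1$ gives $N_l=I_{r-n}$, hence $Z'_l=Z_l$ as desired. With this rigidity in hand, your Galois-descent paragraph closes as you describe (after the renormalization, each $\sigma\in\mathrm{Gal}(\overline K/K)$ sends the normalized extension to another commuting extension of the same $K$-defined tuple that fixes the indices $\{1,l_0,m_0\}$, hence equals it). So the plan is sound, but the submitted proof is incomplete at exactly the step you identify; it also over-engineers the fix by looking at the stabilizer of three matrices and at quadruples $(Z_1,Z_{l_0},Z_{m_0},Z_l)$, when the rank condition on a single commutator $[A_1,A_{l_0}]$ already does the job.
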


\subsection{The new uniqueness theorem.} \label{sec:genuniquestatement}

In this section we state our main uniqueness theorem for tensor decomposition. This is a generalization of the uniqueness theorem in the introduction (Theorem~\ref{th:unique_simple}), and it is proved in Section~\ref{sec:uniqueproof}.
\begin{theorem} \label{th:unique}
Let $T=\sum_{i=1}^r u_i \otimes v_i \otimes w_i$ be a tensor of format $n \times n \times p$ with $p\geq 4$. 
We assume that $T$ satisfies the following properties:
\begin{enumerate}
\item The $w_i$ are pairwise linearly independent.
\item
The span $\langle T_1,\ldots,T_p\rangle$ of the slices of $T$ contains an invertible matrix.
\item 
There is an invertible matrix $A$ in this span such that the tuple of matrices  $(A^{-1}T_2,\ldots,A^{-1}T_p)$ 
satisfies the hypothesis 
of the uniqueness theorem for commuting extensions.\footnote{Recall that the hypothesis of Theorem~\ref{th:uniquext} is as follows: for all $2 \leq l \leq p-1$ there exists some $m {\not \in} \{1,l\}$ such
that hypothesis $(H_{1lm})$ is satisfied by this tuple. The hypotheses $(H_{klm})$ are defined in Section~\ref{sec:uniquext}.}
\end{enumerate}
Then $\rk(T)=r$, and  the decomposition of $T$ as a sum of~$r$ rank one tensors is essentially unique. 
\end{theorem}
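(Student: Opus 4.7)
My approach mirrors the two-step algorithm of Section~\ref{sec:approach}: first associate a commuting extension of $(A^{-1}T_2,\ldots,A^{-1}T_p)$ to any rank-$r$ decomposition of $T$ and appeal to Theorem~\ref{th:uniquext}, then decompose the resulting $r\times r\times p$ extended tensor by Theorem~\ref{th:indordi} and read off an essentially unique decomposition of $T$ through Corollary~\ref{cor:constr}. As a warm-up I would establish $\rk(T)=r$: the given decomposition provides $\rk(T)\le r$, and for the matching lower bound I use condition~(i) of some $(H_{1lm})$ from hypothesis~3, namely $\dim\Ima[A^{-1}T_{l+1},A^{-1}T_{m+1}]=2(r-n)$. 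Left-multiplying this commutator by the invertible matrix $A$ preserves its rank, so Theorem~\ref{th:strassen} applied to the 3-slice subtensor with slices $(A,T_{l+1},T_{m+1})$---of rank at most $\rk(T)$ by Corollary~\ref{cor:add}---yields $\rk(T)\ge n+(r-n)=r$.

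Given any rank-$r$ decomposition $T=\sum_{i=1}^r u'_i\otimes v'_i\otimes w'_i$, Proposition~\ref{prop:constr} produces a commuting extension $(Z_1,\ldots,Z_p)$ of $(A^{-1}T_1,\ldots,A^{-1}T_p)$ of size~$r$ with diagonalizable $Z_k$, provided the compatibility condition $\rk D=r$ holds for $D=\sum_k\alpha_k D_k$ (where $A=\sum_k\alpha_k T_k$ and $D_{ii}=\alpha\cdot w'_i$). Assuming this compatibility, Theorem~\ref{th:uniquext} applied to $(A^{-1}T_2,\ldots,A^{-1}T_p)$ under hypothesis~3 forces $(Z_2,\ldots,Z_p)$ to be essentially unique, up to the $\rho_M$ action of $GL_{r-n}$ described in Section~\ref{sec:uniquext}; after a harmless re-indexing of the slices so that $A$ has a nonzero coefficient on $T_1$, the relation $\alpha_1^{-1}(I_r-\sum_{k\ge 2}\alpha_k Z_k)$ recovers $Z_1$ uniquely from $(Z_2,\ldots,Z_p)$.

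To close, I form the $r\times r\times p$ tensor $T^{(e)}$ with slices $(Z_1,\ldots,Z_p)$ and apply Theorem~\ref{th:indordi}: $T^{(e)}$ admits a rank-$r$ decomposition, essentially unique in Jennrich's sense thanks to the pairwise linear independence of the $w_i$ (hypothesis~1). Corollary~\ref{cor:constr} extracts a decomposition of $T$ using only the first $n$ coordinates of the decomposition vectors of $T^{(e)}$ (plus a single multiplication by $A^T$ on the $u_i$ side). A direct computation shows that the $\rho_M$ action is conjugation by the block matrix $\diag(I_n,M)$, which preserves the first $n$ coordinates of every $u'_i$ and every $v'_i$; hence $\rho_M$-equivalent commuting extensions induce identical decompositions of $T$. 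Putting everything together, any two rank-$r$ decompositions of $T$ feed into essentially the same commuting extension of $(A^{-1}T_2,\ldots,A^{-1}T_p)$ and therefore into essentially the same decomposition of $T$.

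\textbf{Main obstacle.} The delicate step is the compatibility condition $\alpha\cdot w'_i\neq 0$ for all $i$. For a fixed decomposition it defines a Zariski-open dense set of admissible $A$ in the span of slices, but hypothesis~3 pins $A$ down, so the real question is whether the condition holds at this specific $A$ for \emph{every} rank-$r$ decomposition of $T$. I would attempt to show that a vanishing $\alpha\cdot w'_{i_0}$ allows removal of the $i_0$-th rank-one term to produce a rank-$(r-1)$ tensor in whose slice span $A$ still lies, and that this conflicts with the Strassen-type lower bound extracted from hypothesis~3 (in the sharper form of Theorem~\ref{th:strassenbis} if needed). Should that direct argument fall short on exceptional decompositions, a density or perturbation argument on the auxiliary element of the span would take over.
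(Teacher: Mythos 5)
Your overall architecture matches the paper's: compute the rank via Strassen, push each rank-$r$ decomposition through Proposition~\ref{prop:constr} to a diagonalizable commuting extension of $(A^{-1}T_2,\ldots,A^{-1}T_p)$, invoke essential uniqueness of the extension (Theorem~\ref{th:uniquext}), decompose the extended $r\times r\times p$ tensor by Jennrich/Theorem~\ref{th:indordi}, and pull back through Corollary~\ref{cor:constr}, checking that the $\rho_M$ action fixes the first $n$ coordinates. You have also correctly pinpointed the delicate step: Proposition~\ref{prop:constr} needs the diagonal matrix $D=\sum_k\alpha_kD_k$ (i.e.\ the entries $\alpha\cdot w'_i$) to be nonsingular for the decomposition being analyzed, and there is no a priori reason this holds at the particular $A$ supplied by hypothesis~3 — nor, just as importantly, for an \emph{arbitrary} second decomposition $T=\sum u_i^{(2)}\otimes v_i^{(2)}\otimes w_i^{(2)}$ whose vectors $w_i^{(2)}$ are not controlled at all.

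The gap is in your primary attempt to close this. Removing a term with $\alpha\cdot w'_{i_0}=0$ produces a tensor $\tilde T$ of rank $\le r-1$ whose slices differ from those of $T$ by the rank-one matrices $w'_{i_0k}\,u'_{i_0}{v'_{i_0}}^T$, and indeed $A$ still lies in $\langle\tilde T_1,\ldots,\tilde T_p\rangle$ because those corrections cancel. But after conjugating by $A^{-1}$, each slice is perturbed by $c_k R$ for a \emph{fixed} rank-one $R$, so commutators change only by the rank-$\le 2$ matrices $[A^{-1}\tilde T_k,R]$. Strassen applied to $\tilde T$ gives $\rk[A^{-1}\tilde T_l,A^{-1}\tilde T_m]\le 2(r-1-n)$, and transferring back you only obtain $\rk[A^{-1}T_l,A^{-1}T_m]\le 2(r-n)+2$; likewise Theorem~\ref{th:strassenbis} only yields $\dim(\Ima[\cdot]+\Ima[\cdot])\le 3(r-n)+1$. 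Neither contradicts the hypotheses $(H_{1lm})$ (the losses $+2$ and $+1$ swallow the desired gap of $-2$ and $-3$), so no contradiction results. The Strassen route as stated does not rule out the bad case.

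Your fallback — perturb the auxiliary element of the span — is exactly what the paper does, and you should develop that rather than the Strassen argument. Concretely, the paper first proves the theorem in Section~\ref{sec:0unique} under the stronger hypotheses ``$A=T_1$ is invertible'' and ``$w_{i1}\neq 0$ for all $i$'' in the given decomposition, so that $D_1$ trivially has rank~$r$. Then Section~\ref{sec:genunique} removes those hypotheses: Lemma~\ref{lem:combi} shows that if one replaces the first slice by a generic linear combination $T_\lambda=\sum_k\lambda_kT_k$ (with $\lambda_1\neq 0$), the resulting tensor $T'$ has rank $r$, admits a decomposition with $w'_{i1}\neq 0$, and essential uniqueness of $T'$ transfers back to $T$ by an explicit bookkeeping of the new $w$-coordinates; Lemma~\ref{lem:unique} uses lower semicontinuity of matrix rank (together with Theorems~\ref{th:strassen} and~\ref{th:strassenbis} for the matching upper bounds) to show that the $(H_{1lm})$ hypotheses for the tuple $T_\lambda^{-1}T_{k}$ also hold for generic $\lambda$. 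Writing out that two-stage reduction cleanly — and tracking the scalar factors $\alpha_i,\beta_i,\gamma_i$ when matching up the two Jennrich decompositions rather than asserting the induced decompositions of $T$ are ``identical'' — is what is needed to turn your outline into a complete proof.
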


The hypotheses of this theorem are a mix of conditions about the slices of the tensor (hypothesis 2 and 3) and about the
vectors in a decomposition of $T$ (hypothesis 1).
One can show that hypothesis 1 in this theorem can be replaced by:
\begin{itemize}
\item[(1')] The linear subspace of matrices spanned by the slices $T$ is of dimension~$r$.
\end{itemize}
Indeed,  by Corollary~\ref{cor:dimspan} hypothesis 1' implies hypothesis 1 (see also  the proof of Corollary~\ref{cor:jennrich} in Section~\ref{sec:unique}).
Working with hypothesis 1' is therefore less general, but perhaps more aesthetically pleasing since the hypotheses of the resulting theorem are purely about the tensor slices.

\subsection{Jennrich's uniqueness theorem and the simultaneous diagonalization algorithm.}
  \label{sec:unique}

Our presentation of Jennrich's uniqueness theorem and of the corresponding decomposition algorithm is based 
on Moitra's book~\cite{moitra18}. 
This uniqueness theorem can be traced back (in a slightly less general form)   to Harshman~\cite{harshman70}, where it is attributed to Jennrich. 
Harshman's paper also describes a decomposition algorithm, and it is again attributed to Jennrich. That algorithm is 
{\em not} based on simultaneous diagonalization, or on the uniqueness proof. The name ``Jennrich's algorithm'' is sometimes used (including in the present paper) for the simultaneous diagonalization method in Algorithm~\ref{algo:jennrich} and in~\cite{leurgans93,moitra18}, 
but it does not appear to be historically correct.
Another version of the uniqueness theorem appears in a second paper by Harshman~\cite{harshman72}, and the proof seems closer to a simultaneous diagonalization argument. That second paper does not include any algorithmic result, though.
\begin{theorem}[Jennrich's uniqueness theorem] \label{th:jennrich}
Let $T=\sum_{i=1}^r u_i \otimes v_i \otimes w_i$ be a tensor of format $m \times n \times p$ such that:
\begin{itemize}
\item[(i)] The vectors $u_i$ are linearly independent.
\item[(ii)] The vectors $v_i$ are linearly independent.
\item[(iii)] Every pair of vectors in the set $\{w_i;\ 1\leq i \leq r\}$ is linearly independent.
\end{itemize}
Then $\rk(T)=r$, and the decomposition of $T$ as a sum of $r$ rank one tensors is essentially unique.
\end{theorem}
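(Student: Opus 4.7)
The plan is to use the classical simultaneous-diagonalization trick (the "Jennrich trick"), which is an algorithmic form of Theorem~\ref{th:indordi}. Without loss of generality I would work in the square case $m=n=r$ with the matrices $U,V\in M_r(K)$ of rows $u_i,v_i$ being invertible; the rectangular case (forced by hypotheses (i) and (ii) to satisfy $r\le m,n$) reduces to this by standard projection arguments on the column and row spaces of a generic slice combination, once the square case is settled.

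In the square case, Proposition~\ref{prop:slices} gives $T_k=U^TD_kV$ with $D_k=\diag(w_{1k},\ldots,w_{rk})$. For $a\in K^p$ set $M_a:=\sum_k a_k T_k=U^TD_aV$ where $D_a=\diag(\langle a,w_1\rangle,\ldots,\langle a,w_r\rangle)$. I would pick $a,b\in K^p$ generically so that every $\langle b,w_i\rangle$ is nonzero (possible because each $w_i\ne 0$, which follows from hypothesis (iii)); then $M_b$ is invertible and
\[
M_aM_b^{-1}=U^T(D_aD_b^{-1})U^{-T}.
\]
So $M_aM_b^{-1}$ is diagonalizable with eigenvalues $\lambda_i:=\langle a,w_i\rangle/\langle b,w_i\rangle$ and with the columns $u_i$ of $U^T$ as corresponding eigenvectors.

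The heart of the argument---and the step I expect to be the main obstacle---is showing that the $\lambda_i$ are pairwise distinct for generic $(a,b)$. The equality $\lambda_i=\lambda_j$ amounts to vanishing of the bilinear polynomial $\langle a,w_i\rangle\langle b,w_j\rangle-\langle a,w_j\rangle\langle b,w_i\rangle$ in $(a,b)$, which is identically zero precisely when $w_i,w_j$ are linearly dependent. Hypothesis (iii) rules this out, so for generic $(a,b)$ all $r$ eigenvalues are distinct. Distinct eigenvalues force one-dimensional eigenspaces, so the $u_i$ are recovered from $M_aM_b^{-1}$ up to permutation and individual nonzero scalars; the symmetric argument applied to $M_b^{-1}M_a$ (or to the transposed tensor) recovers the $v_i$ up to permutation and scalars, and matching shared eigenvalues $\lambda_i$ pairs each $u_i$ with the correct $v_i$.

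Finally, once the $u_i,v_i$ are pinned down up to individual scalars, the equations $T_k=\sum_i w_{ik}\,u_iv_i^T$ determine the $w_i$ uniquely, because the rank-one matrices $u_iv_i^T$ are linearly independent in $M_r(K)$ (a direct consequence of $U,V$ being invertible); the remaining scalar ambiguities for $u_i,v_i$ are absorbed into $w_i$ and leave the rank-one terms $u_i\otimes v_i\otimes w_i$ unchanged. A hypothetical decomposition of $T$ into $r'<r$ rank-one terms would present $M_a$ as a product of three matrices of inner dimension $r'$, contradicting the fact that $\rk(M_a)=r$ for generic $a$. This yields both $\rk(T)=r$ and essential uniqueness in one stroke.
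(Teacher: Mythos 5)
The paper does not actually prove Theorem~\ref{th:jennrich}; it cites it as a classical result (attributed to Harshman/Jennrich, with the statement and algorithm following Moitra~\cite{moitra18} and~\cite{leurgans93}) and only records the corresponding decomposition procedure (Algorithm~\ref{algo:jennrich}), which is moreover presented solely in the square case $m=n=r$. Your sketch is the standard simultaneous-diagonalization proof of correctness underlying that algorithm, so it is the natural argument to supply here, but it has one compressed step worth flagging.

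The step concerns uniqueness. You correctly observe that for generic $(a,b)$ the intrinsic matrix $M_aM_b^{-1}=U^T(D_aD_b^{-1})U^{-T}$ has $r$ distinct eigenvalues with the columns of $U^T$ as eigenvectors, and that the $w_i$ are then linearly determined. But you never say why a competing decomposition $T=\sum_{j=1}^{r}u'_j\otimes v'_j\otimes w'_j$ must present the \emph{same} eigenvector data; nothing in the hypotheses says the $u'_j$ are linearly independent or that $\langle b,w'_j\rangle\neq 0$. The missing observation is that this is forced: writing $M_b=U'^TD'_bV'$ and using $\rk(M_b)=r$ gives $\rk(U')=\rk(V')=\rk(D'_b)=r$, and invertibility of $M_a$ then forces $D'_a$ invertible as well; hence $M_aM_b^{-1}=U'^T(D'_aD'^{-1}_b)U'^{-T}$ also diagonalizes with the $u'_j$ as eigenvectors, the eigenvalues being the same (already known to be distinct), so uniqueness of the eigendecomposition matches each $u'_j$ with some $c_iu_i$ up to permutation, and symmetrically for the $v'_j$. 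Your closing claim that the rank contradiction "yields both $\rk(T)=r$ and essential uniqueness in one stroke" oversells it: the rank lower bound is indeed immediate, but uniqueness needs this extra paragraph. The rectangular-to-square reduction is also left as a gesture, though since the paper itself works only with $m=n=r$ that is a reasonable scope restriction here.
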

In~\cite{moitra18}, Jennrich's uniqueness theorem as stated in Theorem~\ref{th:jennrich} and the corresponding decomposition algorithm are attributed to~\cite{leurgans93} (their work was apparently independent from Harshman and Jennrich's).
A comparison between Kruskal's uniqueness theorem~\cite{kruskal77} and the earlier work by Harshman and Jennrich can be found in~\cite{ten09}.  As pointed out in Section~\ref{sec:kruskal}, Theorem~\ref{th:jennrich} is a special case of Kruskal's theorem.

\begin{corollary} \label{cor:jennrich}
Let $T$ be an order 3 tensor with slices $T_1,\ldots,T_p$, and $r=\rk(T)$.
Let $A$ be in any matrix belonging to the span $\langle T_1,\ldots,T_k \rangle$ of the slices.
 Then  $r \geq \rk(A)$;  if $r=\rk(A)$,
the decomposition of $T$ as a sum of~$r$  tensors of rank~one is essentially unique whenever at least one of these two conditions
holds:
\begin{enumerate}
\item  There exists a decomposition $T=\sum_{i=1}^r u_i \otimes v_i \otimes w_i$ 
where every pair of vectors in the set $\{w_i;\ 1\leq i \leq r\}$ is linearly independent.
\item The linear subspace of matrices spanned by the slices of $T$ is of dimension~$r$.
\end{enumerate}
\end{corollary}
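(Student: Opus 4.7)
The plan is to exhibit a rank-$r$ decomposition of $T$ satisfying the three hypotheses of Jennrich's theorem (Theorem~\ref{th:jennrich}), so that the essential uniqueness follows at once. To bridge the gap between statements about the slices and statements about the vectors in a decomposition, I would rely on Proposition~\ref{prop:slices}: fixing any rank-$r$ decomposition $T=\sum_{i=1}^r u_i\otimes v_i\otimes w_i$, the slices factor as $T_k=U^TD_kV$, where $U$ (resp.\ $V$) has the $u_i$ (resp.\ $v_i$) as its $r$ rows and $D_k=\diag(w_{1k},\ldots,w_{rk})$.

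First I would establish the inequality $r\geq\rk(A)$. Any matrix $A=\sum_k\lambda_k T_k$ in the span of the slices factors as $A=U^T\bigl(\sum_k\lambda_k D_k\bigr)V$, where $U^T$ has $r$ columns and $V$ has $r$ rows; hence $\rk(A)\leq r$. Next, assuming $r=\rk(A)$, the same factorization forces both $U$ and $V$ to have rank exactly $r$, so the $u_i$ are linearly independent and the $v_i$ are linearly independent. This gives hypotheses (i) and (ii) of Jennrich's theorem for the chosen decomposition.

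To handle condition~1, I would simply pick the decomposition guaranteed by that hypothesis, for which the $w_i$ are pairwise linearly independent. Then hypothesis (iii) of Theorem~\ref{th:jennrich} is also satisfied, and essential uniqueness of the rank-$r$ decomposition follows directly. To handle condition~2, I would invoke Corollary~\ref{cor:dimspan}: because $T$ admits a rank-$r$ decomposition and the span of its slices has dimension exactly $r$, the $w_i$ in every such decomposition are automatically pairwise linearly independent. This reduces condition~2 to condition~1 and the previous argument applies.

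No serious obstacle appears in this plan; it is essentially a bookkeeping exercise combining Proposition~\ref{prop:slices}, Corollary~\ref{cor:dimspan}, and Theorem~\ref{th:jennrich}. The only point requiring a moment of care is the rank argument showing that $\rk(A)=r$ forces $U$ and $V$ to have full column rank $r$; this rests on the $r\times r$ bottleneck in the factorization $A=U^TDV$ and the elementary fact that the rank of a product cannot exceed the rank of any factor.
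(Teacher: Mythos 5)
Your proposal is correct and follows essentially the same route as the paper: factor a linear combination of slices as $A=U^TDV$ via Proposition~\ref{prop:slices} to get $r\geq\rk(A)$, deduce that $U$ and $V$ have rank $r$ when equality holds, then invoke Theorem~\ref{th:jennrich} directly under condition~1 and reduce condition~2 to condition~1 via Corollary~\ref{cor:dimspan}. No meaningful differences from the paper's argument.
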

Note that the second condition can only be satisfied if $p \geq r$. 
We have shown in Corollary~\ref{cor:dimspan} that the second condition implies the first one.
\begin{proof}
Consider any decomposition $T=\sum_{i=1}^r u_i \otimes v_i \otimes w_i$.
By~(\ref{eq:slices}) we obtain  from this decomposition 
that $A=U^TDV$ where $D$ is a diagonal matrix of size $r$,
hence $r \geq \rk(A)$.

Assume now that $r=\rk(A)$. It follows again from $A=U^TDV$ that $U$ and $V$ are of rank $r$,
hence  the first two conditions in Theorem~\ref{th:jennrich} are met.
If we assume that the third condition is satisfied as well, we are done.
If condition 2 of the Corollary is satisfied, we are also done since condition~1 must hold by Corollary~\ref{cor:dimspan}.
\end{proof}

The proof of Theorem~\ref{th:jennrich} is algorithmic. The resulting algorithm is based on matrix diagonalization, and appears below as Algorithm~\ref{algo:jennrich}. We present it only  in the special case $m=n=r$,
which is the only one needed in this paper. 
\begin{algorithm} \label{algo:jennrich}
\SetAlgoLined
\nonl \textbf{Input:} a tensor $T \in K^{n \times n \times p}$ satisfying the conditions of Theorem~\ref{th:jennrich} for $r=n$.\\ 
\nonl \textbf{Output:} a decomposition $T=\sum_{i=1}^n u_i \otimes v_i \otimes w_i$.

Compute two random random linear combinations 
$$T^{(a)}=\sum_{k=1}^p a_k T_k,\ T^{(b)}=\sum_{k=1}^p b_k T_k$$
of the slices of $T$.\\
 Compute the eigenvalues $\lambda_1,\ldots,\lambda_n$ and eigenvectors $u_1,\ldots,u_n$ of $T^{(a)}{T^{(b)}}^{-1}$.\\
Compute the eigenvalues $\mu_1,\ldots,\mu_n$ and eigenvectors $v_1,\ldots,v_n$ of $({T^{(a)}}^{-1}T^{(b)})^T$.\\
Reorder the eigenvectors and their eigenvalues  to make sure that the corresponding eigenvalues are reciprocal (i.e., $\lambda_i \mu_i=1$).\\
Solve for $w_i$ in the linear system $T=\sum_{i=1}^n u_i \otimes v_i \otimes w_i$, output this decomposition.
\caption{decomposition by simultaneous diagonalization (sometimes called "Jennrich's algorithm") for tensors of rank $n$.}
\end{algorithm}

An analysis of 
this algorithm can be found in~\cite{moitra18}, where it is called ``Jennrich's algorithm.'' In particular, it can be shown that with high probability over the choice of the coefficients $a_k,b_k$, the matrices at steps 2 and 3 have $n$ distinct eigenvalues each.\footnote{If $K$ is finite, this field should be large enough compared to $n$ in order to apply the Schwartz-Zippel Lemma~\cite{Schw,zippel}. If $K$ is too small,  we can take the coefficients $a_k,b_k$ in a field extension.}  Moreover, these eigenvalues are reciprocal (refer to step 4). In the presentation in~\cite{moitra18}, the computation of the
matrix inverses ${T^{(a)}}^{-1}$, ${T^{(b)}}^{-1}$ is replaced by the computation of the Moore-Penrose inverses.
 As in Theorem~\ref{th:unique}, this allows the treatment of the case $r \leq n$, and of the decomposition 
of rectangular tensors (of format $m \times n \times p$). A presentation which does not appeal explicitly to the Moore-Penrose inverse can be found in~\cite{bhaskara14}.
An optimized version of 
the simultaneous diagonalization algorithm (and a detailed complexity analysis) for the special case of 
symmetric tensors can be found in~\cite{KS23b,KS24}.
\begin{remark} \label{rem:jennrich}
Algorithm~\ref{algo:jennrich} runs in polynomial time in the bit model of computation when the vectors $u_i,v_i,w_i$ in the decomposition of $T$ have rational entries. Indeed, the $u_i$ and $v_i$ are the eigenvectors of the matrices that 
we diagonalize at steps 2 and 3. Since these matrices have rational entries, their  eigenvalues  must 
 be rational as well and they can be computed in polynomial time (the rational roots of a polynomial in $\mathbb{Q}[X]$ can be computed in polynomial time).
Then we can compute the eigenvectors by solving a linear system, and it is clear that the rest of the algorithm also runs in polynomial time in the bit model.
\end{remark}


 

\subsection{Kruskal's uniqueness theorem.} \label{sec:kruskal}

This uniqueness theorem is based on the following notion of {\em K-rank} of a set of vectors.
\begin{definition}
The Kruskal rank, or K-rank, of a set of vectors $\{u_1,\ldots,u_r\} \subseteq K^n$ is the largest integer $k$ such that every
subset of $k$ vectors in this set is linearly independent.
\end{definition}

\newpage
\begin{theorem}[Kruskal's uniqueness theorem~\cite{kruskal77}] \label{th:kruskal}
For $T=\sum_{i=1}^r u_i \otimes v_i \otimes w_i$, denote by $k_u,k_v$ and $k_w$ the respective K-ranks of the sets
of vectors $\{u_1,\ldots,u_r\}$,  $\{v_1,\ldots,v_r\}$, $\{w_1,\ldots,w_r\}$.
If $2r+2 \leq k_u+k_v+k_w$ then $\rk(T)=r$ and the decomposition of $T$ as a sum of $r$ tensors of rank 1 is essentially unique.
\end{theorem}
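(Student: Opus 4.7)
The plan is to follow Kruskal's original combinatorial strategy, which factors cleanly into a tensor-side reduction exploiting the slice structure and a purely combinatorial ``Permutation Lemma.'' Suppose $T$ admits a second decomposition $T=\sum_{i=1}^{r'}\tilde u_i \otimes \tilde v_i \otimes \tilde w_i$ with $r' \leq r$; the goal is to prove $r'=r$ and to match the two families of rank-one summands up to permutation and rescaling. By~(\ref{eq:slices}), both decompositions express every slice as $T_k = U^T D_k V = \tilde U^T \tilde D_k \tilde V$, with $D_k = \diag(w_{1k},\ldots,w_{rk})$ and $\tilde D_k$ defined analogously from the $\tilde w_i$.

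First I would study arbitrary linear combinations of slices. For $z \in K^p$, set $T(z)=\sum_k z_k T_k = U^T \diag(Wz) V$, where $W$ denotes the $r\times p$ matrix with rows $w_i^T$. Using the Kruskal hypotheses, and in particular $k_u+k_v \geq r+1$ (which follows from $k_u+k_v+k_w \geq 2r+2$ and $k_w \leq r$), a short minor-counting argument controls $\rk T(z)$ in terms of the number of zero entries of $Wz$: roughly, the nonzero entries select a submatrix of $U$ and of $V$ with enough rows to be of full row rank. Comparing this rank formula with the one coming from the second decomposition, $T(z) = \tilde U^T \diag(\tilde Wz) \tilde V$, yields for every $z$ a quantitative relation between the zero patterns of $Wz$ and $\tilde Wz$.

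Second, I would feed this relation into Kruskal's Permutation Lemma: if $W$ and $\tilde W$ have respectively $r$ and $r' \leq r$ rows and their products with every vector $z$ satisfy a support-comparison controlled by $k_w$, then $r'=r$ and the rows of $\tilde W$ are, up to permutation and nonzero rescaling, the rows of $W$. By the symmetry of the hypothesis and of the tensor in the three factors, one applies the same lemma to recover $U$ from $\tilde U$ and $V$ from $\tilde V$. A short bookkeeping step then checks that the three permutations produced this way coincide, so that the rank-one triples $u_i \otimes v_i \otimes w_i$ themselves match up.

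The main obstacle is the Permutation Lemma itself: its proof is combinatorially delicate, and it is precisely here that the quantitative form of $k_u+k_v+k_w \geq 2r+2$ is essential (weaker hypotheses suffice for the slice-rank reduction alone). Inductively, one shows that any row of $\tilde W$ not proportional to a row of $W$ forces the existence of a vector $z$ violating the support comparison. I would not reprove this lemma from scratch but would cite~\cite{kruskal77}, concentrating the write-up on the slice-rank computation that links the tensor statement to the Permutation Lemma, since that is the part most naturally phrased in the linear-algebraic language used throughout the present paper.
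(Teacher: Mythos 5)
The paper does not prove Theorem~\ref{th:kruskal}; it is stated for context and attributed to Kruskal~\cite{kruskal77}, with the explicit remark that the result "is not used in any of our proofs." There is therefore no in-paper proof to compare against. Your sketch is a faithful outline of Kruskal's original argument: pass to linear combinations of slices $T(z)=U^{T}\diag(Wz)V$, use $k_u+k_v\geq r+1$ and a minor-counting bound to relate $\rk T(z)$ to the support of $Wz$, and then invoke the Permutation Lemma to identify the rows of $W$ and $\tilde W$ up to permutation and scaling. Two small points of caution. First, the step "apply the same lemma to recover $U$ from $\tilde U$ and $V$ from $\tilde V$" and then "check that the three permutations coincide" compresses what is actually the most delicate bookkeeping in Kruskal's paper: the Permutation Lemma is typically invoked for one designated mode, after which the matching of the remaining two modes (and the consistency of the permutation) requires a separate rank argument rather than two further independent applications of the lemma. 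Second, you should be explicit that the minor-counting bound on $\rk T(z)$ also establishes $\rk(T)=r$ (i.e.\ $r'\geq r$), not only the matching of the two families once $r'=r$ is known. Since you cite~\cite{kruskal77} for the Permutation Lemma, it would be natural to also point to the alternative route via Jennrich's theorem in~\cite{rhodes10}, which the paper itself mentions; that proof replaces the Permutation Lemma by a reduction to three-slice subtensors and may be easier to fit into the slice-based framework you are already using.
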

Note that this theorem implies Jennrich's uniqueness theorem (Theorem~\ref{th:jennrich}).
For tensors of format $n \times n \times n$, Kruskal's theorem can prove uniqueness up to $r=\lfloor 3n/2 \rfloor-1$. 
This is better than our uniqueness theorem, which applies only up to $r=\lfloor 4n/3 \rfloor$, for all $n \geq 10$ (and also for $n=8$, but not for $n=9$).
On the other hand, our theorem is quantitatively stronger for tensors with a small (e.g., constant) number of slices.
For instance, we obtain uniqueness up to $r=4n/3$ already for $p=4$ slices. Since $k_w \leq p$, for $p=4$ Theorem~\ref{th:kruskal} only proves uniqueness up to $r=n+1$.
In contrast to Jennrich's theorem, no efficient algorithmic version of Kruskal's theorem is known.\footnote{An algorithmic version of Kruskal's theorem can be found in~\cite{lathauwer14} but the authors do not make any claim on the complexity of their algorithm and indeed,  it does not appear to run in polynomial time.}
Kruskal's proof of Theorem~\ref{th:kruskal} is based on his so-called ``permutation lemma''. One can find in~\cite{rhodes10} an alternative proof which takes Jennrich's theorem as its starting point~\cite[Corollary 2]{rhodes10}. Unfortunately, even though the starting point of that proof is algorithmic, this has not led to an efficient algorithmic version of Kruskal's theorem.

\subsection{Proof of the new uniqueness theorem.} \label{sec:uniqueproof}

This section is devoted to  the proof of Theorem~\ref{th:unique}. From the  hypotheses $(H_{1lm})$ we have $\rk [A^{-1}T_2,A^{-1}T_l]=2(r-n)$ for any $3 \leq l \leq p$.
The lower bound $\rk(T) \geq r$ then follows from Strassen's lower bound (Theorem~\ref{th:strassen}) applied to the tensor of format $n \times n \times 3$ with slices $(A,T_2,T_l)$. 
Indeed, this tensor is of rank at most $\rk(T)$ by Corollary~\ref{cor:add}.
Here we just need to apply $(H_{1lm})$ for a single value of $l$, e.g., $l=3$.
We have therefore shown that $\rk(T)=r$, and we turn our attention to the uniqueness.
 This is very easy when $r=n$: the uniqueness  is given by Corollary~\ref{cor:jennrich}.
For the rest of the proof, we'll assume that $r>n$ (since the span of the slices contains an invertible matrix, $r<n$ is impossible by  Corollary~\ref{cor:jennrich}).

\subsubsection{Uniqueness assuming $w_{i1} \neq 0$ and $\det(T_1) \neq 0$.} \label{sec:0unique}

 By hypothesis there is  a decomposition 
\begin{equation} \label{eq:T1decomp} T=\sum_{i=1}^r u_i \otimes v_i \otimes w_i
\end{equation}
where every pair of vectors in the set $\{w_i;\ 1\leq i \leq r\}$ is linearly independent.
We will first prove Theorem~\ref{th:unique} under the additional assumption that $w_{i1} \neq 0$ for all $i$, and that
$A=T_1$ (this requires $T_1$ to be invertible).
These assumptions will be relaxed in Section~\ref{sec:genunique}.
We will see in Section~\ref{sec:generic} that the assumptions of our uniqueness theorem are generically satisfied up to
 $r=4n/3$, and this is even true for the stronger assumptions that are in effect in the present section.

Consider any other decomposition
\begin{equation} \label{eq:T2decomp} T=\sum_{i=1}^r u_i^{(2)} \otimes v_i^{(2)} \otimes w_i^{(2)}.
\end{equation}
We must show that these two decompositions of $T$ are identical up to permutation of the rank one blocks (note that there is 
no linear independence assumption on the vectors $ w_i^{(2)}$ in the second decomposition).
Proposition~\ref{prop:constr} can be applied to $T$ with $A=T_1$ since the corresponding matrix in~(\ref{eq:span}) is $D=D_1=\diag(w_{11},\ldots,w_{r1}).$ This matrix has rank $r$ as required since its diagonal entries are nonzero 
from our additional assumption.
For each of our two decompositions of $T$, Proposition~\ref{prop:constr} therefore  yields a commuting extension of size $r$ 
for the tuple $(T_{1}^{-1} T_1,\ldots,T_{1}^{-1} T_p)$. 
Namely, by Proposition~\ref{prop:constr}.(ii) we obtain from the first decomposition an extension $(Z_1,\ldots,Z_p)$ 
where the $Z_i$ are the slices of a tensor 
\begin{equation} \label{eq:T'decomp}
T'=\sum_{i=1}^r u'_i \otimes v'_i \otimes w_i.
\end{equation}
Here,  $v'_i \in K^r$ has its first $n$ coordinates equal to those of $v_i$, and $u'_i \in K^r$ has its first $n$ coordinates equal to 
the the $i$-th row of $UT_{1}^{-T}$. From the second decomposition we obtain an extension $(Z'_1,\ldots,Z'_p)$ 
where the $Z'_i$ are the slices of a tensor 
\begin{equation} \label{eq:T3decomp}
T^{(3)}=\sum_{i=1}^r u_i^{(3)} \otimes v_i^{(3)} \otimes w_i^{(2)}.
\end{equation} 
The vector  $v_i^{(3)} \in K^r$ has its first $n$ coordinates equal to those of $v_i^{(2)}$, and $u_i^{(3)}\in K^r$ has its first $n$ coordinates equal to the the $i$-th row of $U^{(2)}T_{1}^{-T}$. 
Here $U^{(2)}$ is the matrix with $u_i^{(2)}$ as its $i$-th row. Moreover, the first slices $Z_1,Z'_1$ of $T'$ and $T^{(3)}$ are equal to $I_r$ by Proposition~\ref{prop:constr}.(iv).

By hypothesis the tuple $(T_{1}^{-1} T_2,\ldots,T_{1}^{-1} T_p)$ has an essentially unique extension of size $r$, so there exists $M \in GL_{r-n}(K)$ such that $\rho_M(Z'_i)=Z_i$ for all $i \geq 2$. This remains true even for $i=1$ since 
$\rho_M(Z'_1)=\rho_M(I_r)=I_r=Z_1$.

What is the effect of this transformation on the corresponding tensor decomposition~(\ref{eq:T3decomp})? Recall from~(\ref{eq:action}) that $\rho_M$ multiplies each slice from the right by $$D_M=\begin{pmatrix}
 I_n & 0\\
 0 & M
 \end{pmatrix},$$ and from the left by $D_M^{-1}$. By Proposition~\ref{prop:slices} and Corollary~\ref{cor:mult}, multiplying
 the slices form the left only changes the vectors $u_i^{(3)}$ in the decomposition, and likewise a multiplication from 
 the right only changes the $v_i^{(3)}$. Since $\rho_M$ sends $T^{(3)}$ to $T'$ we obtain a second decomposition of 
 $T'$ in addition to~(\ref{eq:T'decomp}):
 \begin{equation}  \label{eq:T4decomp}
T'=\sum_{i=1}^r u_i^{(4)} \otimes v_i^{(4)} \otimes w_i^{(2)}.
\end{equation}
If we denote by $U^{(3)},U^{(4)},V^{(3)},V^{(4)}$ the matrices with respective rows $u_i^{(3)},u_i^{(4)}, v_i^{(3)}$ and $v_i^{(4)}$
then we have:
\begin{equation} \label{eq:dm}  V^{(4)}=V^{(3)}D_M\ \mathrm{ and }\ U^{(4)}=U^{(3)}D_M^{-T}.
\end{equation}
\begin{lemma}
The decomposition of $T'$ as a sum of $r$ rank one tensors is essentially unique.
\end{lemma}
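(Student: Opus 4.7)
The plan is to recognize that this lemma is an immediate application of Jennrich's uniqueness theorem (Theorem~\ref{th:jennrich}) to the extension tensor $T'$, once we verify that the decomposition (\ref{eq:T'decomp}) meets Jennrich's three hypotheses.

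First I would observe that $T'$ has format $r \times r \times p$ and, by Proposition~\ref{prop:constr}.(iv), its first slice is $Z_1 = I_r$, which is of rank $r$; in particular $\rk(T') = r$. I would then check the three linear-independence conditions of Theorem~\ref{th:jennrich} for the decomposition $T' = \sum_{i=1}^r u'_i \otimes v'_i \otimes w_i$. Conditions (i) and (ii), namely that the $u'_i$ are linearly independent and that the $v'_i$ are linearly independent, are furnished directly by Proposition~\ref{prop:constr}.(iii).

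For condition (iii), the crucial point is that the third factors appearing in the decomposition of $T'$ given by (\ref{eq:T'decomp}) are the \emph{same} vectors $w_i$ as in the original decomposition (\ref{eq:T1decomp}) of $T$: this is guaranteed by Proposition~\ref{prop:constr}.(ii). Hypothesis~1 of Theorem~\ref{th:unique} then says that these $w_i$ are pairwise linearly independent, which is exactly Jennrich's condition~(iii). Applying Theorem~\ref{th:jennrich} (or equivalently invoking the first case of Corollary~\ref{cor:jennrich}, since $I_r$ is a rank-$r$ matrix in the span of the slices of $T'$ and $r = \rk(T')$) immediately yields that the decomposition of $T'$ as a sum of $r$ rank-one tensors is essentially unique.

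There is no real obstacle here; the content of the lemma is the bookkeeping already done in Proposition~\ref{prop:constr}, which ensured that the extension tensor $T'$ inherits from $T$ both a rank-$r$ invertible slice and a decomposition whose left and middle factors are linearly independent while its right factors remain the original pairwise linearly independent $w_i$. The only thing one must be careful about is to cite Proposition~\ref{prop:constr}.(ii) to identify the right factors of the $T'$-decomposition with those of the $T$-decomposition, so that hypothesis~1 of Theorem~\ref{th:unique} can be used.
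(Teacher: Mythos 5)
Your proof is correct and follows essentially the same route as the paper: it applies Jennrich's uniqueness theorem (Theorem~\ref{th:jennrich}) to the decomposition~(\ref{eq:T'decomp}) of $T'$, citing Proposition~\ref{prop:constr}.(iii) for the linear independence of the $u'_i$ and of the $v'_i$, and using Proposition~\ref{prop:constr}.(ii) together with hypothesis~1 of Theorem~\ref{th:unique} to obtain the pairwise linear independence of the $w_i$. The paper's own proof is just a slightly terser version of the same three checks.
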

\begin{proof}
This follows from~(\ref{eq:T'decomp}) and Theorem~\ref{th:jennrich} since:
\begin{itemize}
\item[(i)] The vectors $u'_1,\ldots,u'_r$ are linearly independent by Proposition~\ref{prop:constr}.(iii).
\item[(ii)] The vectors $v'_1,\ldots,v'_r$ are linearly independent for the same reason.
\item[(ii)] the vectors $w_i$ in~(\ref{eq:T'decomp}) are pairwise linearly independent (remember that this is an assumption
of Theorem~\ref{th:unique}).
\end{itemize}
\end{proof}
This lemma implies that the rank-one blocks  in our two decompositions~(\ref{eq:T'decomp}) and~(\ref{eq:T4decomp}) are identical up to permutation. The blocks in~(\ref{eq:T2decomp}), (\ref{eq:T3decomp}) and~(\ref{eq:T4decomp}) can therefore be renumbered so that for all $i \in [r]$,
$$u_i^{(4)} \otimes v_i^{(4)} \otimes w_i^{(2)} = u'_i \otimes v'_i \otimes w_i.$$ The rank-one tensor on both sides of this equality is nonzero (this follows for instance from the fact that $T'$ satisfies the hypotheses of Theorem~\ref{th:jennrich}).
Hence there exist scalars $\alpha_i,\beta_i,\gamma_i$ such that $u_i^{(4)} = \alpha_i u'_i$, $v_i^{(4)} = \beta_i v'_i$, 
$w_i^{(2)} = \gamma_i w_i$ and $\alpha_i \beta_i \gamma_i =1$ for all $i$. 
In this series of equalities, those involving the vectors $u'_i$ and $v'_i$ can be rewritten in matrix form as: $U^{(4)} = AU'$, 
$V^{(4)} = BV'$ where $U'$ and $V'$ are the matrices having respectively  the $u'_i$ and $v'_i$ as rows, $A=\diag(\alpha_1,\ldots,\alpha_r)$ and $B=\diag(\beta_1,\ldots,\beta_r)$. From~(\ref{eq:dm}) 
we have $AU'=U^{(3)}D_M^{-T}$ and $BV'=V^{(3)}D_M$. We now focus on the first of these two equalities, and more precisely on the first $n$ columns of the matrices on both sides of the equality.  Multiplying $U^{(3)}$ by $D_M^{-T}$ does not change its first $n$ columns so, from $AU'=U^{(3)}D_M^{-T}$ we can conclude that $AUT_{1}^{-T} = U^{(2)}T_{1}^{-T}$.
Multiplying both sides by $T_{1}^{T}$ we obtain $AU= U^{(2)}$, or in vector form: $u_i^{(2)} = \alpha_i u_i$ for all $i$.
By a similar but simpler\footnote{$T_{1}$ is replaced by the identity matrix.} argument, the equalities $v_i^{(2)} = \beta_i v_i$ can be derived from $BV'=V^{(3)}D_M$.
We have already seen that $w_i^{(2)} = \gamma_i w_i$ for all $i$. Hence we can at last conclude that the decompositions of $T$ in~(\ref{eq:T1decomp}) and ~(\ref{eq:T2decomp}) are identical up to permutation of the rank one blocks.
This completes the proof of Theorem~\ref{th:unique} under the additional assumption that $T_1$ is invertible 
and $w_{i1} \neq 0$ for all $i$.

\subsubsection{General case of the uniqueness theorem.} \label{sec:genunique}

We now lift the assumptions $w_{i1} \neq 0$ and $\det(T_1) \neq 0$ from Section~\ref{sec:0unique} to complete the proof of Theorem~\ref{th:unique}. For this, we'll assume without loss of that $K$ is an infinite field: if Theorem~\ref{th:unique} is true
for some field $K$ then it is also true for all of its subfields. The assumption that $K$ is infinite will allow us to argue
about generic linear combination of slices.

In this section we therefore denote by~$T$ a tensor satisfying the hypotheses of Theorem~\ref{th:unique}, except in 
Lemma~\ref{lem:combi} below (no extra hypothesis on $T$ is required besides those stated in the lemma).
\begin{lemma} \label{lem:combi}
Let $T$ be a tensor of format $n \times n \times p$ of rank  $r$. Let $T'$ be the tensor obtained from $T$ by replacing
the first slice $T_1$ by a linear combination $T'_1=\sum_{k=1}^p \lambda_k T_k$ where $\lambda_1 \neq 0$.
Then $\rk(T') = r$, and if the decomposition of $T'$ as a sum of rank one tensors is essentially unique then 
the same is true of $T$. Moreover, for a generic choice of $\lambda$ there is 
a decomposition
$T'=\sum_{i=1}^r u'_i \otimes v'_i  \otimes w'_i$ with $w'_{i1} \neq 0$ for all $i$.
\end{lemma}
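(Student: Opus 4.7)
The plan is to set up an invertible linear change of coordinates in the slice (i.e., $w$) direction that converts $T$ into $T'$, and to transport decompositions through this change of coordinates. Concretely, let $\phi:K^p \to K^p$ be the linear map whose first component is $w \mapsto \sum_{k=1}^p \lambda_k w_k$ and whose other components are the identity. The matrix of $\phi$ in the standard basis is lower/upper triangular with diagonal $(\lambda_1,1,\ldots,1)$, so $\det \phi = \lambda_1 \neq 0$ and $\phi \in GL_p(K)$.

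First I would show $\rk(T')=r$. Fix any rank-$r$ decomposition $T=\sum_{i=1}^r u_i \otimes v_i \otimes w_i$. Since applying $\phi$ coordinate-wise in the third factor replaces the first slice by $\sum_k \lambda_k T_k$ and leaves the other slices unchanged, one checks directly that $T'=\sum_{i=1}^r u_i \otimes v_i \otimes \phi(w_i)$. Hence $\rk(T')\leq r$. The same argument with $\phi^{-1}$ (well-defined since $\phi$ is invertible) shows $\rk(T)\leq \rk(T')$, so $\rk(T')=r$.

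Next, for the implication on essential uniqueness, note that the bijection
\[
(u_i,v_i,w_i)_{i=1}^r \longmapsto (u_i,v_i,\phi(w_i))_{i=1}^r
\]
is a bijection between rank-$r$ decompositions of $T$ and of $T'$, and it is compatible with the equivalence relation of ``permute the rank-one terms and rescale them'' since $\phi$ is linear and applied to every $w_i$. Hence if the decomposition of $T'$ is essentially unique, the same is true of $T$.

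Finally, for the genericity statement, fix any rank-$r$ decomposition $T=\sum_{i=1}^r u_i\otimes v_i\otimes w_i$. By minimality of $r$ every $w_i$ is nonzero (otherwise the $i$-th term vanishes and we obtain a decomposition of rank~$<r$). For each $i$, the condition $\phi(w_i)_1=\sum_{k=1}^p \lambda_k w_{ik}\neq 0$ defines the complement of a hyperplane $H_i\subseteq K^p$, which is a proper hyperplane since $w_i\neq 0$. For any $\lambda$ in the dense open set $K^p\setminus(\{\lambda_1=0\}\cup\bigcup_{i=1}^r H_i)$ we obtain the desired decomposition $T'=\sum_i u_i\otimes v_i\otimes \phi(w_i)$ with $\phi(w_i)_1\neq 0$ for all $i$. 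There is essentially no obstacle here; the only mild subtlety is to record that the map $\phi$ is invertible precisely because $\lambda_1\neq 0$, which is exactly what makes the two directions of the rank equality and the uniqueness transfer symmetric.
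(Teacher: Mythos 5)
Your proof is correct and takes a somewhat cleaner, more conceptual route than the paper's. Both proofs rest on the same underlying idea---replacing the first slice by $\sum_k \lambda_k T_k$ is a change of basis in the third tensor factor, invertible precisely because $\lambda_1 \neq 0$---but where the paper carries this out in explicit coordinates (writing the slices as $U^T D_k V$, forming the modified diagonal matrices, and then unwinding the scaling factors $\alpha_i,\beta_i,\gamma_i$ by direct algebra on the first coordinates of the $w^{(j)}_i$), you package the change of basis as a single map $\phi \in GL_p(K)$ acting on the third factor and observe $T' = (\mathrm{id}\otimes\mathrm{id}\otimes\phi)(T)$. This immediately gives $\rk(T')=\rk(T)$ (a special case of invariance of tensor rank under the $GL\times GL\times GL$ action), and because $\phi$ is a bijection on nonzero vectors, $(\mathrm{id}\otimes\mathrm{id}\otimes\phi)$ is a bijection on rank-one tensors that preserves equality; hence it sets up a bijection between rank-$r$ decompositions of $T$ and of $T'$ that commutes with permutation of terms, so essential uniqueness transfers with no coordinate computation at all. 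The genericity part is identical in both proofs: minimality of $r$ forces $w_i\neq 0$, so each condition $\sum_k\lambda_k w_{ik}\neq 0$ is the complement of a proper hyperplane. Your version buys transparency and economy; the paper's version has the mild advantage of producing, along the way, the explicit relation $w'_{ik}=w_{ik}$ for $k\geq 2$ and $w'_{i1}=\sum_k\lambda_k w_{ik}$, which is reused in the surrounding proof of Theorem~\ref{th:unique} and in Algorithm~\ref{algo:decomp}.
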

\begin{proof}
Consider two decompositions: 
\begin{equation} \label{eq:2decomp}
T=\sum_{i=1}^r u_i^{(1)} \otimes v_i^{(1)} \otimes w_i^{(1)} = \sum_{i=1}^r u_i^{(2)} \otimes v_i^{(2)} \otimes w_i^{(2)}.
\end{equation}
By~(\ref{eq:slices}) the slices of $T$ are given by the formula 
$$T_k = {U^{(1)}}^T D^{(1)}_k V^{(1)} = {U^{(2)}}^T D^{(2)}_k V^{(2)}$$
where $U^{(j)}$ and $V^{(j)}$ have the  $u_i^{(j)}$ and $v_i^{(j)}$ as their respective rows, and
 $D_k^{(j)} = \diag(w_{1k}^{(j)},\ldots,w_{rk}^{(j)}).$
Hence the first slice of~$T'$ is $T'_1 = {U^{(1)}}^T D^{(1)}_{\lambda} V^{(1)}$ where $D^{(1)}_{\lambda} = \sum_{k=1}^p \lambda_k D^{(1)}_k$.
This shows that $T'$ admits the decomposition 
\begin{equation} \label{eq:decomp'1}
T'=\sum_{i=1}^r u_i ^{(1)} \otimes v_i^{(1)} \otimes w_i^{(3)}
\end{equation}
 where $w_{ik}^{(3)}=w_{ik}^{(1)}$ for $k \geq 2$, 
and $w_{i1}^{(3)}=\sum_{k=1}^p \lambda_k w_{ik}^{(1)}$,
showing that $\rk(T') \leq r$ (this follows also from Corollary~\ref{cor:add}). 
Writing  $T_1$ as a linear combination of the slices of $T'$ likewise shows that $\rk(T) \leq \rk(T')$, therefore $\rk(T)=\rk(T')=r$.

We can obtain a second decomposition 
\begin{equation} \label{eq:decomp'2}
T'=\sum_{i=1}^r u_i ^{(2)} \otimes v_i^{(2)} \otimes w_i^{(4)}
\end{equation}
from the second decomposition of $T$.  
If the decomposition of $T'$ is essentially unique, we can renumber the rank one blocks in (\ref{eq:2decomp}), (\ref{eq:decomp'1}), (\ref{eq:decomp'2}) so that 
$u_i ^{(1)} \otimes v_i^{(1)} \otimes w_i^{(3)} = u_i ^{(2)} \otimes v_i^{(2)} \otimes w_i^{(4)}$ for all $i$.
These blocks are {\em exactly} of rank one (they are nonzero) since $\rk(T')=r$.
Hence there exist scaling factors $\alpha_i,\beta_i,\gamma_i$ such that $u_i ^{(1)} = \alpha_i u_i ^{(2)}$,
$ v_i^{(1)} = \beta_i  v_i^{(2)}$,  $w_i^{(3)} = \gamma_i w_i^{(4)}$ and $\alpha_i \beta_i \gamma_i =1$ for all $i$.
Moreover,
$$ \lambda_1 w_{i1}^{(1)} = w_{i1}^{(3)} - \sum_{k=2}^p  \lambda_k w_{ik}^{(1)} = 
 w_{i1}^{(3)} - \sum_{k=2}^p  \lambda_k w_{ik}^{(3)}.$$
 Since  $w_i^{(3)} = \gamma_i w_i^{(4)}$, we obtain 
 $$ \lambda_1 w_{i1}^{(1)} = \gamma_i (w_{i1}^{(4)} - \sum_{k=2}^p  \lambda_k w_{ik}^{(4)}) =
  \gamma_i (w_{i1}^{(4)} - \sum_{k=2}^p  \lambda_k w_{ik}^{(2)}) = \gamma_i \lambda_1  w_{i1}^{(2)},$$
  hence $w_{i1}^{(1)} = \gamma_i  w_{i1}^{(2)}$ since $\lambda_1 \neq 0$. The same proportionality relation holds for 
  the other coordinates of  $w_{i}^{(1)}$ and $w_{i}^{(2)}$ since they are equal to the corresponding coordinates of
  $w_i^{(3)}$ and $w_i^{(4)}$. We conclude that  $w_{i}^{(1)} = \gamma_i  w_{i}^{(2)}$, and we have shown that the two decompositions of $T$ are identical up to permutation of their rank one blocks. 

Finally, consider the decomposition of $T'$ in~(\ref{eq:decomp'1}), which we have renamed 
$\sum_{i=1}^r u'_i \otimes v'_i  \otimes w'_i$ in the lemma's statement. We would like the 
$w_{i1}^{(3)}$ to be nonzero for all $i$. For this, $\lambda$ should avoid the union of the $r$ hyperplanes 
$\sum_{k=1}^p w_{ik}^{(1)} \lambda_k=0$. These are proper hyperplanes: if $w_{ik}=0$ for all $k$, then $w_i=0$.
This is impossible, otherwise $T'$ would be of rank less than $r$. We conclude that the  $w_{i1}^{(3)}$ are generically all  nonzero as required.
\end{proof}
As explained at the beginning of Section~\ref{sec:genunique}, we assume from now on that $T$ satisfies the hypotheses of Theorem~\ref{th:unique}. Recall also that $\rk(T)=r$, as shown at the very beginning of Section~\ref{sec:uniqueproof}.
\begin{lemma} \label{lem:unique}
Let us denote by $T_{\lambda}$ the linear combination of slices $\sum_{k=1}^p \lambda _k T_k$. 
For a generic choice of $\lambda \in K^p$, the tuple of matrices $(A'_1,\ldots,A'_{p-1})$ where $A'_i=T_{\lambda}^{-1} T_{i+1}$
satisfies the hypotheses of the uniqueness theorem for commuting extensions (Theorem~\ref{th:uniquext}).

\end{lemma}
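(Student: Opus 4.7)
The plan is to run a straightforward Zariski-density argument: the hypothesis of Theorem~\ref{th:uniquext} is a conjunction of rank-attainment conditions, each of which defines a dense Zariski open subset of parameter space, so a finite intersection remains dense.

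First I would set up the open locus on which to work. By hypothesis~2 of Theorem~\ref{th:unique}, the polynomial $\det(T_\lambda)$ is not identically zero, so the locus $U = \{\lambda \in K^p : \det T_\lambda \neq 0\}$ is Zariski open and dense. On $U$, the entries of each $A'_i(\lambda) = T_\lambda^{-1}T_{i+1}$ are rational functions of $\lambda$ (regular on $U$), as are the entries of any commutator $[A'_k(\lambda), A'_l(\lambda)]$ or of any matrix obtained by concatenating such commutators. By hypothesis~3, there is a specific point $\lambda^* \in U$ (the one for which $T_{\lambda^*}$ equals the invertible matrix $A$) at which the required hypothesis of Theorem~\ref{th:uniquext} holds. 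For each $l \in \{2,\ldots,p-1\}$, fix the corresponding index $m(l) \notin \{1,l\}$ witnessing $(H_{1,l,m(l)})$ at $\lambda^*$.

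Next I would observe that, on $U$, every dimension condition appearing in $(H_{1,l,m(l)})$ can be recast as a purely \emph{lower} rank bound. Indeed, for any $\lambda \in U$, applying Theorem~\ref{th:strassen} to the three-slice sub-tensor $(T_\lambda,T_{a+1},T_{b+1})$ gives $\dim\Ima[A'_a,A'_b]\le 2(r-n)$, while applying Theorem~\ref{th:strassenbis} to the four-slice sub-tensor $(T_\lambda,T_{a+1},T_{b+1},T_{c+1})$ gives $\dim\bigl(\Ima[A'_a,A'_b]+\Ima[A'_a,A'_c]\bigr)\le 3(r-n)$; in both cases the sub-tensor has rank at most $\rk(T) = r$ by Corollary~\ref{cor:add}, which is exactly what these Strassen-type bounds require. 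Therefore each equality in $(H_{1,l,m(l)})$ is automatically an upper bound on $U$, so it reduces to the attainment condition ``some $d\times d$ minor of a specific rational-function matrix $M(\lambda)$ is non-zero,'' where $d\in\{2(r-n),3(r-n)\}$.

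Finally I would finish by the usual finite-intersection move. After multiplying by an appropriate power of $\det T_\lambda$, each such minor becomes a polynomial $P(\lambda)$ in $\lambda$; by construction this polynomial does not vanish at $\lambda^*$, so it is not identically zero. The set of $\lambda\in U$ at which every required minor is non-zero is therefore the complement, inside $U$, of the zero set of the product of these finitely many non-zero polynomials, hence the complement of a hypersurface in $K^p$. On this dense open set, the tuple $(A'_1,\ldots,A'_{p-1})$ satisfies hypothesis $(H_{1,l,m(l)})$ for every $l\in\{2,\ldots,p-1\}$, which is exactly the hypothesis of Theorem~\ref{th:uniquext}. The only mildly delicate point is the polynomial-vs-rational bookkeeping, but it is harmless because we are free to restrict to $U$ and clear denominators before invoking lower semi-continuity of matrix rank.
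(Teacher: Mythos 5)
Your proposal is correct and follows essentially the same route as the paper's proof: use Strassen-type bounds (Theorems~\ref{th:strassen} and~\ref{th:strassenbis}) to show the upper bounds hold for every $\lambda$ making $T_\lambda$ invertible, then use lower semicontinuity of matrix rank together with the witness point $\lambda^*$ (the one realizing $A$) to get the matching lower bounds generically. The remark about clearing denominators and the explicit identification of the open locus $U$ are organizational details; the mathematical content coincides with the paper's argument.
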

\begin{proof}

From the hypotheses on $T$ we know that  for all $2 \leq l \leq p-1$ there exists some $m {\not \in} \{1,l\}$ such that:
\begin{itemize}
\item[(i)] The three linear spaces $\Ima [A_1,A_l]$, $\Ima [A_1,A_m]$, $\Ima[A_l,A_m]$ are of dimension $2(r-n)$.
\item[(ii)] The three linear spaces $\Ima [A_1,A_l]+\Ima [A_1,A_m]$, $\Ima [A_l,A_1]+\Ima [A_l,A_m]$ and 
$\Ima [A_m,A_1]+\Ima [A_m,A_l]$ are of dimension $3(r-n)$.
\end{itemize}
Here we have $A_i=A^{-1}T_{i+1}$, and $A \in \langle A_1,\ldots,A_p \rangle$.
We'll show that for a generic choice of $\lambda$,  these properties 
still hold true if we replace $A_1,A_l,A_m$ by 
$A'_1,A'_l,A'_m$. This will show that  $(A'_1,\ldots,A'_{p-1})$ satisfies the  the hypotheses of Theorem~\ref{th:uniquext}.

For given values of $l$ and $m$, we therefore need to compute the dimensions of 6 linear spaces.
First, we note that the corresponding upper bounds on the dimensions of the respective linear spaces hold true
for {\em any} choice of $\lambda$. Consider indeed the tensor $T'$ with slices 
$T_{\lambda},T_2,\ldots,T_{p}$. 
This tensor is of rank at most $r$ since the tensor with slices $T_{\lambda},T_1,\ldots,T_p$ is of rank
exactly~$r$ by Corollary~\ref{cor:add}. Hence $\rk [A'_1,A'_l] \leq 2(r-n)$ by  Strassen's lower bound (Theorem~\ref{th:strassen}) applied to the tensor of format $n \times n \times 3$ with slices $(T_{\lambda},T_2,T_{l+1})$.
The same argument applies to the linear spaces $\Ima [A'_1,A'_m]$ and $\Ima[A'_l,A'_m]$. 
The dimension of  $\Ima [A'_1,A'_l]+\Ima [A'_1,A'_m]$ can be upper bounded by $3(r-n)$ with a similar argument:
instead of Strassen's lower bound we apply its variation in Theorem~\ref{th:strassenbis} to the tensor 
with slices $(T_{\lambda},T_2,T_{l+1},T_{m+1})$. This argument also applies to the last two remaining spaces.

Next, we show that the 6 converse inequalities hold for a generic choice of $\lambda$. 
This follows from the lower semicontinuity of
matrix rank. Indeed, since 
 $\rk [A_1,A_l] = \rk [A^{-1}T_2,A^{-1}T_{l+1}] = 2(r-n)$, this matrix contains a certain nonzero minor of size $2(r-n)$.
For any $\lambda$ such that $T_{\lambda}$ is invertible, if the same minor of $[T_{\lambda}^{-1}T_2,T_{\lambda}^{-1}T_{l+1}]$
is nonzero this matrix has rank at least $2(r-n)$. The nonvanishing of this minor is equivalent to the nonvanishing of a certain polynomial $P(\lambda)$, and this polynomial is not identically 0 since it does not vanish at $\lambda=(a_1,\ldots,a_p)$.
Here, $a_1,\ldots,a_p$ are the coefficients such that $A=a_1T_1+\ldots+a_pT_p$.
We have thus shown that the inequality  $\rk [A'_1,A'_l] \geq 2(r-n)$ generically holds true. The same argument applies
to $\rk [A'_1,A'_m]$ and $\rk [A'_l,A'_m]$. 
The dimensions of the 3 remaining spaces can be lower bounded by $3(r-n)$ with a similar argument (recall indeed from the proof of Theorem~\ref{th:strassenbis} that the dimension of each of these spaces can be written as the rank of an appropriate matrix with $n$ rows and $2n$ columns).
\end{proof}
Let us replace the first slice $T_1$ of $T$ by the linear combination $T_{\lambda}=\sum_{k=1}^p \lambda_k T_k$,
and let $T'$ be the resulting tensor. We claim that for a generic~$\lambda$, this tensor satisfies the hypotheses 
of the special case of the uniqueness theorem established in Section~\ref{sec:0unique}.
Indeed, since $\rk(T)=r$ the following properties will hold true generically by Lemma~\ref{lem:combi}: (i) $\rk(T')=r$; (ii) the coefficients $w'_{i1}$ 
in the decomposition $T'=\sum_{i=1}^r u'_i \otimes v'_i  \otimes w'_i$ are all nonzero.
Moreover, by Lemma~\ref{lem:unique} the hypotheses of the uniqueness theorem for commuting extensions (Theorem~\ref{th:uniquext}) will also hold true generically for the tuple $({T'_1}^{-1}T'_2,\ldots,{T'_1}^{-1}T'_2)$.
Therefore we can indeed conclude from Section~\ref{sec:0unique} that the decomposition of $T'$ is essentially unique.
Applying Lemma~\ref{lem:combi} a second time shows that the decomposition of $T$ is essentially unique too, and
this completes the proof of Theorem~\ref{th:unique}.~$\Box$



\section{Decomposition algorithms} \label{sec:decomp}

In this section we give two algorithms for overcomplete tensor decomposition. The main one (Algorithm~\ref{algo:decomp}) works under the same hypotheses as our general uniqueness result (Theorem~\ref{th:unique}). It calls Algorithm~\ref{algo:0decomp} as a subroutine.
The latter algorithm works under the (somewhat more restrictive) hypotheses of the uniqueness theorem established in Section~\ref{sec:0unique}.

\subsection{Tensors with 
an invertible first slice and $w_{i1}\neq 0$.} \label{sec:0decomp}

We first give an algorithmic version of the uniqueness result in Section~\ref{sec:0unique}.
Namely, we give a decomposition algorithm 
under the assumption that the first slice $T_1$ is invertible, and that the first entries of the vectors $w_1,\ldots,w_r$ occuring in a rank-$r$ decomposition of the input tensor $T$
are all nonzero. This hypothesis will be relaxed 
 in Section~\ref{sec:gendecomp}. 
 The algorithm goes as follows.

\begin{algorithm} \label{algo:0decomp} 
\SetAlgoLined
\nonl \textbf{Input:} a tensor $T \in K^{n \times n \times p}$ and an integer $r \in [n,4n/3]$.\\ 
\nonl \textbf{Output:} a decomposition $T=\sum_{i=1}^r u_i \otimes v_i \otimes w_i$.\\

Compute the matrices $A_i=T_{1}^{-1} T_i$ for $i=2,\ldots,p$.\\
 Compute a commuting extension $(Z_2,\ldots,Z_p)$ of size $r$ of $(A_2,\ldots,A_p)$ with 
the algorithm from~\cite{koi24}.\\
Let $T'$ be the tensor of format $r \times r \times p$ with slices $(I_r,Z_2,\ldots,Z_p)$.
With the simultaneous diagonalization algorithm (Algorithm~\ref{algo:jennrich}), 
\newline
compute a decomposition $T'=\sum_{i=1}^r u'_i \otimes v'_i \otimes w_i.$\\
Let $U' \in M_{r,n}(K)$ be the matrix having as $i$-th row the first $n$ coordinates of $u'_i$.
Output the decomposition $T=\sum_{i=1}^r u_i \otimes v_i \otimes w_i$ where $u_i$ is the $i$-th row of $U'T_1^T$ 
and $v_i=(v'_{i1},\ldots,v'_{in})$.
\caption{Decomposition of tensors with  invertible first slice and $w_{i1}\neq 0$.}
\end{algorithm}

\begin{theorem} \label{th:0decomp}
Let $T=\sum_{i=1}^r u_i \otimes v_i \otimes w_i$ be a tensor of format $n \times n \times p$ with $p\geq 4$. 
We assume that $T$ satisfies the following properties:
\begin{enumerate}
\item The $w_i$ are pairwise linearly independent, and $w_{i1} \neq 0$ for all $i$.
\item Let $(T_1,\ldots,T_p)$ be the slices of $T$. The first slice $T_1$ is invertible.
\item Let  $A_i=T_{1}^{-1} T_i$ for $i=2,\ldots,p$. The tuple of matrices  $(A_2,\ldots,A_p)$ satisfies the hypothesis 
of the uniqueness theorem for commuting extensions (Theorem~\ref{th:uniquext}).
\end{enumerate}
Then $rank(T)=r$, and if we run the above algorithm on  input $T$ it will output an (essentially unique) decomposition of rank $r$.
\end{theorem}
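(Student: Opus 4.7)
The assertion $\rk(T) = r$ together with essential uniqueness of the decomposition is not something the algorithm needs to prove; it is already established in Section~\ref{sec:0unique} whose hypotheses are precisely those of the present theorem (indeed, hypotheses 1, 2 and 3 here match the specialized setting of that section, with $A = T_1$ and $w_{i1} \neq 0$). So the real work is to verify that Algorithm~\ref{algo:0decomp} terminates and outputs a valid rank-$r$ decomposition.

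For step 1, the $A_i = T_1^{-1} T_i$ are well defined by hypothesis~2. For step 2, I would apply Proposition~\ref{prop:constr} to $T$ with $A = T_1$ and the given rank-$r$ decomposition: the relevant matrix $D$ in~(\ref{eq:span}) is $D_1 = \diag(w_{11},\ldots,w_{r1})$, which has rank $r$ precisely because $w_{i1} \neq 0$ for all $i$. Part~(i) of the proposition then furnishes a commuting extension by diagonalizable matrices of size $r$ for $(T_1^{-1}T_1,\ldots,T_1^{-1}T_p)$, and part~(iv) ensures that its first matrix can be taken to be $I_r$; dropping it leaves a commuting extension $(Z_2,\ldots,Z_p)$ of $(A_2,\ldots,A_p)$. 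Combined with hypothesis~3, this allows the algorithm of~\cite{koi24} invoked at step~2 to return such an extension, which is moreover essentially unique by Theorem~\ref{th:uniquext}.

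For step 3, I would check that Jennrich's algorithm applies to the tensor $T'$ of format $r\times r\times p$ with slices $(I_r,Z_2,\ldots,Z_p)$. By Proposition~\ref{prop:constr}(ii)--(iii), such a $T'$ admits a decomposition $T' = \sum_{i=1}^r u'_i \otimes v'_i \otimes w_i$ in which $(u'_1,\ldots,u'_r)$ and $(v'_1,\ldots,v'_r)$ are each linearly independent, and whose $w_i$ are the same vectors as in the input decomposition of $T$; by hypothesis~1 the latter are pairwise linearly independent. All three conditions of Theorem~\ref{th:jennrich} therefore hold, and Algorithm~\ref{algo:jennrich} returns a decomposition of this shape. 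Step~4 is then exactly the extraction prescribed by Corollary~\ref{cor:constr} with $A=T_1$: taking the first $n$ coordinates of the $v'_i$ gives the $v_i$, and multiplying the $r \times n$ matrix $U'$ of first-$n$-coordinate rows by $T_1^T$ on the right recovers $U$. The corollary then guarantees $T = \sum_i u_i \otimes v_i \otimes w_i$.

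The one subtlety worth flagging is that the commuting extension returned by the algorithm of~\cite{koi24} is only essentially unique: it may differ from the one produced by Proposition~\ref{prop:constr} by the $GL_{r-n}$-action $\rho_M$. This is not an obstacle, because Corollary~\ref{cor:constr} is stated for an \emph{arbitrary} decomposition of the tensor $T'$ associated to any valid commuting extension; the action $\rho_M$ changes the $u'_i$ and $v'_i$ but leaves the $w_i$ invariant (they are the joint eigenvalues of the simultaneously diagonalizable family, preserved under simultaneous conjugation), so the extraction in step~4 works identically. The hardest part of the overall picture is thus not this theorem itself but the ingredients it relies on, namely the uniqueness analysis of Section~\ref{sec:0unique} and the commuting-extension algorithm of~\cite{koi24}; once these are in hand the correctness of Algorithm~\ref{algo:0decomp} is essentially bookkeeping.
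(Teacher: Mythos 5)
Your proof is correct and follows the same structure as the paper's: establish $\rk(T)=r$ and uniqueness via Section~\ref{sec:0unique}, invoke Proposition~\ref{prop:constr} with $A=T_1$ to justify step~2, verify the hypotheses of Theorem~\ref{th:jennrich} via Proposition~\ref{prop:constr}(ii)--(iv) for step~3, and appeal to Corollary~\ref{cor:constr} for step~4. The closing paragraph about the $GL_{r-n}$-action is a welcome clarification the paper leaves implicit; the resolution you give (Corollary~\ref{cor:constr} holds for any diagonalizable commuting extension, and the $\rho_M$-twist is a similarity, hence preserves diagonalizability) is exactly right.
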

\begin{proof}
Since $T_1$ is invertible, $r \geq n$ by Corollary~\ref{cor:jennrich}.
It follows from Theorem~\ref{th:unique} (or just from the special case treated in Section~\ref{sec:0unique}) 
that $\rk(T)=r$ and that the decomposition of $T$ is essentially unique.
We can apply Proposition~\ref{prop:constr} to $T$ with $A=T_1$: 
by hypothesis, $T_1$ is invertible  and $D_1=\diag(w_{11},\ldots,w_{r1})$ has rank $r$ since its diagonal entries are all nonzero.

By Proposition~\ref{prop:constr}.(i) the tuple $(A_2,\ldots,,A_p)$ has 
 a commuting extension of size $r$, and we can find one at step 2 of the algorithm with the algorithm from~\cite{koi24}. Indeed, this algorithm requires the same hypotheses as the uniqueness theorem for commuting extensions (Theorem~\ref{th:uniquext}),  
and $(A_2,\ldots,A_p)$ satisfies these hypotheses. 
The commuting extension is essentially unique by Theorem~\ref{th:uniquext}.

By Proposition~\ref{prop:constr}.(ii)  and (iv), the tensor $T'$ at step 3 has rank $r$  and we can decompose it with 
the simultaneous diagonalization algorithm.
Indeed, $T'$ satisfies the hypotheses of Theorem~\ref{th:jennrich} by Proposition\~ref{prop:constr}.(iii) and the first hypothesis 
of Theorem~\ref{th:0decomp}. Instead of Theorem~\ref{th:jennrich}, we can also appeal to its Corollary~\ref{cor:jennrich} 
since the first slice $I_r$ of $T'$ is invertible. 
Finally, the correctness of step 4 follows from Corollary~\ref{cor:constr}. 
\end{proof}
Now that we have proved the correctness of Algorithm~\ref{algo:0decomp}, we show that it runs in polynomial time on tensors with a rational decomposition.
\begin{theorem} \label{th:bit0decomp}
Let $T=\sum_{i=1}^r u_i \otimes v_i \otimes w_i$ be a tensor of format $n \times n \times p$ where   $p\geq 4$ and where
the $u_i,v_i,w_i$ have rational entries. Assume moreover that $T$ satisfies the hypotheses of Theorem~\ref{th:0decomp}.
If we run Algorithm~\ref{algo:0decomp} on input $T$, it will output an (essentially unique) decomposition of rank $r$
in polynomial time in the bit model of computation. 
\end{theorem}
\begin{proof}
The first step of Algorithm~\ref{algo:0decomp} clearly runs in polynomial time. At step 2 we compute a commuting extension with the algorithm from~\cite{koi24}. This algorithm runs in in polynomial in the bit model of computation on matrices with rational entries (see~\cite[Theorems 11 and 13]{koi24}, and the 
remark following Theorem 11).
At step 3 we compute a decomposition $T'=\sum_{i=1}^r u'_i \otimes v'_i \otimes w_i$ with Algorithm~\ref{algo:jennrich}.
The vectors $u'_i,v'_i$ have rational entries by Proposition~\ref{prop:constr}.(ii), and it is an assumption of Theorem~\ref{th:bit0decomp} that the  $w_i$ have rational entries. By Remark~\ref{rem:jennrich}, 
the running time of Algorithm~\ref{algo:jennrich} will therefore be polynomial in the bit size of $T'$.
In particular the bit size of the $u'_i$ and $v'_i$ must be polynomially bounded in the bit size of $T'$,
which is polynomially bounded in the bit size of $T$. Finally, step 4 clearly runs in polynomial time.
\end{proof}
Note that in the above theorem the algorithm's complexity is measured (as is standard in the theory of algorithms) 
as a function of the bit size of the input ($T$). In particular, under the assumptions of Theorem~\ref{th:0decomp}
the output size must be polynomially bounded in the output size. 
This is not obvious from the expression $T=\sum_{i=1}^r u_i \otimes v_i \otimes w_i$.
We will present a more general version of this result in Section~\ref{sec:gendecomp}.

\subsection{Tensors with $p \geq 4$ slices: general case.} \label{sec:gendecomp}

Here we lift the hypothesis $w_{i1} \neq 0$ from Section~\ref{sec:0decomp}, and we relax the assumption that $T_1$ is 
invertible.
 The corresponding decomposition algorithm is a simple reduction to 
the case treated in Section~\ref{sec:0decomp}. 
Theorem~\ref{th:gendecomp} shows that this algorithm is correct under the same hypotheses as our general uniqueness theorem (Theorem~\ref{th:unique}).
\begin{algorithm} \label{algo:decomp}
\SetAlgoLined
\nonl \textbf{Input:} a tensor $T \in K^{n \times n \times p}$ and an integer $r \in [n,4n/3]$.\\ 
\nonl \textbf{Output:} a decomposition $T=\sum_{i=1}^r u_i \otimes v_i \otimes w_i$.\\
Replace the first slice $T_1$ of the input tensor $T$ by a random linear combination $\sum_{i=1}^p \lambda_i T_i$, call the resulting tensor $T'$.

Run Algorithm~\ref{algo:0decomp} on input $T'$. Let $T'=\sum_{i=1}^r u_i \otimes v_i \otimes w'_i$
be the decomposition produced by this algorithm.

 Return the decomposition $T=\sum_{i=1}^r  u_i \otimes v_i \otimes w_i$ where $w_{ik}=w'_{ik}$ for $k \geq 2$, and $w_{i1}=(w'_{i1}-\sum_{k=2}^p \lambda_k w'_{ik})/\lambda_1$.
\caption{Main algorithm for overcomplete tensor decomposition.}
\end{algorithm}

\begin{theorem} \label{th:gendecomp}
Let $T=\sum_{i=1}^r u_i \otimes v_i \otimes w_i$ be a tensor of format $n \times n \times p$ with $p\geq 4$. 
We assume that $T$ satisfies the following properties:
\begin{enumerate}
\item The $w_i$ are pairwise linearly independent.
\item
The span $\langle T_1,\ldots,T_p\rangle$ of the slices of $T$ contains an invertible matrix.
\item 
There is an invertible matrix $A$ in this span such that the tuple of matrices  $(A^{-1}T_2,\ldots,A^{-1}T_p)$ 
satisfies the hypothesis 
of the uniqueness theorem for commuting extensions (Theorem~\ref{th:uniquext}).
\end{enumerate}
Then $rank(T)=r$, and if we run the above algorithm on  input $T$ it will output an (essentially unique) decomposition of rank $r$.
\end{theorem}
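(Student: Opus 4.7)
The plan is to reduce the correctness of Algorithm~\ref{algo:decomp} to the correctness of Algorithm~\ref{algo:0decomp} (established in Theorem~\ref{th:0decomp}), exactly as the uniqueness proof in Section~\ref{sec:genunique} reduced the general uniqueness theorem to its special case in Section~\ref{sec:0unique}. The bulk of the work has already been done: Lemma~\ref{lem:combi} and Lemma~\ref{lem:unique} show that after replacing $T_1$ by a generic linear combination $T'_1 = \sum_k \lambda_k T_k$, the resulting tensor $T'$ generically satisfies the three hypotheses of Theorem~\ref{th:0decomp}. So the proof will proceed in three steps: show that a generic $\lambda$ makes the call to Algorithm~\ref{algo:0decomp} valid; show that the decomposition it returns is indeed a decomposition of $T'$ of the expected rank $r$; and finally show that step~3 of the algorithm correctly inverts the slice substitution.

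First I would argue (as at the very beginning of Section~\ref{sec:uniqueproof}) that $\rk(T)=r$ follows from Strassen's lower bound applied to any triple $(A, T_2, T_l)$ for which $(H_{1lm})$ holds, using Corollary~\ref{cor:add} to insert $A$ as an extra slice; essential uniqueness is Theorem~\ref{th:unique}. Next I would verify the hypotheses of Theorem~\ref{th:0decomp} for $T'$ and a generic $\lambda$: invertibility of $T'_1$ holds because the polynomial $\lambda \mapsto \det(\sum_k \lambda_k T_k)$ is not identically zero (hypothesis 2); the pairwise linear independence of the $w'_i$ together with $w'_{i1}\neq 0$ is exactly the content of Lemma~\ref{lem:combi}; and the hypothesis on the tuple $({T'_1}^{-1} T'_2, \ldots, {T'_1}^{-1} T'_p)$ required by Theorem~\ref{th:uniquext} is provided by Lemma~\ref{lem:unique}. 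Intersecting the finitely many Zariski-open conditions yields a single nonempty open set of admissible $\lambda$, so a generic choice works for all of them simultaneously. Theorem~\ref{th:0decomp} then guarantees that step~2 produces a valid decomposition $T' = \sum_i u_i\otimes v_i\otimes w'_i$ of rank $r$.

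Finally I would check step~3. Because $T$ and $T'$ agree on slices $2,\ldots,p$, Proposition~\ref{prop:slices} (or a direct expansion of~(\ref{eq:slices})) gives $w'_{ik}=w_{ik}$ for $k\geq 2$ and $w'_{i1}=\sum_{k=1}^p \lambda_k w_{ik}$, so the inversion formula $w_{i1}=(w'_{i1}-\sum_{k=2}^p \lambda_k w'_{ik})/\lambda_1$ used in the algorithm is correct whenever $\lambda_1\neq 0$ (which is itself a generic condition). The resulting $\sum_i u_i\otimes v_i\otimes w_i$ is therefore a rank-$r$ decomposition of $T$, and by the essential uniqueness already established it is \emph{the} decomposition of $T$. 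The only potential obstacle is making the genericity statement precise for arbitrary infinite $K$ and for finite $K$ by moving to an extension, but this is handled exactly as in the proofs of Lemmas~\ref{lem:combi} and~\ref{lem:unique}, which already deal with the same issue.
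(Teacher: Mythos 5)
Your proof is correct and follows essentially the same route as the paper's: reduce to Theorem~\ref{th:0decomp} via a generic replacement of the first slice, verify its three hypotheses using the invertibility polynomial, Lemma~\ref{lem:combi}, and Lemma~\ref{lem:unique}, and then invert the slice substitution in step~3. One small overclaim to fix: Lemma~\ref{lem:combi} gives only that the $w'_{i1}$ are generically nonzero; the pairwise linear independence of the $w'_i$ is not part of that lemma's statement and needs a separate (but easy) argument, e.g.\ lower semicontinuity of the rank of the $2\times p$ matrix with rows $w'_i(\lambda)$ and $w'_j(\lambda)$, observing the rank is $2$ at $\lambda=(1,0,\ldots,0)$.
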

\begin{proof}
It follows from Theorem~\ref{th:unique} that $\rk(T)=r$ and that the decomposition of $T$ is essentially unique.
 By~(\ref{eq:slices}) the slices of $T$ are given by the formula $T_k = U^T D_k V$.
Hence the first slice of~$T'$ is $T'_1 = U^T D_{\lambda} V$ where $D_{\lambda} = \sum_{k=1}^p \lambda_k D_k$.
This shows that $T'$ admits the decomposition $T'=\sum_{i=1}^r u_i \otimes v_i \otimes w'_i$ where $w'_{ik}=w_{ik}$ for $k \geq 2$, and $w'_{i1}=\sum_{k=1}^p \lambda_k w_{ik}$. Step 3 will therefore return a correct decomposition of $T$
if step 2 returns a correct decomposition of $T'$ and $\lambda_1 \neq 0$.

It remains  to show that $T'$ satisfies the 3 hypotheses of Theorem~\ref{th:0decomp} with high probability. We examine them in order in the remainder of the proof.
\begin{enumerate}
\item We want $w'_{i1}$ to be nonzero for all $i$, so $\lambda$ should the avoid the $r$ hyperplanes 
$\sum_{k=1}^p  w_{ik} \lambda_k = 0$. As in the proof of Proposition~\ref{prop:constr} these are proper hyperplanes: the hypothesis that the $w_i$ are pairwise linearly independent implies in particular that all these vectors are nonzero.
We also need the $w'_i$ to be linearly independent. Consider therefore a pair $(w'_i,w'_j)$ with $i \neq j$, which we
denote $w'_i(\lambda)$ and $w'_j(\lambda)$ to emphasize the dependency on $\lambda$. For $\lambda=(1,0,\cdots,0)$,
$w'_i(\lambda)=w_i$ and $w'_j(\lambda)=w_j$ are linearly independent by hypothesis. This must also be true for a generic
$\lambda$ by lower semicontinuity of matrix rank (the relevant matrix is of size $2 \times p$, with rows  $w'_i(\lambda)$ and $w'_j(\lambda)$). 
\item We also need $T'_1$ to be invertible. Since $T'_1(\lambda)=\sum_{i=1}^p \lambda_i T_i$, this follows in the same
way as above from lower semicontinuity of matrix rank. Namely, $\lambda$ should avoid the zero set of the degree $n$
polynomial $\det(\sum_{i=1}^p \lambda_i T_i)=0$. This polynomial is not identically 0 since $\det A \neq 0$.
\item Finally, the tuple of matrices $(A'_2(\lambda),\ldots,A'_p(\lambda))$ where $A'_i(\lambda)=T'_1(\lambda)^{-1} T_i$ should satisfy the hypothesis of the uniqueness theorem for commuting extensions (Theorem~\ref{th:uniquext}).
By hypothesis on $T$ this is true for $\lambda=(a_1,,\cdots,a_p)$, where $a_1,\ldots,a_p$ are the coefficients in the linear
combination $A=a_1T_1+\ldots+a_pT_p$.
This is therefore true for a generic $\lambda$, as we have already shown in the proof of Lemma~\ref{lem:unique}.
\end{enumerate}
We conclude that  $T'$ indeed satisfies the 3 hypotheses of Theorem~\ref{th:0decomp} for all $\lambda$ outside of the zero set
of a finite number of non identically zero polynomials.
\end{proof}

\begin{remark} \label{rem:random}
In Algorithm~\ref{algo:decomp} we have not specified from what probability distribution the random coefficients
 $\lambda_1,\ldots,\lambda_p$ are drawn. Suppose for instance that they are drawn uniformly at random from a finite set $S$. The proof of Theorem~\ref{th:gendecomp} reveals that these coefficients should avoid the zero sets of polynomially many polynomials of polynomially bounded degree. By the Schwartz-Zippel Lemma~\cite{Schw,zippel}, we can make the probability of error smaller
 than, say, 1/3 (or any other constant) by taking $S$ of polynomial size. We can take $S \subseteq K$ if $K$ is large enough.
 If not, we can take the elements of $S$ from a field extension.
\end{remark}
Now that we have proved the correctness of Algorithm~\ref{algo:decomp}, we show that it runs in polynomial time on tensors with a rational decomposition. For this analysis we will assume that the random coefficients $\lambda_1,\ldots,\lambda_n$ are
drawn from a finite set of integers of the form $S=\{0,1,2,\ldots,P(n,p)\}$ where $P$ is a polynomial.
As pointed out in Remark~\ref{rem:random}, this is enough to make the probability of error smaller than $1/3.$
\begin{theorem} \label{th:bitdecomp}
Let $T=\sum_{i=1}^r u_i \otimes v_i \otimes w_i$ be a tensor of format $n \times n \times p$ where   $p\geq 4$ and where
the $u_i,v_i,w_i$ have rational entries. Assume moreover that $T$ satisfies the hypotheses of Theorem~\ref{th:gendecomp}.
If we run Algorithm~\ref{algo:decomp} on input $T$, it will output an (essentially unique) decomposition of rank $r$
in (randomized) polynomial time in the bit model of computation. 
\end{theorem}
\begin{proof}
Steps 1 and 3 run in polynomial time due to to the assumption  $S=\{0,1,2,\ldots,P(n,p)\}$. Step 2 runs in polynomial time
by Theorem~\ref{th:bit0decomp}.
\end{proof}
We therefore reach a similar conclusion as in Section~\ref{sec:0decomp}.
Namely, if a tensor has a decomposition  $T=\sum_{i=1}^r u_i \otimes v_i \otimes w_i$ where the  $u_i,v_i,w_i$ have rational entries, and $T$ satisfies the hypotheses of Theorem~\ref{th:gendecomp}, then it must have an (essentially unique) decomposition
 $T=\sum_{i=1}^r u_i \otimes v_i \otimes w_i$ where the bit size of the vectors $u_i,v_i,w_i$ is polynomially bounded
 in the bit size of $T$.

\section{Genericity Theorem} \label{sec:generic}

In this section we prove (in Theorem~\ref{th:generic}) that the hypotheses for our uniqueness theorem and decomposition algorithm generically hold true up to $r=4n/3$. This result builds on the genericity theorem for commuting extensions (Theorem~\ref{th:genext}) established in~\cite{koi24}.

Recall that we first gave in Section~\ref{sec:0unique} and Section~\ref{sec:0decomp} versions of this uniqueness theorem
and decomposition algorithm that need somewhat stronger hypotheses than their final versions (which appear in Theorem~\ref{th:unique} and Theorem~\ref{th:gendecomp}).  Theorem~\ref{th:generic} shows that even 
these stronger hypotheses are satisfied generically, and this holds true also for the ``simple form'' of uniqueness presented in the introduction (Theorem~\ref{th:unique_simple}).
\begin{theorem}[Genericity Theorem for Commuting Extensions] \label{th:genext}
 Let $Z_i=R^{-1}D_iR$ where $R \in GL_r(K)$ and where  $D_1,\ldots,D_{p-1}$ are diagonal matrices of size $r$.
 Let $A_i$ be the top left block of size $n$ of $Z_i$.
 The two following properties hold for a generic choice of $R$ and of the $D_i$:
 \begin{itemize}
 \item[(i)] If $p-1 \geq 2$ and $r \leq 3n/2$, $\dim \Ima  [A_k,A_l] = 2(r-n)$ for all $k \neq l$.
 \item[(ii)] If $p-1 \geq 3$ and $r \leq 4n/3$, $\dim (\Ima  [A_k,A_l] + \Ima [A_k,A_m] ) = 3(r-n)$ for any triple of distinct matrices
 $A_k,A_l,A_m$.
 \end{itemize}
 \end{theorem}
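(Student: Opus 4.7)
The plan is to first bound the relevant dimensions from above using commutativity, then use lower semicontinuity of matrix rank to reduce the problem to exhibiting one example, and finally build that example by a block reduction to a small base case.

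First I would write each $Z_i$ in block form
\[
Z_i = \begin{pmatrix} A_i & B_i \\ C_i & E_i \end{pmatrix}
\]
with $B_i \in M_{n,r-n}(K)$. Since $R$ simultaneously diagonalizes the $Z_i$, we have $[Z_k, Z_l] = 0$; reading this on the top-left $n \times n$ block yields $[A_k, A_l] = B_l C_k - B_k C_l$, so $\Ima [A_k, A_l] \subseteq \Ima B_k + \Ima B_l$. Hence $\dim \Ima [A_k, A_l] \leq 2(r-n)$ and, likewise, $\dim(\Ima[A_k,A_l] + \Ima[A_k,A_m]) \leq 3(r-n)$. These are exactly the upper bounds of Theorem~\ref{th:strassen} and Theorem~\ref{th:strassenbis}, which hold for any commuting extension regardless of diagonalizability.

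Next, the condition $\rk(M) \geq d$ is Zariski open, being the non-vanishing of some $d \times d$ minor. Hence the locus in $GL_r(K) \times (K^r)^{p-1}$ on which the bounds above are attained with equality for every relevant choice of indices $k, l, m$ is Zariski open (as a finite intersection of such loci). It therefore suffices to produce a single tuple $(R, D_1, \ldots, D_{p-1})$ for which all the dimensions reach their upper bounds. To build such a tuple, set $s = r - n$ and reduce to the tightest cases $n = 2s$ (for (i)) and $n = 3s$ (for (ii)): given an example of size $(n_0, n_0 + s)$, one enlarges it to $(n_0 + m, n_0 + m + s)$ by pre-pending to each $Z_i$ an independent diagonal $m \times m$ block. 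The tuple remains simultaneously diagonalizable, and the commutator of the new top-left block gains only an $m \times m$ zero block, so its rank (and the relevant dimension for (ii)) is preserved. The tight cases then further reduce to $s = 1$ by placing $s$ independent copies of an $s = 1$ example along the diagonal and relabelling rows and columns so that the "top" parts of the copies come first (a permutation, which only replaces $R$ by $R\sigma$).

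The main obstacle is therefore the $s = 1$ base case: exhibiting two (resp.\ three) simultaneously diagonalizable matrices of size $3$ (resp.\ $4$) whose $2 \times 2$ (resp.\ $3 \times 3$) top-left blocks have a commutator of rank $2$ (resp.\ have pairwise commutators spanning a $3$-dimensional subspace in the prescribed sense). I would handle this by taking the $D_i$ with pairwise distinct diagonal entries and choosing $R$ generically (e.g.\ a Vandermonde matrix or a small perturbation of the identity), then computing the relevant commutators; checking the non-vanishing of a single appropriate minor suffices, and the computation is a small, entirely explicit verification. Any non-degenerate symbolic evaluation witnesses non-emptiness of the open locus, which is all that is needed.
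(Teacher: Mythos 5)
The paper itself does not prove Theorem~\ref{th:genext}; it is stated and attributed to the companion paper~\cite{koi24}, so there is no in-paper proof to compare against. Taken on its own terms, your scaffolding is sound. The elementary bound $\Ima[A_k,A_l]\subseteq\Ima B_k+\Ima B_l$, obtained from the top-left block of $[Z_k,Z_l]=0$, gives the upper bounds; the conditions of (i) and (ii) then become lower bounds on matrix ranks and so define a Zariski-open locus; and both reductions --- padding by an independent diagonal block (which appends a zero block to each commutator and keeps $r-n$ fixed), and stacking $s$ copies of a tight example followed by a suitable permutation (which multiplies the relevant dimensions by $s$) --- go through, since each construction visibly stays of the form $(R',D'_i)$ with $D'_i$ diagonal. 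One omission worth fixing: irreducibility of $GL_r(K)\times(K^r)^{p-1}$ is what lets you (a) pass from non-emptiness of the open locus to genericity, and (b) combine the conditions over all pairs or triples when $p-1$ exceeds $2$ (resp.\ $3$), since your base case supplies only $2$ (resp.\ $3$) diagonal matrices.

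The genuine gap is the base case itself, which is the crux of the whole argument and is merely asserted. It is not automatic: if the first $n$ rows of $R^{-1}$ coincide with the first $n$ rows of $I_r$, then $A_i=U^TD_iV$ is the upper-left $n\times n$ corner of $D_i$, hence diagonal, and every commutator $[A_k,A_l]$ vanishes identically no matter how the $D_i$ are chosen; so the witness $R$ must be chosen with some care, not merely ``generically.'' To complete the proof you need to exhibit a concrete $(R,D_1,D_2)$ with $r=3$, $n=2$ and verify $\det[A_1,A_2]\neq 0$, and a concrete $(R,D_1,D_2,D_3)$ with $r=4$, $n=3$ and verify that the $3\times 6$ matrix whose columns are those of $[A_k,A_l]$ and $[A_k,A_m]$ has rank $3$ for each of the three choices of distinguished index $k$, and then actually carry out those computations. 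Until that is done, the proposal is a plausible, well-structured program rather than a complete proof.
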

 
 \begin{remark} \label{rem:generic}
 With the same notations as in the above theorem, let $U^T$ be the matrix made of the first $n$ rows of $R^{-1}$ and let
 $V$ be made up of the first $n$ columns of $R$. Then $U^TV=I_n$, and the top left block of $Z_k$ is equal to $U^TD_kV$.
 Not coincidentally, this is the  expression given in Proposition~\ref{prop:slices} for the slices of a tensor of rank at most $r$.
 \end{remark}

\begin{theorem}[Genericity Theorem for Tensor Decomposition] \label{th:generic}
The two following properties hold true for generic vectors $u_1,\ldots,u_r \in K^n$, $v_1,\ldots,v_r \in K^n$, $w_1,\ldots,w_r \in K^p$.
\begin{enumerate}
\item If If $r \geq n$, the slices  $T_1,\ldots,T_p \in M_n(K)$ of $T=\sum_{i=1}^r u_i \otimes v_i \otimes w_i$ are all invertible.
\item Assume moreover that $p \geq 4$, $n \leq r \leq 4n/3$ and let $A_k = {T_1}^{-1} T_{k+1}$ for $k=1,\ldots,p-1$. 
Then all the hypothesis $(H_{klm})$ 
as defined in Section~\ref{sec:uniquext}  holds true, 
namely,  $\rk [A_k,A_l]=2(r-n)$ and  $dim(\Ima [A_k,A_l]+\Ima [A_k,A_m])=3(r-n)$ for any triple of distinct integers
$k,l,m \leq p-1$.
\end{enumerate}
\end{theorem}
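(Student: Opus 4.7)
The plan is to reduce both parts to the non-vanishing of finitely many polynomials in the entries of $(u_i,v_i,w_i)\in K^{r(2n+p)}$, and to exhibit one explicit parameter point at which every such polynomial is nonzero. Recall from Proposition~\ref{prop:slices} that the slices are $T_k=U^TD_kV$, with $U,V\in M_{r,n}(K)$ having rows $u_i,v_i$ and $D_k=\diag(w_{1k},\ldots,w_{rk})$, so every quantity appearing in the theorem is polynomial or rational in $(u,v,w)$.

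Part~1 asks that each $\det T_k$ be nonzero, a polynomial inequation. For part~2, first note that when $T_1$ is invertible the upper bounds $\rk[A_k,A_l]\le 2(r-n)$ and $\dim(\Ima[A_k,A_l]+\Ima[A_k,A_m])\le 3(r-n)$ hold automatically, by Strassen's bound (Theorem~\ref{th:strassen}) and Theorem~\ref{th:strassenbis} applied to the subtensors of $T$ with slices $(T_1,T_{k+1},T_{l+1})$ and $(T_1,T_{k+1},T_{l+1},T_{m+1})$; both have rank at most $r$. Only the matching lower bounds remain, and each translates, after multiplying by $\det(T_1)^2$ to clear the $T_1^{-1}$ appearing in $A_k=T_1^{-1}T_{k+1}$, into the non-vanishing of some explicit minor of a polynomial matrix in $(u,v,w)$. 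Both parts thus reduce to exhibiting one parameter point at which the $\det T_k$ together with the minors coming from the lower bounds are simultaneously nonzero.

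We produce this point by invoking Theorem~\ref{th:genext}, applicable since $p-1\ge 3$ and $r\le 4n/3$. It yields a non-empty Zariski-open subset $S_1$ of the parameter space $GL_r(K)\times(K^r)^{p-1}$ (the second factor parametrizing the diagonals of $D_1,\ldots,D_{p-1}$) on which the top-left $n\times n$ blocks $U^TD_kV$ of $R^{-1}D_kR$ satisfy the dimension equalities $\dim\Ima[U^TD_kV,U^TD_lV]=2(r-n)$ and $\dim(\Ima[U^TD_kV,U^TD_lV]+\Ima[U^TD_kV,U^TD_mV])=3(r-n)$ for all distinct $k,l,m$. The additional Zariski-open condition that every $U^TD_kV$ be invertible is itself non-empty: at $R=I_r$, $U^TD_kV=\diag((D_k)_{11},\ldots,(D_k)_{nn})$, invertible whenever these entries are nonzero. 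The product of the polynomials defining $S_1$ and this further condition is therefore not identically zero on the parameter space, and we may pick $(R^{*},D_1^{*},\ldots,D_{p-1}^{*})$ satisfying both.

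By Remark~\ref{rem:generic}, setting $U^{*T}$ to the first $n$ rows of $R^{*-1}$ and $V^{*}$ to the first $n$ columns of $R^{*}$ gives $U^{*T}V^{*}=I_n$, with top-left blocks $U^{*T}D_k^{*}V^{*}$. Define $u_i^{*},v_i^{*}$ to be the $i$-th rows of $U^{*},V^{*}$, and set $w_i^{*}:=(1,(D_1^{*})_{ii},\ldots,(D_{p-1}^{*})_{ii})\in K^p$. Then $T_1^{*}=U^{*T}V^{*}=I_n$ and $T_{k+1}^{*}=U^{*T}D_k^{*}V^{*}$ for $k=1,\ldots,p-1$, all invertible by our choice, while $A_k^{*}=T_1^{*-1}T_{k+1}^{*}=U^{*T}D_k^{*}V^{*}$ achieves the required dimensions because $(R^{*},D^{*})\in S_1$. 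Hence every polynomial we needed to be nonzero is nonzero at $(u^{*},v^{*},w^{*})$, which proves the theorem. The main technical idea is this reverse engineering of the point $(u^{*},v^{*},w^{*})$ from the commuting-extension side; the remainder is a routine polynomial non-vanishing argument.
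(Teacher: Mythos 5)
Your proposal is correct and follows essentially the same route as the paper's proof: reduce to non-vanishing of polynomials (lower semicontinuity of matrix rank), note the upper bounds are automatic via Theorems~\ref{th:strassen} and~\ref{th:strassenbis}, and build a witness point by invoking Theorem~\ref{th:genext} together with Remark~\ref{rem:generic}, taking $D_1=I_r$ so that $T_1^*=U^{*T}V^*=I_n$ and $T_{k+1}^*=U^{*T}D_k^*V^*$. The only cosmetic difference is that you construct a single witness point serving both parts by additionally intersecting $S_1$ with the open condition that every top-left block be invertible, whereas the paper proves part~1 with its own elementary witness ($u_{ki}=v_{ki}=\delta_{ik}$, $w_{ki}=1$) and part~2 with a separate one, letting irreducibility of the ambient affine space do the intersection implicitly.
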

\begin{proof}
As usual we start from the formula $T_k=U^TD_kV$ from Proposition~\ref{prop:slices}. The first property is
equivalent to the non-vanishing of all $\det(U^TD_kV)$, viewed as polynomials in the entries of the vectors $u_k,v_k,w_k$.
To show this it suffices to find a single point where none of these polynomials evaluates to~0.
We can take for instance $w_{ki}=1$ for all $i$ and $k$, $u_{ki}=v_{ki}=\delta_{ik}$ for $i \leq n$ and $u_{ki}=v_{ki}=0$ for 
$n<i\leq r$. Then we have $T_k=I_n$ for all $k$.

The upper bound $\rk [A_k,A_l] \leq 2(r-n)$ is Strassen's lower bound (Theorem~\ref{th:strassen}). Note that we do not need 
any genericity assumption for this (the invertibility of $T_1$ suffices). Theorem~\ref{th:strassenbis} yields the
upper bound $\dim (\Ima [A_k,A_l]+\Ima [A_k,A_m]) \leq 3(r-n)$. 
It remains to show that the inequalities
 $$\rk [A_k,A_l] \geq 2(r-n),\ \dim (\Ima [A_k,A_l]+\Ima [A_k,A_m]) \geq 3(r-n)$$ generically holds true.
 By lower semi-continuity of matrix rank it suffices to find a single tensor where these inequalities hold true, and
they will be true generically. Note  indeed that the left hand side of the second inequality can be written down as the rank of
a matrix having its first $n$ columns equal to $ [A_k,A_l]$, and its remaining $n$ columns equal to $[A_k,A_m]$.

We will show the existence of such a tensor with the additional property that $U^TV=I_n$ and $w_{k1}=1$ for all $k$.
Note that such a tensor has its first slice equal to $I_n$, so our two inequalities reduce to $\rk [T_k,T_l] \geq 2(r-n)$ and
$\dim (\Ima [T_k,T_l]+\Ima [T_k,T_m]) \geq 3(r-n)$. 
Pick an invertible matrix $R$ and $p-1$ diagonal matrices $D_2,\ldots,D_{p}$, all of size $r$, which satisfy the conclusion 
of Theorem~\ref{th:genext}. Let $D_1=I_r$ and let $U$ and $V$ be defined from $R$ as in Remark~\ref{rem:generic}.
The tensor with slices $T_k=U^T D_k V$ for $1 \leq k \leq p$ satisfies the required properties.
\end{proof}

\section{Worst-case complexity of commuting extensions} \label{sec:worst}

Computing the rank of order 3 tensors is NP-hard. 
Given the close connection between tensor rank and commuting extensions, it is natural to ask whether computing the
size of the smallest commuting extension of a tuple of matrices  is NP-hard. Recall that this problem comes in two flavors: one can consider arbitrary commuting extensions, or only those that are diagonalizable. We denote by {\sc COMM-DIAG-EXTENSION} the second version of this problem, and we show that it is indeed NP-hard using a recent construction by Shitov~\cite{shitov25}.
\begin{theorem} \label{th:nphard}
{\sc COMM-DIAG-EXTENSION} is NP-hard over any field.
\end{theorem}
This theorem follows immediately from Proposition~\ref{prop:reduc} and the fact that computing the tensor rank is NP-hard over any field~\cite{hastad90,schaefer16,shitov16}.
\begin{proposition} \label{prop:reduc}
Given a tensor $T \in K^{m \times n \times p}$ with slices $T_1,\ldots,T_p$, let $A_1,\ldots,A_p$ be the square matrices 
of size $m+n$ defined as:
\begin{equation} \label{eq:padslice}
A_i = \begin{pmatrix}
0_{m \times m} & T_i \\
0_{n \times m} & 0_{n \times n}
\end{pmatrix}.
\end{equation}
We have $s=\rk(T)+m+n$, where
$s$ denotes the size of the smallest commuting extension of $(A_1,\ldots,A_p)$ with diagonalizable matrices.
\end{proposition}
It is crucial here to consider only commuting extensions that are diagonalizable since, as  pointed out in~\cite{shitov25}, the matrices $A_i$ already commute ($A_iA_j=A_jA_i=0$).
The proof of Proposition~\ref{prop:reduc} relies on the next two lemmas.
\begin{lemma}[Shitov \cite{shitov25}, Example 3.3] \label{lem:shitov}
Given a tensor $T \in K^{m \times n \times p}$ with slices $T_1,\ldots,T_p$, define $T'$ as the tensor of format $(m+n)\times (m+n) \times (p+1)$ which has the identity matrix as first slice, 
and  the matrices $A_1,\ldots,A_p$ in~(\ref{eq:padslice}) as its remaining $p$ slices.
Then $\rk(T')=\rk(T)+m+n$.
\end{lemma}

\begin{lemma} \label{lem:1moreslice}
Let $T$ and $T'$ be as in the previous lemma, and let $r = \rk(T)$. The tensor $T'$ has a decomposition 
$$T'=\sum_{i=1}^{r+m+n} u'_i \otimes v'_i \otimes w'_i$$ with $w'_{i1} =1$ for all $i$.
\end{lemma}
\begin{proof}
Let $T''$ be the tensor of format $(m+n) \times (m+n) \times p$ with slices $A_1,\ldots,A_p$.
From any decomposition $T=\sum_{i=1}^r u_i \otimes v_i \otimes w_i$, we can obtain a decomposition 
$T''=\sum_{i=1}^r u'_i \otimes v'_i \otimes w_i$ where the vectors $u'_i,v'_i$ are obtained from $u_i,v_i$ by padding with 0's.
Let $w'_i \in K^{p+1}$ be the vector  obtained from $w_i$ by adding a 1 in first position.
The tensor $\sum_{i=1}^r u'_i \otimes v'_i \otimes w'_i$ has almost the same slices as $T'$. More precisely, the last $p$ slices are the same; but its first slice ($\sum_{i=1}^r u'_i \otimes v'_i$) might be different from the first slice of $T'$, which is equal to $I_{m+n}$. To fix this problem, we add a tensor of the form $\sum_{i=r+1}^{r+m+n} u'_i \otimes v'_i \otimes e_1$
where the $m+n$ matrices $u'_i \otimes v'_i$ are chosen so that $$\sum_{i=r+1}^{r+m+n} u'_i \otimes v'_i=I_{m+n}-\sum_{i=1}^r u'_i \otimes v'_i.$$
This corrects the first slice, and the other slices are unchanged. We therefore obtain the desired decomposition for $T'$.
\end{proof}
In view of Lemma~\ref{lem:shitov}, Proposition~\ref{prop:reduc} will be established if we can show that $s=\rk(T')$.
This follows from the characterization of tensor rank by commuting extensions in Theorem~\ref{th:constr} and Proposition~\ref{prop:constr}.
Indeed, one direction of Theorem~\ref{th:constr} shows that $\rk(T') \leq s$ (apply the theorem with $A$ equal to the first slice  of $T'$, i.e., $A=I_{m+n}$).
To obtain the converse inequality $s \leq \rk(T')$, we apply Proposition~\ref{prop:constr} to the decomposition of Lemma~\ref{lem:1moreslice}. Indeed, we can take again $A=I_{m+n}$ 
since the corresponding matrix  in~(\ref{eq:span}) has full rank: $D=\diag(w_{11},\ldots,w_{r+m+n,1})=I_{r+m+n}$.
This completes the proofs of Proposition~\ref{prop:reduc} and Theorem~\ref{th:nphard}.

Finally, we point out that the complexity of tensor rank for a constant number of slices (and the complexity of  {\sc COMM-DIAG-EXTENSION} for a constant number of matrices) seems to be unknown. For instance, is it NP-hard to compute the rank of an order~3 tensor 
of format $n \times n \times 3$, or $n \times n \times 4$?

\small

\section*{Acknowledgments} I would like to thank Yaroslav Shitov for discussions on Theorem~\ref{th:nphard}.
 The anonymous referees for SODA 2025 made several useful suggestions, in particular regarding 
the bibliographic references.


\end{document}